\documentclass[reqno,11pt]{amsart}
\usepackage{amsmath, latexsym, amsfonts, amssymb, amsthm, amscd}
\usepackage{graphics,epsf,psfrag}
\usepackage{graphics,epsf,psfrag}
\usepackage{stmaryrd}
\newcommand{\ext}{\mathrm{ext}}
\renewcommand{\bar}{\overline}

\setlength{\oddsidemargin}{5mm}
\setlength{\evensidemargin}{5mm}
\setlength{\textwidth}{150mm}
\setlength{\headheight}{0mm}
\setlength{\headsep}{12mm}
\setlength{\topmargin}{0mm}
\setlength{\textheight}{220mm}
\setcounter{secnumdepth}{2}
\setcounter{tocdepth}{1}
\newcommand{\lint}{\llbracket}
\newcommand{\rint}{\rrbracket}
\newcommand{\odd}{\mathrm{odd}}
\newcommand{\even}{\mathrm{even}}
\numberwithin{equation}{section}

\newtheorem{theorema}{Theorem}

\newtheorem{theorem}{Theorem}[section]
\newtheorem{lemma}[theorem]{Lemma}
\newtheorem{proposition}[theorem]{Proposition}

\newtheorem{rem}[theorem]{Remark}



\newcommand{\bad}{\mathrm{bad}}
\newcommand{\good}{\mathrm{good}}
\newcommand{\Diam}{\mathrm{Diam}}
\newcommand{\dd}{\mathrm{d}}
\newcommand{\ind}{\mathbf{1}}
\newcommand{\restrict}{\!\!\upharpoonright}
\newcommand{\Supp}{\mathrm{Supp}}
\newcommand{\larg}{\mathrm{large}}

\renewcommand{\tilde}{\widetilde}
\renewcommand{\hat}{\widehat}
\newcommand{\cc}{\complement}

\newcommand{\cX}{{\ensuremath{\mathcal X}} }
\newcommand{\cA}{{\ensuremath{\mathcal A}} }
\newcommand{\cB}{{\ensuremath{\mathcal B}} }

\newcommand{\cP}{{\ensuremath{\mathcal P}} }

\newcommand{\cH}{{\ensuremath{\mathcal H}} }
\newcommand{\cC}{{\ensuremath{\mathcal C}} }
\newcommand{\cN}{{\ensuremath{\mathcal N}} }
\newcommand{\cL}{{\ensuremath{\mathcal L}} }

\newcommand{\cZ}{{\ensuremath{\mathcal Z}} }

\newcommand{\bP}{{\ensuremath{\mathbf P}} }
\newcommand{\bQ}{{\ensuremath{\mathbf Q}} }
\newcommand{\bE}{{\ensuremath{\mathbf E}} }



\DeclareMathSymbol{\leqslant}{\mathalpha}{AMSa}{"36} 
\DeclareMathSymbol{\geqslant}{\mathalpha}{AMSa}{"3E} 
\DeclareMathSymbol{\eset}{\mathalpha}{AMSb}{"3F}     

\newcommand{\sumtwo}[2]{\sum_{\substack{#1 \\ #2}}} 
\newcommand{\limtwo}[2]{\lim_{\substack{#1 \\ #2}}}     
\newcommand{\prodtwo}[2]{\prod_{\substack{#1 \\ #2}}}     


\newcommand{\bbH}{{\ensuremath{\mathbb H}} }

\newcommand{\bbN}{{\ensuremath{\mathbb N}} }

\newcommand{\bbR}{{\ensuremath{\mathbb R}} }

\newcommand{\bbT}{{\ensuremath{\mathbb T}} }

\newcommand{\bbZ}{{\ensuremath{\mathbb Z}} }


\newcommand{\gb}{\beta}

\newcommand{\gep}{\varepsilon}       

\newcommand{\gG}{\Gamma}

\newcommand{\gO}{\Omega}
\newcommand{\gl}{\lambda}
\newcommand{\gL}{\Lambda}


\makeatletter
\def\captionfont@{\footnotesize}
\def\captionheadfont@{\scshape}

\long\def\@makecaption#1#2{%
  \vspace{2mm}
  \setbox\@tempboxa\vbox{\color@setgroup
    \advance\hsize-6pc\noindent
    \captionfont@\captionheadfont@#1\@xp\@ifnotempty\@xp
        {\@cdr#2\@nil}{.\captionfont@\upshape\enspace#2}%
    \unskip\kern-6pc\par
    \global\setbox\@ne\lastbox\color@endgroup}%
  \ifhbox\@ne 
    \setbox\@ne\hbox{\unhbox\@ne\unskip\unskip\unpenalty\unkern}%
  \fi
  \ifdim\wd\@tempboxa=\z@ 
    \setbox\@ne\hbox to\columnwidth{\hss\kern-6pc\box\@ne\hss}%
  \else 
    \setbox\@ne\vbox{\unvbox\@tempboxa\parskip\z@skip
        \noindent\unhbox\@ne\advance\hsize-6pc\par}%
\fi
  \ifnum\@tempcnta<64 
    \addvspace\abovecaptionskip
    \moveright 3pc\box\@ne
  \else 
    \moveright 3pc\box\@ne
    \nobreak
    \vskip\belowcaptionskip
  \fi
\relax
}
\makeatother
\def\writefig#1 #2 #3 {\rlap{\kern #1 truecm
\raise #2 truecm \hbox{#3}}}


\newcommand{\tf}{\textsc{f}}

\title[Wetting and layering for Solid-on-Solid I]{Wetting and layering for Solid-on-Solid I: \\
 Identification of the wetting point and critical behavior}

\author{Hubert Lacoin}
\address{
  IMPA, Institudo de Matem\'atica Pura e Aplicada, Estrada Dona Castorina 110
Rio de Janeiro, CEP-22460-320, Brasil. 
}

\begin{document}

\begin{abstract}
We provide a complete description of the low temperature wetting transition for the two dimensional Solid-On-Solid model. 
More precisely we study
the integer-valued field $(\phi(x))_{x\in \bbZ^2}$,
associated associated to the energy functional
$$V(\phi)=\gb \sum_{x\sim y}|\phi(x)-\phi(y)|-\sum_{x}\left( h\ind_{\{\phi(x)=0\}}-\infty\ind_{\{\phi(x)<0\}} \right).$$
It is known since the pioneering work Chalker \cite{cf:chal} 
of that for every $\gb$, there exists $h_{w}(\gb)>0$ delimiting a transition between a delocalized phase ($h<h_{w}(\gb)$)
where the proportion of points at level zero vanishes, and a localized phase ($h>h_{w}(\gb)$) where this proportion is positive.
We prove in the present paper that for $\gb$ sufficiently large we have
$$h_w(\gb)= \log \left(\frac{e^{4\gb}}{e^{4\gb}-1}\right).$$
Furthermore we provide a sharp asymptotic for the free energy at the vicinity of the critical point:
 we show that close to $h_w(\gb)$, the free energy is approximately piecewise affine and that the points of discontinuity for the derivative 
 of the affine approximation forms a geometric sequence accumulating on the right of $h_w(\gb)$.
 This asymptotic  behavior provides a strong evidence for the conjectured existence of countably many ``layering transitions'' 
 at the vicinity of the critical point,
 corresponding to jumps for the typical height of the field.
\\[10pt]
2010 \textit{Mathematics Subject Classification: 60K35, 60K37, 82B27, 82B44}
\\[10pt]
  \textit{Keywords: Random surface, Solid-on-Solid, Wetting, Layering transition, Critical behavior.}
\end{abstract}

\maketitle

\tableofcontents

\newpage
\section{Introduction}

\subsection{Motivation: the wetting problem for Ising interfaces}

Consider the Ising model at low temperature on a cube large cube $\lint 1, N\rint^3$ with minus  boundary condition 
 on the bottom face and plus on the other five faces. If the inverse temperature $\gb$ is sufficiently large, the only macroscopic interface that appears in the system is the one separating pluses and minuses appears at the vicinity of the bottom face.

\medskip

This interface wants to minimize its surface tension, and thus has an incentive to remain close to the bottom of the cube. By
doing so it also loses some degrees of freedom as it feels the constraint of the boundary. Due to this energy/entropy competition, 
the typical distance of the interface to the bottom face diverges when $N$ goes to infinity, a phenomenon known as entropic repulsion (rigorously proved in \cite{cf:FP}).

\medskip

We can take to problem one step further and add some positive external magnetic fields $h$ acting only on the layer of the cube which is adjacent to the bottom face. 
Heuristically, it is quite clear that a very large magnetic fields forces the interface to stick to the bottom face.
It is a more delicate issue, however,
to decide whether this occurs for a magnetic field with an arbitrary small intensity, 
or if we have a non-trivial transition in $h$ between a phase where entropic repulsion persists and a localized one.
This question, as well as the qualitative description of such a phase transition, has given rise to a rich literature, and we refer to the recent survey
by Ioffe and Velenik \cite{cf:IV} for a comprehensive bibliography on the subject.

\medskip

While the analogous problem in dimension $2$ can be solved explicitly (there is a closed expression for the critical value of the magnetic field \cite{cf:Ab} 
and the behavior of the interface in the localized and the delocalized phase seems to be well understood \cite[Section 2.1.2]{cf:IV}) 
the understanding of the problem in dimension $3$ (and higher) is far from complete  \cite[Section 2.1.3]{cf:IV}. 

\medskip

One reason for being so is that when $d=3$, the behavior of interfaces between bulk phases (far from the boundary) is rigorously understood only
at very low temperature.
In that regime, it is known that
Ising interfaces are rigid, in the sense that they are microscopically  flat apart from local perturbations
 \cite{cf:Dob}.
The existence of the wetting transition in this regime was established in \cite{cf:FP} together with an explicit bound for 
the value of the magnetic field where the wetting 
transition should occur. We let $h_w(\gb)$ denote the critical value of $h$ at which this transition occurs (a rigorous characterization of this value is given in \eqref{defhw}).

\medskip

However, the predicted complete low temperature picture goes beyond the existence of a wetting transition: When the magnetic fields varies,
the system should undergo a (countably) infinite sequence of phase transitions corresponding to changes of the height for the interface. The corresponding critical points 
should accumulate towards the left of  $h_{w}(\gb)$ (see e.g. \cite{cf:ADMM, cf:Bas} for evidences of this phenomenon). 
Making this rigorous is a very challenging task and for this reason some attention has been brought to a simplified version of the problem
called Solid-On-Solid (SOS).

\subsection{The wetting problem for the Solid-on-Solid model}

The SOS model is obtained  by neglecting  interaction with microscopic clusters and assuming that the interface is a graph of a function
(see \eqref{defSOS} below for a formal definition).
While these simplification should not alter the qualitative behavior of the interface, it makes the model much more tractable:
Rigorous results have been proved concerning the existence of a roughening transition \cite{cf:FS} (this transition is also 
conjectured to occur for the three dimensional Ising Model), 
and a very detailed picture of the entropic repulsion phenomenon 
at low temperature has been given in \cite{cf:PMT}.

\medskip

The problem of wetting
for SOS has been considered in the literature with a particular focus on the case of dimension $1+2$ 
(the case corresponding to the three dimensional Ising model). 
The existence of a transition was proved in proved by Chalker \cite{cf:chal} with bounds on $h_w(\gb)$ valid at all temperature which 
implies in particular that entropic repulsion persists with the presence of a small magnetic fields.
\medskip

In \cite{cf:ADM}, a perturbational approach based on Cluster Expansion techniques provides some evidence for the infinite sequence layering transition.
More precisely, it is proved that for every $n$, at sufficiently low temperature (which depends on $n$), 
there exists an interval of values of $h$ for which the field localizes at height
$n$.
Related results concerning infinite volume Gibbs states as well as regularity of the free energy in the corresponding regions are also obtained.
However, as the requirement on the temperature in the main results of \cite{cf:ADM} depends on the value of $n$, it does not imply the occurrence of 
an infinite sequence of phase transition, nor does 
it allow to identify the critical wetting line $h_w(\gb)$.

\medskip

In the present paper, we provide a deeper understanding of  the wetting transition at low temperature, with the following results:
\begin{itemize}
 \item [(A)] We identify the critical wetting line, proving that for large values of $\gb$
 $$h_w(\gb)= \log \left(\frac{e^{4\gb}}{e^{4\gb}-1}\right).$$
 It corresponds to the lower bound derived by Chalker \cite[Equation (4a)]{cf:chal}. 
\item [(B)] We prove that close to the critical point, the free energy is asymptotically equivalent to a 
function which is affine on intervals delimited by a geometric progression which accumulates on the right of $h_w(\gb)$.
\end{itemize}
Point (B) is very much related to the conjectured existence of sequence of layering transition that should accumulate on the right of $h_{w}(\gb)$ (discussed e.g.\ in 
\cite{cf:ADMM, cf:ADM,cf:IV}). Let us briefly explain this connection:
Our asymptotic for the free energy is obtained by trying to find an optimal strategy to benefit from the pinning without paying too much entropic cost.
The strategies we consider are of the following type: The interface stabilize at an height $n$ above the bottom face, and visit the interaction zone
mostly with some rare spikes pointing 
downward. We obtain a sharp asymptotic by taking the supremum over over all such strategies (that is over the localization height $n$) 
as evidenced by Equation \eqref{uds} .
Each affine part in the asymptotic equivalent displayed in \eqref{uds} corresponds to a given value of $n$, and
the meeting points between two affine parts correspond thus to a transition between two localization strategies.

\medskip

We believe, in agreement with existing conjectures in the literature \cite{cf:ADMM, cf:ADM, cf:Bas}
that these angular point do not appear only on the asymptotic equivalent but also on the free energy curve. Thus they should  correspond to  discontinuity point for the asymptotic contact fraction, which is the signature of first-order phase transition. 
Between these layering transition, the free energy should be analytic in both parameters $h$ and $\gb$.
This aspect of the problem is to be developed in a second paper \cite{cf:HL}.

\section{Model and results}

\subsection{The Solid on Solid Model on $\bbZ^d$}

While we are mostly interested in the two dimensional case, the proof presented in this paper can be adapted for arbitrary dimension (see the discussion in Section \ref{possibext}). Hence,
we introduce the problem in the more general framework.
Consider $\gL$ a finite subset of  $\bbZ^d$ (equipped with its usual lattice structure) and let $\partial \gL$ denote its external boundary 
$$\partial \gL:= \{ x \in \bbZ^d \setminus \gL \ : \ \exists y \in \gL, \ x \sim y \}.$$
Given $\phi\in \gO_{\gL}:= \bbZ^\gL$ and $n\in \bbZ$ we define the Hamiltonian for SOS with boundary condition $n$ as,
\begin{equation}\label{defhamil}
\cH^{n}_\gL(\phi):= \frac{1}{2}\sumtwo{x, y \in \gL}{ x \sim y} |\phi(x)-\phi(y)|
+\sumtwo{x\in \gL, y\in \partial\gL}{x\sim y} |\phi(x)-n|.
\end{equation}
Given $\gb>0$, we define
the SOS measure with boundary condition $n$, $\bP^{n}_{\gL,\gb}$ on $\gO_{\gL}$ by
\begin{equation}\label{defSOS}
\bP^{n}_{\gL,\gb}(\phi):= \frac{1}{\mathcal Z_{\gL,\gb}}e^{-\gb\cH^{n}_\gL(\phi)} \quad \text{where} \quad
\mathcal Z_{\gL,\gb}:=\sum_{\phi\in \gO_{\gL}}  e^{-\gb\cH^n_\gL(\phi)}.
\end{equation}
Note that, by translation invariance, $\mathcal Z_{\gL,\gb}$ does not depend on $n$.
For readability, we drop the superscript $n$ in the notation in the special case $n=0$.

\medskip

\noindent Observe that if $\gL^{(1)}$ and $\gL^{(2)}$ are disjoint  we 
 have 
 \begin{equation}\label{hicup}
 \cH_{\gL^{(1)}\cup \gL^{(2)}}(\phi)\le  \cH_{\gL^{(1)}}(\phi)+\cH_{\gL^{(2)}}(\phi)
 \end{equation}
which yields immediately 
\begin{equation}\label{hicup2}
 \mathcal Z_{\gL^{(1)}\cup \gL^{(2)},\gb}\ge \cZ_{\gL^{(1)},\gb}\cZ_{\gL^{(2)},\gb}.
\end{equation}
This property implies (see e.g.\ \cite[Section 3.2]{cf:Vel} for a proof in the case of the Ising model) 
the existence of the following limit
\begin{equation}
\limtwo{|\gL|\to \infty}{|\partial\gL|/|\gL|\to 0} \frac{1}{|\gL|}\log  \cZ_{\gL,\gb}=\tf(\gb),
\end{equation}
taken over any sequence of finite sets $(\gL_N)_{N\ge 1}$ such that the ratio between the cardinality of 
$\gL_N$ and that of its boundary $\partial \gL_N$ vanishes. We refer to $\tf(\gb)$ as the free energy.

\medskip

To clarify notation, in the remainder of the paper, we consider the limit along the sequence 
$\gL_N:=\lint 1, N\rint^2$ (using the notation $\lint a,b \rint=[a,b]\cap \bbZ$ ). We write $\cZ_{N,\gb}$ for $\cZ_{\gL_N,\gb}$ and adopt a similar notation for the  other quantities.

\medskip

When $\gb$ is sufficiently large, it has been shown in \cite{cf:BM} that there exists a unique infinite volume Gibbs state corresponding 
to zero boundary condition. To state the result in full, we need to introduce some classic terminology.

\medskip

We say that a function $f: \ \gO_{\bbZ^d}:=(\bbZ)^{\bbZ^d}\to \bbR$ is local of there exists $(x_1,\dots,x_k)$  and $\tilde f:  \bbZ^{k}\to \bbR$ 
such that $f(\phi)=\tilde f(\phi(x_1),\dots,\phi(x_k)).$ The minimal  choice for the set of indices $\{x_1,\dots, x_k\}$ is called the support of $f$ ($\Supp(f)$).
With some abuse of notation, whenever $\gL$ contains the support of $f$,  we extend $f$ to $\gO_{\gL}$ in the obvious way.
An event is called local if its indicator function is a local function.

\medskip

\noindent For $A$ and $B$ two finite subsets of $\bbZ^2$, we define 
\begin{equation}\label{disset}
d(A,B):=\min_{x\in A, y\in B} |x-y|,
\end{equation}
where $|\ \cdot \ |$ denote the $\ell_1$ distance. In \cite{cf:BM}, is has been shown that  $\bP_{\gL,\gb}$ converges exponentially fast to some infinite volume Gibbs 
measure in the low temperature regime. The following result is a consequence of the proof of the main theorem in \cite{cf:BM} (it can also be deduced as a consequence of 
the first proposition in \cite{cf:KP}).
\begin{theorema}[From \cite{cf:BM}]\label{infinitevol}
For any $d\ge 2$, there exists $\gb_0(d)>2$ such that 
for any $\gb>\gb_0$, 
there exists a measure $\bP_{\gb}$  $\gO_{\bbZ^d}$ such that that for every local function 
$f$, and every $\gL$ which contains the support of $f$. 
\begin{equation}\label{decayz}
| \bE_{\gL,\gb}[f(\phi)]-\bE_{\\gb}[f(\phi)] |\le 4  |\Supp(f)|\|f\|_{\infty} e^{- d(\partial \gL, \Supp(f))}.
\end{equation}
\end{theorema}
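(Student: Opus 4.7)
The plan is to use the standard low-temperature contour (level-line) representation combined with a Peierls-type argument, following the scheme of \cite{cf:BM}. First one rewrites the Gibbs weight: for each integer level $k\in\bbZ$, the boundary of $\{x\in\gL: \phi(x)\ge k\}$ forms a family of closed self-avoiding loops $\gamma$ on the dual lattice, and one checks that $\cH_\gL^0(\phi)=\sum_{k\in\bbZ}|\gamma_k(\phi)|$, where $|\cdot|$ denotes total length (the $0$ boundary condition ensures that for $|k|$ large the family is empty). The measure $\bP_{\gL,\gb}$ thereby becomes a polymer-like model on nested families of loops, each carrying weight $e^{-\gb|\gamma|}$, and the infinite-volume statement becomes one about exponential decay of long contours.

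The central analytic input is a Peierls bound: the number of self-avoiding loops of length $\ell$ through a fixed vertex is at most $(2d-1)^\ell$, so choosing $\gb_0(d)$ large enough that $\gb-\log(2d-1)\ge 1$ guarantees
$$\bP_{\gL,\gb}\bigl(\exists k\in\bbZ\setminus\{0\},\ \exists\gamma\text{ level-}k\text{ loop through or around } x,\ |\gamma|\ge L\bigr)\ \le\ C\,e^{-L}$$
for every $x\in\gL$ and every $L\ge 1$, uniformly in $\gL$. The main obstacle, specific to SOS and absent in the Ising analog, is that the level index $k$ ranges over all of $\bbZ$, so a naive union bound over levels diverges. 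This is handled via the observation that a level-$k$ loop with $|k|\ge 2$ must enclose, at each of the intermediate levels $1,2,\dots,|k|-1$, at least one loop of its own, producing a geometric suppression in $|k|$ once $\gb$ is large. This nested bookkeeping is essentially what is carried out in the main theorem of \cite{cf:BM}; once it is granted the remaining work is cosmetic.

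The conclusion \eqref{decayz} is then obtained by coupling. For $\gL\subset\gL'$, the measures $\bP_{\gL,\gb}$ and $\bP_{\gL',\gb}$ can be realized jointly through their contour families so that they coincide on $\Supp(f)$ unless some contour surrounds a vertex $x\in\Supp(f)$ while intersecting the annulus $\gL'\setminus\gL$; any such contour necessarily has length at least $d(\partial\gL,\Supp(f))$. Summing the estimate of the previous paragraph over the $|\Supp(f)|$ candidate vertices yields that the coupling fails with probability at most $|\Supp(f)|\,e^{-d(\partial\gL,\Supp(f))}$, up to a constant absorbed into $\gb_0$. Passing $\gL'\to\bbZ^d$ produces the infinite-volume limit $\bP_\gb$, and combining the coupling bound with the elementary inequality $|f-\bE_\gb[f]|\le 2\|f\|_\infty$ together with the geometric summation in $\ell\ge d(\partial\gL,\Supp(f))$ yields \eqref{decayz} with the stated prefactor $4$.
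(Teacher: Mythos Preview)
The paper does not give its own proof of this statement: Theorem~A is quoted from \cite{cf:BM}, with the remark that it can alternatively be deduced from the first proposition of \cite{cf:KP}. So there is no in-paper argument to compare against; I can only assess your sketch on its own terms.

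Your outline is in the right spirit but has two genuine gaps. First, the level-line description as ``closed self-avoiding loops on the dual lattice'' with the connectivity bound $(2d-1)^\ell$ is a $d=2$ picture. For $d\ge 3$ the boundaries of $\{\phi\ge k\}$ are $(d{-}1)$-dimensional hypersurfaces, and the combinatorics are different; indeed the paper itself (Section~\ref{possibext}) notes that in higher dimension one abandons contours and works instead with connected clusters of edges of nonzero gradient, as \cite{cf:BM} does. Your Peierls estimate would need to be reformulated accordingly.

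Second, and more seriously, the coupling step is asserted rather than proved. You write that $\bP_{\gL,\gb}$ and $\bP_{\gL',\gb}$ ``can be realized jointly through their contour families so that they coincide on $\Supp(f)$ unless some contour reaches from $\Supp(f)$ to $\gL'\setminus\gL$''. But the contour law is not a product measure: it is a product measure \emph{conditioned} on pairwise compatibility (Lemma~\ref{restrict}), and the conditioning events in $\gL$ and in $\gL'$ are different. Producing a coupling with the property you claim is essentially equivalent to proving the exponential decay you are trying to establish; it is not a free move. The route actually taken in \cite{cf:BM} (and in \cite{cf:KP}) is cluster expansion: one shows that $\log\cZ_{\gL,\gb}$ and $\bE_{\gL,\gb}[f]$ admit absolutely convergent polymer expansions, and then $\bE_{\gL,\gb}[f]-\bE_{\gL',\gb}[f]$ is a sum over clusters touching both $\Supp(f)$ and $\partial\gL$, whose total weight is bounded by $|\Supp(f)|\,e^{-d(\partial\gL,\Supp(f))}$. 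Your ``nested bookkeeping'' paragraph is exactly the step needed to verify the Koteck\'y--Preiss convergence criterion, so you have the right ingredient; what is missing is invoking the expansion itself rather than an unconstructed coupling.
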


\subsection{The wetting problem for the SOS model}

For $\phi \in \gO_{\gL}$ and $A\subset \bbZ$ (or $\bbR$), we set 
$$\phi^{-1}(A):=\{ x\in \gL \ : \ \phi(x)\in A\}.$$
We sometimes omit the brackets, and write $\phi^{-1}A$, to make the notation lighter.
Given $h\in \bbR$ we consider $\bP^{h}_{\gL,\gb}$ which is a modification of $\bP_{\gL,\gb}$ 
where the interface $\phi$ is constrained to remain positive 
and gets an energetic reward $h$ for each contact with $0$.
We use the notation $\gO^+_{\gL}:=(\bbZ_+)^{\gL}$ where $\bbZ_+:=\bbZ\cap[0,\infty)$ and define
\begin{equation}\label{def}
\bP^{h}_{\gL,\gb}(\phi):= \frac{1}{\mathcal Z^h_{\gL,\gb}}e^{-\gb\cH_\gL(\phi)+ h|\phi^{-1}\{0\}|}
 \quad \text{where } \mathcal Z^h_{\gL,\gb}:=\sum_{\phi\in \gO^+_{\gL}}  e^{-\gb\cH_\gL(\phi)+h|\phi^{-1}\{0\}|}.
\end{equation}
We also consider $\bP^{n,h}_{\gL,\gb}$ the variant with boundary condition $n\in \bbN$, where $\cH_{\gL}$ is replaced by $\cH^n_{\gL}$.
We want to study  the localization transition in $h$ for $\bP^{h}_{\gL,\gb}$ which appears in the limit when $\gL\to \infty$.
For the same reasons as for the SOS partition function (recall \eqref{hicup2}), 
$\log  \cZ^h_{\gL,\gb}$ is superadditive for disjoint union and thus the free energy
\begin{equation}
\tf(\gb,h):=\lim_{N\to \infty} \frac{1}{N^2}\log \mathcal Z^{h}_{N,\gb},
\end{equation}
is well defined.
Furthermore 
the function $h\mapsto \tf(\gb,h)$ is non-decreasing and convex in $h$ (as a limit of non-decreasing convex function).
At points where $\tf(\gb,h)$ is differentiable, the  convexity allows to exchange the positions of limit and derivative, thus the derivative in $h$ corresponds to the 
asymptotic contact fraction
\begin{equation}
 \partial_h \tf(\gb,h)=\lim_{N\to \infty} \frac{1}{N^2} \bE^h_{\gL,\gb}[|\phi^{-1}(0)|].
\end{equation}
It was established in \cite{cf:chal} (in different but equivalent terms) that the critical value $h_w(\gb)$ separating the localized phase  ($\partial_h \tf(\gb,h)>0$) from the delocalized one  ($\partial_h \tf(\gb,h)=0$) is given by 
\begin{equation}\begin{split}\label{defhw}
 h_w(\gb)&:= \sup\{ h\in \bbR : \tf(\gb,h)= \tf(\gb)\}\\
 &=\inf\{h \in \bbR \ :  \tf(\gb,h)>0\}.
\end{split}\end{equation}
Additionally, it was proved that for every $\gb>0$
\begin{equation}\label{chalbound}
 \log \left(\frac{e^{4\gb}}{e^{4\gb}-1}\right)\le h_w(\gb)\le \log \left(\frac{16(e^\gb+1)}{e^\gb-1} \right),
\end{equation}
showing in particular that $h_w(\gb)>0$.
The aim of this paper is to determine the value of $h_w(\gb)$,
and to identify some properties of the system at the vicinity of the critical point:
In particular we are interested in identifying the asymptotic behavior of $\tf(\gb,h_w(\gb)+u)$ for small positive values of $u$.

\subsection{Main result}
We focus now on the case $d=2$.
To state our main result, we need to introduce two quantities related to asymptotic probability for observing certain types of ``peak'' in $\phi$.
We let ${\bf 0}$ and ${\bf 1}$ denote the vertices $(0,0)$ and $(1,0)$ respectively and define, for $\gb>\gb_0$
\begin{equation}\label{singledouble}
\begin{split}
\alpha_1(\gb)&:= \lim_{n\to \infty} e^{4\gb}  \bP_{\gb} \left[ \phi({\bf 0})\ge  n\right],\\
\alpha_2(\gb)&:= \lim_{n\to \infty} e^{6\gb}  \bP_{\gb} \left[  \min( \phi({\bf 0}), \phi({\bf 1})) \ge n\right].
\end{split}
\end{equation}
The existence and finiteness of those limits is part of Proposition \ref{sendingout} (the existence of $\alpha_1$ 
is given in \cite[Lemma 2.4]{cf:PLMST}).
\noindent For better readability of the formulas we introduce the quantities
 $$\bar\tf(\gb,u)=\tf\left(\gb,\log \left(\frac{e^{4\gb}}{e^{4\gb}-1}\right)+u\right)-\tf(\gb)$$
and
$$J=J(\gb):= e^{-2\gb}.$$

\begin{theorem}\label{main}
If $\gb>\gb_0$ (of Theorem \ref{infinitevol}), we have 
$$h_w(\gb)=\log\left(\frac{e^{4\gb}}{e^{4\gb}-1}\right).$$
Furthermore
\begin{equation}\label{lequivdelamor}
\bar \tf(\gb,u)\stackrel{u\to 0+}{\sim} F(\gb,u)
\end{equation}
where  
\begin{equation}\label{uds}
F(\gb,u):=\max_{n\in \bbZ_{+}}\left( \alpha_1 J^{2n}u-  \frac{2 \alpha_2(J^3-J^4)}{1-J^3} J^{3n} \right).
\end{equation}
\end{theorem}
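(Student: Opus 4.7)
Chalker's inequality \eqref{chalbound} already yields $h_w(\gb) \ge h_w^* := \log(e^{4\gb}/(e^{4\gb}-1))$, and once the asymptotic \eqref{lequivdelamor} is established, the matching inequality $h_w(\gb) \le h_w^*$ follows automatically because $F(\gb,u) > 0$ for every $u > 0$: picking $n$ large enough that $\alpha_1 J^{2n} u$ dominates the subtracted $J^{3n}$ term makes $F(\gb,u)$ strictly positive, hence $\bar\tf(\gb,u) > 0$ for small $u$, forcing $h_w(\gb) \le h_w^*$. Both statements of the theorem therefore reduce to proving \eqref{lequivdelamor}.

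\textbf{Lower bound.} Fix $n \in \bbZ_+$ (eventually chosen with $J^n \sim u$). Force the interface to sit near height $n$ by restricting $\cZ^h_{N,\gb}$ to configurations matching boundary condition $n$ on the boundary of a slightly smaller sub-box $\gL_{N-C}$, interpolated back down to $0$ on $\partial \gL_N$ through a thin strip. The interpolation costs at most $e^{O(nN)}$, negligible at the $N^{-2}$ scale whenever $n = o(N)$. After the translation $\psi = \phi - n$ this reduces the problem to
\begin{equation*}
\bar\tf(\gb,u) \ge \lim_{N \to \infty} \frac{1}{N^2} \log \bE_{N,\gb}\bigl[\mathbf{1}_{\phi \ge -n} \, e^{h\,|\phi^{-1}(-n)|}\bigr],
\end{equation*}
where the expectation is under the free zero-boundary SOS measure.

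\textbf{Expansion and cancellation.} Factor the integrand as
\begin{equation*}
\mathbf{1}_{\phi \ge -n}\, e^{h|\phi^{-1}(-n)|} = \prod_{x \in \gL_N}\bigl(\mathbf{1}_{\phi(x) > -n} + e^h \mathbf{1}_{\phi(x) = -n}\bigr)
\end{equation*}
and expand over subsets $S$ of ``pinned'' sites (at level $-n$) and subsets $T$ of sites violating positivity (produced by writing $\mathbf{1}_{\phi(x) > -n} = 1 - \mathbf{1}_{\phi(x) \le -n}$). The $\phi \leftrightarrow -\phi$ symmetry of $\bP_\gb$ combined with \eqref{singledouble} gives $\bP_\gb[\phi(\mathbf{0}) = -n] = \alpha_1(1-J^2) J^{2n}(1+o(1))$ and $\bP_\gb[\phi(\mathbf{0}) = \phi(\mathbf{1}) = -n] = c(\gb)\,J^{3n}(1+o(1))$ for an explicit constant $c(\gb)$ linear in $\alpha_2$. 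The value $h_w^*$ arises \emph{precisely} as the point at which the per-site pinning reward $(e^{h_w^*}-1)\bP_\gb[\phi(\mathbf{0}) = -n]$ exactly cancels the per-site entropic cost $\bP_\gb[\phi(\mathbf{0}) \le -n-1]$ incurred by the positivity constraint; this is the algebraic origin of the formula $e^{h_w^*}(1-J^2) = 1$. At $h = h_w^* + u$, the leading residual is linear in $u$ with density $\alpha_1 J^{2n} u$, while the next-order correction comes from pairs of adjacent pinned sites (sums over the geometric depth of such a double spike yielding the explicit constant $\tfrac{2\alpha_2(J^3-J^4)}{1-J^3}$ after an inclusion-exclusion against positivity). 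Higher-order clusters are smaller by an additional factor $J^n$, controlled uniformly by the exponential decay of correlations from Theorem \ref{infinitevol}. Summing these contributions gives, for each fixed $n$,
\begin{equation*}
\bar\tf(\gb,u) \ge \alpha_1 J^{2n} u - \frac{2\alpha_2(J^3-J^4)}{1-J^3}\,J^{3n} + o(J^{3n}),
\end{equation*}
and taking the supremum over $n$ yields $\bar\tf(\gb,u) \ge F(\gb,u)(1 - o(1))$.

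\textbf{Upper bound.} For the matching upper bound, one must show that no other strategy beats the family of fixed-height strategies. Classify each configuration by a coarse-grained localization height $n(\phi)$, defined via level-set contours of a Peierls-type decomposition, and bound $\cZ^h_{N,\gb} \le \sum_{n \ge 0} \cZ^{h,(n)}_{N,\gb}$, treating each height $n$ by the same expansion as above. In my view, the main technical obstacle of the argument lies here: the splitting must be performed sharply enough that configurations with ``mixed'' heights contribute only to lower-order error, so that the matching upper bound recovers the $\alpha_2 J^{3n}$ coefficient and not merely the leading $\alpha_1 J^{2n} u$ term. Again, the exponential decay of correlations of Theorem \ref{infinitevol} is the key tool, allowing distant regions to be treated as essentially independent and reducing the analysis to local computations near the localization height. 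Combined with the lower bound this gives $\bar\tf(\gb,u) \sim F(\gb,u)$ as $u \to 0+$, completing the proof.
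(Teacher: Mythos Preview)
Your overall strategy---localize at height $n$, expand, optimize over $n$---matches the paper's, and you correctly identify the algebraic origin of $h_w^*$ as the cancellation $(e^{h_w^*}-1)(1-J^2)=J^2$. But both halves of your argument have genuine gaps.

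\textbf{Lower bound.} Your product expansion is a different route from the paper's. The paper first removes the hard positivity constraint by the representation
\[
\cZ^h_{N,\gb}=\bE_{N,\gb}\bigl[e^{u|\phi^{-1}(\bbZ_-)|-\bar H(\phi^{-1}(\bbZ_-))}\bigr],
\]
where $\bar H$ is an explicit non-negative penalty computed cluster-by-cluster (Lemma~\ref{lehagga}). After shifting to height $n$ this becomes an expectation in $\phi^{-1}[n,\infty)$. The crucial point, spelled out in Section~\ref{nozigo}, is that applying Jensen directly here gives the \emph{wrong} constant in front of $J^{3n}$ (namely $2\alpha_2\log\tfrac{1-J^4}{1-J^3}$ instead of $\tfrac{2\alpha_2(J^3-J^4)}{1-J^3}$); the fix is a second conditioning on an intermediate level set $\phi^{-1}[n-k,\infty)$ with $k\sim n/7$, which repackages the pair contribution into the function $G^{k,u}$ and then Jensen becomes sharp enough. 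Your Mayer-type expansion in $S,T$ might sidestep Jensen altogether, but you simply assert that ``inclusion-exclusion against positivity'' produces $\tfrac{2\alpha_2(J^3-J^4)}{1-J^3}$; this pair computation mixes the cases (both pinned, both violating, one of each) and is not obviously that constant. More seriously, to get a \emph{lower} bound from your expansion you need to control the remainder (triple and higher clusters) from below, and ``exponential decay of correlations'' does not by itself give a signed bound on an alternating series.

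\textbf{Upper bound.} This is where your proposal is not a proof. ``Classify each configuration by a coarse-grained localization height $n(\phi)$'' is a restatement of what must be shown, not a method. The paper's upper bound (Sections~\ref{upeb}--\ref{sub}) is the technical core: first a cluster estimate (Lemma~\ref{uppg}) showing that for suitable $k$ every connected component of size $\ge 2$ contributes negatively to $G^{k,u}$, giving the rough $u^3$ bound; then a spatial decomposition of $\gL_N$ into cells of side $M\sim u^{-1/2}$, classified as good or bad according to whether a large contour (diameter $\ge(\log u)^2$) crosses them. On good cells one controls the conditional expectation sharply (Proposition~\ref{youbad}) by comparing to $\bP_\gb$ via two separate approximation lemmas; bad cells are rare enough (Lemma~\ref{larg}, a Peierls estimate on total large-contour length) that their crude contribution is negligible. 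None of this machinery---the cluster negativity lemma, the good/bad cell split, the large-contour large-deviation bound---appears in your sketch, and you explicitly flag this as ``the main technical obstacle'' without resolving it.
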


\begin{rem} Note that the value of $n$ that maximizes the l.h.s.\ in \eqref{uds} is given by
\begin{equation}
n(u):=\left \lceil (2\gb)^{-1} \log \left( \frac{2\alpha_2(1-J^3)}{\alpha_1 (1+J)(1-J^4)}  \right)- \frac{\log u}{2\gb}\right \rceil-3.
\end{equation}
In particular $F(\gb,u)$ is affine by parts and we have 
\begin{equation}\label{letroi}
F(\gb,u)=  u^3 g_{\gb}\left( \log_J u\right),
\end{equation}
where $g_{\gb}$ is a $1$-periodic function which is bounded away from $0$ and $\infty$. 
The critical exponent $3$ appearing in \eqref{letroi} is a not universal and is inherited from the lattice structure, 
see Section \ref{trans} for more on this issue.
 \end{rem}

\begin{rem}
Let us be more specific on the assumption required for $\gb$.
To prove that $h_w(\gb)=\log\frac{e^{4\gb}}{e^{4\gb}-1}$, we only need
the following  to hold 
\begin{equation}\label{finite}
 \sum_{\{\gamma \ :   \ {\bf 0}\in\bar \gamma \}}e^{-\gb|\tilde \gamma|}<\infty,
\end{equation}
where the sum is performed over contours which contain ${\bf 0}$ (see Section \ref{contour} where the relevant definitions are introduced).
Moreover under assumption \eqref{finite} alone, we can prove that there exists a constant $C$ such that 
$$\forall u\in(0,1], \quad C^{-1}u^3 \le \bar \tf(\gb,u)\le C u^3.$$

\medskip

To obtain the sharp asymptotics, we use Theorem \ref{infinitevol} and thus we need $\gb\ge \gb_0$. On top of that, we also use the fact that $\gb>\gb_1$ where 
\begin{equation}\label{finiteprim}
 \gb_1= \inf\left\{ \gb \ : \ \sum_{\{\gamma :  {\bf 0}\in\bar \gamma \}}e^{-\gb|\tilde \gamma|}<\infty \right\},
\end{equation}
(which is almost equivalent but possibly more strict than \eqref{finite}).
It is a classical exercise to show that $\gb_1\in (\log 2, \log 3)$. While no explicit bound given in \cite{cf:BM} for $\gb_0$,
the issue of is discussed in \cite[p 493]{cf:KP}, and it appears that $\gb_0\ge 2$ and thus $\gb_0>\gb_1$.
\end{rem}
\begin{rem}
Derivative of $F(\gb,u)$ appears to be discontinuous at 
$$u_n:=\frac{2 \alpha_2}{\alpha_1}\frac{J^{2n+2}(1+J)(1-J^3)}{1-J^4}.$$
While we believe that discontinuity point appears also on the free energy curve, to mark the layering transitions, it is not expected that they coincide exactly 
with $u_n$. We conjecture however that $\partial_u \bar \tf(\gb,u)$ present a sequence of discontinuity points $(u^*_n)_{n\ge 1}$, which is asymptotically equivalent to $u_n$ when 
$n$ tends to infinity (see also Section \ref{slayer}).
\end{rem}

\subsection{Possible extensions of the result}\label{possibext}

We have chosen for simplicity to present our result in $\bbZ^2$, but it can  extend to any finite dimensional transitive lattice.
For two dimensional lattices, the proof works verbatim, provided adequate conventions  
are adopted for the contour decomposition.

\medskip

The case of higher dimension (for simplicity let us say $\bbZ^d$) is a bit more delicate as the contour representation introduced in Section \ref{contour} 
is specific to dimension $2$. However this is not essential to the proof and  
we can replace contours e.g.\ with clusters of edges of non-zero gradient as done in \cite{cf:BM}.

\subsubsection{Transitive planar lattices}\label{trans}

Let $\bbT_2$ and $\bbH_2$ denote the usual triangular lattice and hexagonal lattice respectively: 
The vertices of $\bbT_2$ are the complex numbers of the form $a+be^{i\pi/3}$ with $a,b\in \bbZ$ and edges link vertices at (Euclidean) distance $1$.
The lattice $\bbH_2$ is obtained in the same manner but only keeping the vertices such that $a+2b\ne 2 \ [3]$.
The SOS model can be defined on those lattices  in the same manner as for $\bbZ^2$.

\medskip

To adapt the proof of Theorem \ref{main} to these cases, the only ingredient we need is a contour representation similar to the one introduced in
Section \ref{contour}.
For $\bbT_2$ there is no possible ambiguity when defining contour in the dual lattice, but the situation is a bit more delicate for
$\bbH_2$, since the dual lattice has degree $6$ (and thus $3$ contour line can meet). We let the reader check that things work out using an adequate convention 
to define the contour lines, similar the one described below Figure \ref{linked}.

\medskip

If we let $\bP^{\#}_{\gb}$  denote the infinite volume limit (with $\#=\bbT$ or $\bbH$) of the SOS measure with $0$ boundary condition 
(which exist for $\gb$ sufficiently large),
we define 
\begin{equation}\begin{split}
\alpha^{\bbT}_1:= \lim_{n\to \infty}e^{6\gb}  \bP^{\bbT}_{\gb}[\phi({\bf 0})],\quad 
\alpha^{\bbT}_2:= \lim_{n\to \infty}e^{10\gb}  \bP^{\bbT}_{\gb}[\min(\phi({\bf 0}), \phi({\bf 1}))\ge n], \\
\alpha^{\bbH}_1:= \lim_{n\to \infty}e^{3\gb}  \bP^{\bbT}_{\gb}[\phi({\bf 0})],\quad 
\alpha^{\bbH}_2:= \lim_{n\to \infty}e^{4\gb}  \bP^{\bbT}_{\gb}[\min(\phi({\bf 0}), \phi({\bf 1}))\ge n]. \\
\end{split}
\end{equation}
Using the same notation as in the $\bbZ^2$ case with corresponding superscript and setting 
$$\bar \tf^{\#}(\gb,u)=\tf^{\#}(\gb,h_{w}^{\#}(\gb)+u)-\tf^{\#}(\gb)$$
we obtain the following result.

\begin{theorem}
When $\gb> \gb_0(\#)$ sufficiently large, we have, 
$$h^{\bbT}_w(\gb)=\log\left(\frac{e^{6\gb}}{e^{6\gb}-1}\right) \quad \text{ and } \quad h^{\bbH}_w(\gb)=\log\left(\frac{e^{3\gb}}{e^{3\gb}-1}\right)$$
and furthermore  we have
\begin{equation}\label{lequivdelamor2}
\bar \tf^{\#}(\gb,u)\stackrel{u\to 0+}{\sim} F^{\#}(\gb,u)
\end{equation}
where  
\begin{equation}\label{uds2}\begin{split}
F^{\bbT}(\gb,u)&:=\max_{n\in \bbZ_{+}}\left( \alpha^{\bbT}_1 J^{3n}u-  \frac{3\alpha^{\bbT}_2(J^5-J^6)}{1-J^5} J^{5n} \right),\\
F^{\bbH}(\gb,u)&:=\max_{n\in \bbZ_{+}}\left( \alpha^{\bbH}_1 J^{3n/2}u-  \frac{3\alpha^{\bbH}_2(J^{2}-J^{3})}{2(1-J^2)} J^{2n} \right).
\end{split}
\end{equation}
\end{theorem}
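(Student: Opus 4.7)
The strategy is to transfer the proof of Theorem \ref{main} verbatim to $\bbT_2$ and $\bbH_2$, replacing the arithmetic constants dictated by the degree of the lattice and by the cost of the fundamental ``spikes.'' Throughout, the single relevant scaling is $J = e^{-2\gb}$: raising an isolated vertex by one unit costs a factor $J^{\deg/2}$, and raising two adjacent vertices by one unit costs a factor $J^{(2\deg-2)/2}$. This explains the exponents $3n$ and $5n$ in $F^{\bbT}$ (degree $6$) and the exponents $3n/2$ and $2n$ in $F^{\bbH}$ (degree $3$); these are precisely the analogues of $2n$ and $3n$ in \eqref{uds}.

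First I would install the contour/level-line formalism of Section \ref{contour} on each lattice. For $\bbT_2$, whose dual is the hexagonal lattice of degree $3$, level lines never branch and the construction of Section \ref{contour} goes through without any change. For $\bbH_2$, whose dual is triangular and has degree-$6$ vertices, three contour arcs may meet at a dual vertex; I would fix a splitting convention analogous to the one described below Figure \ref{linked} (for instance, always split so that the two arcs making the smallest turn are joined, with ties broken lexicographically) so that each contour remains a simple closed loop. Once this is done, the finiteness condition \eqref{finite} takes the usual Peierls form $\sum_{\gamma \ni \mathbf{0}} e^{-\gb |\tilde\gamma|} < \infty$ and still defines a threshold $\gb_1^{\#} \in (\log 2, \log 3)$ below the regime where Theorem \ref{infinitevol} applies.

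Next I would establish existence and finiteness of $\alpha_1^{\#}$ and $\alpha_2^{\#}$ by repeating the proof of Proposition \ref{sendingout}. The point is that, in the infinite-volume Gibbs measure $\bP_\gb^{\#}$, a configuration with $\phi(\mathbf{0}) \geq n$ (resp.\ $\min(\phi(\mathbf{0}), \phi(\mathbf{1})) \geq n$) can be mapped bijectively onto one where the vertex $\mathbf{0}$ (resp.\ both $\mathbf{0}$ and $\mathbf{1}$) is ``pushed down'' by one, losing the prescribed Boltzmann factor; the error in the prefactor comes from contours interacting with the manipulated vertices and is controlled by \eqref{decayz}. The exponents $6\gb, 10\gb, 3\gb, 4\gb$ in the definitions are precisely the local energy cost of such a push on each lattice.

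With these ingredients in hand, the proof of $h_w^{\#}(\gb) = \log(e^{k\gb}/(e^{k\gb}-1))$ (with $k=6$ or $k=3$) follows the two-step argument. For the upper bound on $\tf^{\#}(\gb,h)$ at $h = h_w^{\#}$, I would rerun Chalker's entropic repulsion estimate: the inequality \eqref{chalbound} has an exact lattice-by-lattice counterpart, because it only uses the geometric series of downward spikes of height $k n / 2$. For the sharp lower bound, I would reproduce the coarse-graining at height $n$: the interface is forced to sit at level $n$ on most of $\gL_N$, and the only contributions to $|\phi^{-1}(0)|$ come from isolated or doubly-isolated spikes that reach $0$, whose asymptotic densities under $\bP_\gb^{\#,n}$ are $\alpha_1^{\#} J^{c_1 n}$ and $\alpha_2^{\#} J^{c_2 n}$ with $(c_1,c_2) = (3,5)$ or $(3/2, 2)$. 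Optimizing over $n$ as in \eqref{uds} produces exactly the formulas \eqref{uds2}, with the combinatorial prefactors $3$ in the $\bbT_2$ case and $3/2$ in the $\bbH_2$ case reflecting the number of nearest-neighbour pairs per site (counted with the usual $1/2$).

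\textbf{Main obstacle.} The one step that requires genuine care rather than bookkeeping is the contour convention on $\bbH_2$. One must verify that with the chosen splitting rule the bijections used in proving the $\alpha_i^{\bbH}$ limits (and the induction that separates ``isolated'' spikes from larger excursions) remain well defined, since three contour lines meeting at a dual vertex could in principle allow a single vertex-push to reconnect contours in a non-local way. Once the convention is frozen and shown to be compatible with the monotone coupling used in Proposition \ref{sendingout}, the remainder of the proof of Theorem \ref{main} transfers line by line, with only the numerical constants changed.
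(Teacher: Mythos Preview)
Your proposal is correct and matches the paper's own approach exactly: the paper states that ``the proof being identical to that of Theorem \ref{main}, we leave to the reader the computation needed to obtain \eqref{uds2},'' and singles out precisely the same delicate point you do, namely the contour-splitting convention on $\bbH_2$ where the dual lattice has degree $6$. Your identification of the exponents via $J^{\deg/2}$ and $J^{(2\deg-2)/2}$ and of the prefactors $3$ and $3/2$ as half the vertex degree is exactly the bookkeeping the paper leaves to the reader.
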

The proof being identical to that of Theorem \ref{main}, we leave to the reader the computation need to obtain \eqref{uds2}. 
Our main motivation for displaying this generalization  is to underline that 
the exponent $3$ appearing in \eqref{letroi} is not universal and inherited from the lattice structure (more precisely it is determined by the 
ratio of the length of the two shortest contours). For instance we have 
\begin{equation}
\bar \tf^{\bbT}(\gb,u)\stackrel{u\to 0+}{\asymp} u^{5/2} \quad  \text{ and } \quad \bar \tf^{\bbH}(\gb,u)\stackrel{u\to 0+}{\asymp} u^{4}. 
\end{equation}

\subsubsection{Higher dimension}

The generalization of the result to higher dimension is a bit more delicate since one has to abandon the 
idea of using the contour representation,
but it can be still be performed.
In that case, for $\gb$ sufficiently large, we have 
\begin{equation}\label{critd}
h^{d}_w(\gb):=\log \left(\frac{e^{2d\gb}}{e^{2d\gb}-1}\right)
\end{equation}
and the asymptotic equivalent for $$\bar \tf^{d}(\gb,u):=\tf^{d}(\gb,h_{w}^{d}(\gb)+u)-\tf^d(\gb)$$
is given by 
\begin{equation}
F^{d}(\gb,u):=\max_{n\in \bbZ_{+}}\left( \alpha^{d}_1 J^{dn}u-  \frac{d\alpha^{d}_2(J^{2d-1}-J^{2d})}{1-J^{2d-1}} J^{(2d-1)n} \right).
\end{equation}
where  we defined
\begin{equation}\begin{split}\label{limzits}
\alpha^{d}_1&:= \lim_{n\to \infty}e^{2d\gb}  \bP^{d}_{\gb}[\phi({\bf 0})], \\
\alpha^{d}_2&:= \lim_{n\to \infty}e^{(4d-2)\gb}  \bP^{d}_{\gb}[\min(\phi({\bf 0}), \phi({\bf 1}))\ge n],
\end{split}
\end{equation}
with ${\bf 0}=(0,\dots,0)$ and ${\bf 1}=(1,0,\dots,0)$. The letter $\bP^{d}_{\gb}$ denotes the infinite volume probability distribution.
In particular we have $$\bar \tf^{d}(\gb,h^d_w(\gb)+u)\stackrel{u\to 0+}{\asymp} u^{\frac{2d-1}{d-1}}.$$

\medskip

It is known however that when $d\ge 3$ the SOS surface is rigid for every value of $\gb$ \cite{cf:BFL},
and hence it seems reasonable to hope that \eqref{critd} holds for all $\gb$.  Giving any prediction about the critical behavior for small $\gb$ is a more challenging task.

\begin{rem}
The case of SOS in dimension $1$ is quite different and corresponds to the wetting of random walk investigated in \cite{cf:Fisher} (see also the first chapters of 
\cite{cf:GB} for an introduction).
In that case we have $h_w(\gb)=\log (1+e^{-\gb})$ for every $\gb>0$, and there exists a closed expression for the free-energy $\tf(\gb,h)$ which is analytic in $\gb$ and $h$ outside of the critical line.
\end{rem}

\begin{rem}
The question of wetting can be and has been posed for other kinds of surface models, in particular for those when $\phi$ takes value in $\bbR^{\gL}$
and the potential is of the form 
$$-h\ind_{\{\phi(x)\in[0,1]\}}+\infty \ind_{\{\phi(x)<0\}}.$$
These models are not expected to display layering transitions.
In \cite{cf:CV} it has been shown that 
there is a non trivial transition ($h_w(\gb)>0$) when $\phi$ is distributed like the lattice massless free field in $\bbZ^2$. 
There is no particular reason to hope that the value of $h_w(\gb)$ can be determined in that case, and obtaining some qualitative result about the phase transition 
seems a very challenging task.

\medskip

For the lattice free field in higher dimension ($d\ge 3$) it has been shown that there is no wetting transition ($h_w(\gb)=0$)  \cite{cf:BZ}. 
Moreover in that case the critical behavior of the free energy
and the vicinity has been identified \cite{cf:GL}.

\end{rem}

\subsection{About layering transitions}\label{slayer}

While the asymptotics we find for the free energy and the corresponding proofs give some indications about the occurrence of layering transitions, 
they do not provide a complete picture of the phenomenon.
Let us describe shortly here what we believe should occur, when $\gb$ is sufficiently large

\medskip

\begin{itemize}
\item[(A)]
There exists a decreasing sequence $(h^*_n(\gb)_{n\ge 1})$, with $\lim_{n\to \infty} h^*_n=h_w(\gb)$
 such that $h\mapsto \tf(\gb,h)$ is analytic on $\bbR\setminus \{h^*_n\}_{n\ge 1}$, and 
and  $\partial_h\bar \tf(\gb,h)$ is discontinuous at $h^*_n$ for every $n$.
\item[(B)] When $h\in [h^*_{n+1},h^*_{n})$,  the measure $\bP^{h}_{\gL,\gb}$ converges towards an infinite volume limit $\bP^{n,h}_{\gb}$ which
 \textit{percolates at level} $n$. We mean that under $\bP^{h}_{\gb}$, the level set 
$\{ x \ : \ \phi(x)=n \}$ contains a unique infinite connected component of positive density and the other level sets only display finite connected components. This convergence also holds 
for $\bP^{m,h}_{\gL,\gb}$ for any $m\in \bbN$.
 \item[(C)] When $h\in (h^*_{n+1},h^*_{n})$, $\bP^{n,h}_{\gb}$ is the unique translation invariant infinite volume Gibbs state, but when
 When $h=h*_n$ there exists (at least) two translation invariant Gibbs states $\bP^{n-1,h^*_n}_{\gb}$ and $\bP^{n,h^*_n}_{\gb}$ which respectively percolates at level $n-1$ and $n$.
 The   sequence $\bP^{m,h^*_n}_{\gL,\gb}$ converges towards $\bP^{n-1,h^*_n}_{\gb}$ when $|\gL|\to \bbZ^2$ if $m\le n-1$ and 
 towards $\bP^{n,h^*_n}_{\gb}$ if $m\ge n$.
 \end{itemize}
 
A step towards this result was performed in \cite{cf:ADM}: analyticity and  the existence of the infinite volume measure at level $n$ were proved to hold in the interval 
$(J^{n+2+\gep},J^{n+3-\gep})$ for $\gb\ge \gb_n$, where $\gb_n$ diverges when $n$ goes to infinity.  
A scheduled sequel to the present paper aims to bring more of this conjecture on rigorous ground \cite{cf:HL}.

\medskip

Results with a similar flavor where obtained in \cite{cf:DM, cf:CM} for the so-called prewetting problem, where a bulk positive magnetic field is present.

\section{A heuristic picture of the phase diagram}

In the present section, we introduce a reformulation of the problem which aims to 
give a better intuition both on the result and the  proof of Theorem \ref{main}.
In Section \ref{rewriite}, we reinterpret the partition function as that of a different interface model which does not include a positivity constraint and is easier to analyze.
We make use of this alternative representation in Section \ref{houra} to provide a heuristic for our main result.

\subsection{Rewriting the partition function to drop the positivity constraint} \label{rewriite}

We can rewrite the partition function for the wetting as a partition function on the larger set of unconstrained trajectories $\gO_{\gL}$, 
by adding some penalty term for visiting the negative half space.
Recalling \eqref{def}, for any $\gG\subset \gL$ we set
\begin{equation}\label{ZZplus}
\cZ^+_{\gG}:=\cZ^0_{\gG,\gb}=\sum_{\phi\in \gO^+_{\gG}} \exp\left(-\gb \cH_\gL(\phi)\right).
\end{equation}
and
$$H(\gG):=\log \mathcal Z^+_{\gG,\gb},$$
(the dependence in $\gb$ is omitted from the notation in both case for better readability).
We can rewrite the partition function in the following manner.
\begin{lemma}\label{represent}
We have 
\begin{equation}
\cZ^{h}_{\gL,\gb}=\sum_{\phi\in \gO_{\gL}} e^{-\gb \cH_{\gL}(\phi)+ h|\phi^{-1}(\bbZ_-)|-H(\phi^{-1}(\bbZ_-))}.
\end{equation}
\end{lemma}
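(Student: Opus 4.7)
The plan is to establish the identity by a Fubini-type resummation, combined with a canonical ``positive/negative part'' decomposition of each integer-valued field on $\gL$. Throughout I use the convention $\bbZ_- := \bbZ\cap(-\infty,0]$, so that for any $\phi \in \gO_\gL$, the set $\gG := \phi^{-1}(\bbZ_-)$ satisfies $\phi|_\gG \leq 0$ and $\phi|_{\gG^c} \geq 1$ (the latter because $\phi$ is integer-valued and strictly above $\bbZ_-$ off $\gG$).

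For each $\gG \subset \gL$, I would set up an explicit bijection
\[
\{\phi \in \gO_\gL : \phi^{-1}(\bbZ_-) = \gG\} \;\longleftrightarrow\; \{\phi^* \in \gO^+_\gL : (\phi^*)^{-1}\{0\} = \gG\} \;\times\; \gO^+_\gG,
\]
defined by $\phi \mapsto (\phi^*,\psi)$ with $\phi^*|_{\gG^c} := \phi|_{\gG^c}$, $\phi^*|_\gG :\equiv 0$, and $\psi(x) := -\phi(x)$ for $x\in\gG$. That this is a bijection uses exactly the fact that $\phi\geq 1$ on $\gG^c$, which forces $(\phi^*)^{-1}\{0\}$ to equal $\gG$ and not a strict superset.

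The core computation is the additive splitting of the Hamiltonian,
\[
\cH_\gL(\phi) \;=\; \cH_\gL(\phi^*) + \cH_\gG(\psi),
\]
which I verify edge by edge. There are four relevant edge types: (i) both endpoints in $\gG$, where $|\phi(x)-\phi(y)|=|\psi(x)-\psi(y)|$ while $\phi^*$ contributes $0$; (ii) both endpoints in $\gG^c$, where $\phi=\phi^*$ and $\psi$ plays no role; (iii) cross edges $x\in\gG$, $y\in\gG^c$, where $|\phi(x)-\phi(y)|=\phi(y)-\phi(x)=\phi(y)+\psi(x)=|\phi^*(x)-\phi^*(y)|+|\psi(x)-0|$, so the single edge contributes one $\cH_\gL(\phi^*)$ cross term plus one $0$-boundary term of $\cH_\gG(\psi)$ (since $y\in\partial\gG$); and (iv) boundary edges $x\in\gL$, $y\in\partial\gL$ (so $\phi(y)=0$), split according to whether $x\in\gG$ (then $|\phi(x)|=\psi(x)$, absorbed by the $0$-boundary of $\cH_\gG(\psi)$ since $y\in\partial\gG$) or $x\in\gG^c$ (then $|\phi(x)|=|\phi^*(x)|$, absorbed by $\cH_\gL(\phi^*)$).

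Once this splitting is in place, for each fixed $\gG$ the bijection factorizes the sum as
\[
\sum_{\substack{\phi\in\gO_\gL\\ \phi^{-1}(\bbZ_-)=\gG}} e^{-\gb\cH_\gL(\phi)}
\;=\;\cZ^+_\gG\cdot\sum_{\substack{\phi^*\in\gO^+_\gL\\ (\phi^*)^{-1}\{0\}=\gG}} e^{-\gb\cH_\gL(\phi^*)}.
\]
Multiplying by $e^{h|\gG|-H(\gG)}=e^{h|\gG|}/\cZ^+_\gG$ and summing over $\gG\subset\gL$, the left-hand side is precisely the right-hand side of the claimed identity, while the right-hand side collapses to $\sum_{\phi^*\in\gO^+_\gL} e^{-\gb\cH_\gL(\phi^*)+h|(\phi^*)^{-1}\{0\}|}=\cZ^h_{\gL,\gb}$. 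The only step requiring genuine care is the edge-by-edge Hamiltonian splitting, specifically ensuring that the $0$-boundary convention of $\cH_\gG$ absorbs exactly those contributions of $\cH_\gL(\phi)$ that are lost when one replaces $\phi|_\gG$ by $0$; everything else is a routine rearrangement.
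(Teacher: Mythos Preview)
Your proof is correct and takes essentially the same approach as the paper: your $(\phi^*,\psi)$ is exactly the paper's $(\phi_+,\phi_-\restrict_{\phi_+^{-1}\{0\}})$, and the key identity $\cH_\gL(\phi)=\cH_\gL(\phi^*)+\cH_\gG(\psi)$ is the paper's $\cH_\gL(\phi)=\cH_\gL(\phi_+)+\cH_{\phi_+^{-1}\{0\}}(\phi_-)$. The only difference is that you spell out the edge-by-edge verification of the Hamiltonian splitting, which the paper leaves as ``straightforward to check''.
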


\begin{proof}
 We let $\phi_+(x)= \max(\phi(x),0)$ and $\phi_-(x)=\max(-\phi(x),0)$ denote respectively the positive and negative part of $\phi$.
Observe that the map $\phi \mapsto (\phi_+,\phi_{-})$ is bijective from $\gO_{\gL}$ 
to the set of pairs of function with disjoint supports 
 $$\left\{ (\phi_+,\phi_{-})\in \gO^+_{\gL} \ : \ \forall x \in \gL, \ \phi_+(x)\phi_{-}(x)=0\right\}.$$
Considering $\phi_-$ as a function defined 
on $\phi^{-1}_+\{0\}$, it is straight forward to check from the definition that
$$\cH_{\gL}(\phi)=\cH_{\gL}(\phi_+)+\cH_{\phi^{-1}_+\{0\}}(\phi_-).$$
 As a consequence we have 
 \begin{multline}
\sum_{\phi\in \gO_{\gL}} e^{-\gb \cH_{\gL}(\phi)+ h|\phi^{-1}(\bbZ_-)|-H(\phi^{-1}(\bbZ_-))}
\\=\sum_{\phi_+\in \gO^+_{\gL}}  e^{-\gb \cH_{\gL}(\phi_+)+ h|\phi^{-1}_+\{0\}|-H(\phi^{-1}_+\{0\})} 
\sum_{\phi_-\in \gO^+_{\phi^{-1}_+\{0\}}}e^{-\gb \cH_{\phi^{-1}_+\{0\}}(\phi_-)}
\\=\sum_{\phi_+\in \gO^+_{\gL}}  e^{-\gb \cH_{\gL}(\phi_+)+ h|\phi^{-1}_+\{0\}|-H(\phi^{-1}_+\{0\})}  \cZ^+_{\phi^{-1}_+\{0\}}\\
=\sum_{\phi_+\in \gO^+_{\gL}}  e^{-\gb \cH_{\gL}(\phi_+)+ h|\phi^{-1}_+\{0\}|}= \cZ^{h}_{\gL,\gb}.
\end{multline}
\end{proof}

\noindent The proof displayed above implies also that the distribution of $\bP^{h}_{\gL,\gb}$ can be obtained by first by sampling 
a field in $\gO_{\gL}$ with respect to the alternative measure $\tilde \bP^{h}_{\gL,\gb}$ defined by
\begin{equation}\label{latilde}
\tilde \bP^{h}_{\gL,\gb}(\phi):= \frac{1}{\cZ^{h}_{\gL,\gb}} e^{-\gb \cH_{\gL}(\phi)+ h|\phi^{-1}(\bbZ_-)|-H(\phi^{-1}(\bbZ_-))},
\end{equation}
and then taking the positive part of $\phi$.
While the measure $\tilde \bP^{h}_{\gL,\gb}$ might seem less natural than    $\bP^{h}_{\gL,\gb}$, it turns out to be easier to analyze because
it has a positive density with respect to $\bP_{\gL,\gb}$ whose behavior is better understood (cf.\ Theorem \ref{infinitevol}, Lemma \ref{geom} and Lemma \ref{restrict}).

\medskip

Yet, to make use of this alternative representation for the wetting model, we need to state few basic properties for $H$.
We say that two finite subsets of $\bbZ^2$, $\gL^{(1)}$ and $\gL^{(2)}$ are \textit{separated} if they are disjoint and are not connected by any edges
(this is equivalent to $\gL^{(1)}\cap \partial \gL^{(2)}=\emptyset$). 

\medskip

A first obvious observation is that if $\gL^{(1)}$ and $\gL^{(2)}$ are separated, we have 
$$H(\gL^{(1)}\cup \gL^{(2)})=H(\gL^{(1)})+H(\gL^{(2)}).$$
Thus $H(\gG)$ can be computed by summing over all maximal connected components of $\gG$.
The following result gives us the contributions of components of size one and two and an estimate for that of size three and larger.

\begin{lemma}\label{lehagga}
The function $H_{\gb}=H$ defined on the finite subsets of $\bbZ^d$ satisfies the following 
\begin{itemize}
\item[(i)] For $x\in \bbZ^2$, $H\{x\}=\log\left(\frac{1}{1-J^2}\right)$.
\item[(ii)] If  $x,y\in \bbZ^2$, are such that  $x\sim y$ then
$$H\{x,y\}=2\log\left(\frac{1}{1-J^2}\right)+ \log \left(\frac{1-J^4}{1-J^3}\right).$$
\item[(iii)] For a connected set $\gG$ with $|\gG|\ge 2$ there exists positive constants $c_1(\gb)$ and $c_2(\gb)$ which are such that  
\begin{equation}\label{evlat}
c_1|\gG|\le  H(\gG)- |\gG| \log\left(\frac{1}{1-J^2}\right)\le  c_2|\gG|.
\end{equation}

\end{itemize}
\end{lemma}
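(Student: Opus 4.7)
The key reformulation behind all three items is the elementary identity $|a-b|=a+b-2\min(a,b)$, valid for $a,b\ge0$. Combined with the fact that every vertex of $\bbZ^2$ has exactly four neighbors, a routine manipulation of $\cH_\gG$ yields
\[
  \cH_\gG(\phi) \;=\; 4\sum_{x\in\gG}\phi(x) \;-\; 2\sum_{\{x,y\}\in E_\gG}\min(\phi(x),\phi(y)),
\]
where $E_\gG$ denotes the set of undirected edges both of whose endpoints lie in $\gG$. Introducing the product probability measure $Q$ on $(\bbZ_+)^\gG$ under which the coordinates $\phi(x)$ are i.i.d.\ with $Q(\phi(x)=k)=(1-J^2)J^{2k}$, this gives
\[
  \cZ^+_\gG \;=\; \frac{1}{(1-J^2)^{|\gG|}}\,\bE_Q\!\left[\exp\Bigl(2\gb\!\sum_{\{x,y\}\in E_\gG}\!\min(\phi(x),\phi(y))\Bigr)\right].
\]
Parts (i) and (ii) are now immediate: for (i), $E_{\{x\}}$ is empty so $\cZ^+_{\{x\}}=1/(1-J^2)$; for (ii), under $Q$ the minimum of two i.i.d.\ geometric variables of parameter $1-J^2$ is itself geometric of parameter $1-J^4$, hence $\bE_Q[e^{2\gb\min(\phi(x),\phi(y))}]=(1-J^4)/(1-J^4 e^{2\gb})=(1-J^4)/(1-J^3)$, giving $\cZ^+_{\{x,y\}}=(1-J^4)/[(1-J^2)^2(1-J^3)]$, which is the announced formula.

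The lower bound in (iii) follows by Jensen: $\bE_Q[\exp(2\gb M(\phi))]\ge\exp(2\gb\,\bE_Q[M(\phi)])$ with $M(\phi):=\sum_{e\in E_\gG}\min(\phi(x_e),\phi(y_e))$ and $\bE_Q[M]=|E_\gG|\cdot J^4/(1-J^4)$. Connectedness of $\gG$ gives $|E_\gG|\ge|\gG|-1$, so
\[
  H(\gG)-|\gG|\log\!\tfrac{1}{1-J^2} \;\ge\; 2\gb(|\gG|-1)\tfrac{J^4}{1-J^4},
\]
which is at least $c_1|\gG|$ for $|\gG|\ge 2$, e.g.\ with $c_1:=\gb J^4/(1-J^4)>0$.

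For the upper bound, I would fix a spanning tree $T$ of $\gG$ rooted at a vertex $r$ with $n_r\ge 1$ (always possible since $\gG$ is finite), and discard all internal edges not in $T$ as well as all boundary edges not touching $r$; this yields $\cH_\gG(\phi)\ge\phi(r)+\sum_{x\ne r}|\phi(x)-\phi(p(x))|$. The resulting upper bound is then summed leaf-by-leaf using the uniform-in-$k$ estimate $\sum_{m\ge 0}e^{-\gb|m-k|}\le(1+e^{-\gb})/(1-e^{-\gb})=:C_*$: a one-step induction up the tree shows that the truncated partition function conditioned on $\phi(r)=k$ is bounded by $C_*^{|\gG|-1}$ for every $k\ge 0$, and a final sum over $\phi(r)\ge 0$ delivers $\cZ^+_\gG\le C_*^{|\gG|-1}/(1-e^{-\gb})$. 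Because $C_*(1-J^2)=(1+e^{-\gb})^2(1+J)$ is a bounded function of $\gb$, the per-vertex excess $\log C_*-\log(1/(1-J^2))$ is $O(1)$, and for $|\gG|\ge 2$ the isolated factor $1/(1-e^{-\gb})$ is easily absorbed, giving $c_2=c_2(\gb)$ as required. The only delicate point of the whole argument is this final matching of the tree-based constant $C_*$ with the natural per-vertex factor $(1-J^2)^{-1}$; the algebraic identity above makes it transparent that the two differ by a $\gb$-bounded multiplicative factor.
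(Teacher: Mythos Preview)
Your proof is correct, and in several places it is cleaner than the paper's. The key novelty is your reformulation
\[
  \cZ^+_\gG \;=\; (1-J^2)^{-|\gG|}\,\bE_Q\Bigl[\exp\bigl(2\gb\textstyle\sum_{e\in E_\gG}\min(\phi(x_e),\phi(y_e))\bigr)\Bigr],
\]
which the paper does not use. With this device, (i) is trivial and (ii) becomes a one-line computation (minimum of two i.i.d.\ geometrics is geometric); the paper instead does both by direct summation over $(\min,|\phi(x)-\phi(y)|)$. For the lower bound in (iii) your Jensen argument under $Q$ is different from, and arguably slicker than, the paper's route: the paper observes superadditivity of $H$ under disjoint union and decomposes a connected $\gG$ into $\lceil(|\gG|-1)/4\rceil$ adjacent pairs plus leftover singletons, invoking (i)--(ii). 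Your constant $c_1=\gb J^4/(1-J^4)$ differs from the paper's $c_1=\tfrac16\log\tfrac{1-J^4}{1-J^3}$, but both are legitimate.

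For the upper bound in (iii) the paper takes a different tack: it drops the positivity constraint, writes $\tfrac12\sum_{y\sim x}|\phi(x)-\phi(y)|\ge 2|\phi(x)-\bar\phi_x|$ with $\bar\phi_x$ the neighbor average, and bounds the resulting product by $((1+J)/(1-J))^{|\gG|}$, yielding $c_2=2\log(1+J)$. Your spanning-tree peeling is equally standard and gives the slightly larger but perfectly valid constant $c_2=\log\bigl[(1+e^{-\gb})^2(1+J)\bigr]$. Two minor remarks on presentation: the notation $n_r$ is undefined (you mean the number of boundary edges at $r$, and the existence of such a root is clear since $\gG$ is finite), and your verbal description keeps ``all boundary edges touching $r$'' while the displayed inequality keeps only one; the math is consistent with the latter, which is what you need. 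Neither point affects correctness.
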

\begin{proof}
To prove $(i)$ we simply observe that 
$$\cZ^+_{\{x\}}= \sum_{z=0}^{\infty} e^{-4\gb z}=\frac{1}{1-J^2}.$$
We obtain $(ii)$ by decomposing on the possible value for
 $u=\min(\phi_x,\phi_y)$, $v=|\phi_y-\phi_x|$. 
Considering that two location are possible for the maximum (when $\phi_x\ne \phi_y$) we obtain
$$\cZ^+_{\{x,y\}}= \sum_{u\ge 0}e^{-6\gb u}\left( 1+2\sum_{v\ge 1} e^{-4\gb v}\right)= \frac{(1+J^2)}{(1-J^3)(1-J^2)}.$$

\noindent We are going to prove $(iii)$ with constants (which we do not try to optimize)
$$c_1(\gb)= \frac{1}{6}  \log \left(\frac{1-J^4}{1-J^3}\right) \text{ and } c_2(\gb):= 2\log \left(1+J\right).$$
For the  lower bound we notice (this is a consequence of \eqref{hicup}) that given two disjoint sets $\gL_1$ and $\gL_2$, we have
\begin{equation}\label{ineq}
\cZ^+_{\gL}\ge \cZ^+_{\gL_1}\cZ^+_{\gL_2}
\end{equation}
so that $H$ is superadditive.
Then by immediate induction, we see that any connected set of cardinality $n$ can be split into $\lceil (n-1)/4 \rceil$ connected sets of cardinality $2$ and 
$n-2\lceil (n-1)/4 \rceil$ singletons.
As a consequence of superadditivity and of $(i)-(ii)$, for any $\gG$ we have
\begin{multline}
H(\gG)\ge  \left \lceil \frac{|\gG|-1}{4} \right\rceil \log \left(\frac{1-J^4}{1-J^3}\right) -
\left(|\gG|-2\left\lceil \frac{|\gG|-1)}4 \right\rceil \right)\log \left( \frac{1}{1-J^2} \right)\\
\ge |\gG|\left(  \log\left(\frac{1}{1-J^2}\right)+ \frac{1}{6}  \log \left(\frac{1-J^4}{1-J^3}\right) \right).
\end{multline}
For the upper bound, we notice that for each $x\in \gG$, extending $\phi$ to the boundary by setting $\phi(y)=0$ for $\phi\in \partial \gG$ we have 

\begin{multline}
\cZ^+_{\gG,\gb}\le \cZ_{\gG,\gb}=  \sum_{\phi\in \gO_{\gG}} \left(\prod_{x\in \gG} e^{-\frac{\gb}{2}\sum_{y\sim x}|\phi_x-\phi_y|}\right)
 \exp\bigg(-\frac{\gb}{2}\sumtwo{y\in \gG,z\in \partial \gG}{y\sim z}|\phi_y|\bigg)\\
\le \sum_{\phi\in \gO_{\gG}} \prod_{x\in \gG} e^{-\frac{\gb}{2}\sum_{y\sim x}|\phi_x-\phi_y|}.
\end{multline}
Setting $\bar \phi_x:= \frac{1}{2d}\phi_y$
we have for each $x\in \gG$, 
$$\frac{\gb}{2}\sum_{y\sim x}|\phi_x-\phi_y|\ge 2\gb |\phi_x-\bar \phi_x|,$$ 
and hence
\begin{equation}
 \sum_{\phi_x\in \bbZ} e^{-\frac{\gb}{2}\sum_{y\sim x}|\phi_x-\phi_y|}\le  \sum_{\phi_x\in \bbZ} e^{-2\gb|\phi_x-\bar \phi_x|}
 \le\frac{1+J}{1-J}.
\end{equation}
Taking the product over all $x\in \gG$ this yields 
\begin{equation}
\cZ^+_{\gG,\gb}\le \left(\frac{1+J}{1-J}\right)^{|\gG|}
\end{equation}
and hence the desired upper bound.

\end{proof}

\subsection{Identifying the localization strategy} \label{houra}

A consequence of Lemma \ref{lehagga} is that the function $\bar H$ defined by 
$$\bar H(\gG):=H(\gG)- |\gG|\log \left(\frac{1}{1-J^2}\right),$$
is non-negative.
For that reason, it is convenient to consider the change of variable
$$u:= h- \log\left( \frac{1}{1-J^2}\right).$$
With this notation as a consequence of Lemma \ref{represent} (recall that we use the subscript $N$ when considering $\gL=\gL_N:=\lint 1,N\rint^2$) we have
\begin{equation}\label{youpee}
 \frac{\cZ^{h}_{N,\gb}}{\cZ_{N,\gb}}= \bE_{N,\gb}\left[e^{u |\phi^{-1}(\bbZ_{-})|- \bar H (\phi^{-1}(\bbZ_{-}))} \right].
\end{equation}
Thus, under $\bP^{h}_{N,\gb}$ (recall \eqref{latilde}), if $u\le 0$ trajectories only gets penalized when visiting the lower half-space and trivially
$(\cZ^{h}_{N,\gb}/\cZ_{N,\gb})\le 1$. Thus we recover the lower bound of Chalker \cite{cf:chal} displayed in \eqref{chalbound},
$$h_w(\gb)\ge -\log (1-J^2).$$

\medskip

If $u$ is positive  but small (say $u\le c_1(\gb)/2$ with $c_1(\gb)$ taken from Equation \eqref{evlat})
then the effect one visits of the lower half-space depend on the size of the corresponding cluster of $\phi^{-1}(\bbZ_{-})$ it belongs to.
From Lemma \ref{lehagga}, 
a visit to the lower half-space is rewarded by an amount $u$ when being performed by an isolated site ($\bar H\{x\}=0$), 
and penalized, (with a penalty whose value is comprised between $c_1(\gb)-u \ge c_1(\gb)/2$ and $c_2(\gb)$), 
if at least one of the neighbors also visits the lower half-space.

\medskip

To get an overall positive energetic contribution, we need to cook up a strategy, 
which at a moderate entropic cost makes the vast majority of the visits come from isolated sites.

\medskip

 In view of \eqref{singledouble}, under $\bP_{\gL,\gb}$, high level sets are mostly composed of single peaks. Hence  we decide to change 
 the boundary condition for the field from $0$ to some large $n\ge 0$ in order to make 
 $\phi^{-1}(\bbZ_{-})$ look like one of these level sets. The cost for changing the boundary condition is proportional 
 only to the size of the boundary and thus does not affect the free energy
 (see Lemma \ref{ojko}).
By  translation  and reflection invariance, this change of boundary, is equivalent to replace 
 $\phi^{-1}(\bbZ_{-})$ by  $\phi^{-1}[n,\infty)$.
 
\medskip

Let us now try to find a good  approximation for the structure of the set  $\phi^{-1}[n,\infty)$.
Given $n\ge 1$ (the definition also  extends to $n\le 0$), $\phi\in \gL_N$, and $x\in \gL_N$,
we let $q(\phi,x,n)$ denote the size of the maximal connected component in $\phi^{-1}[n,\infty)$ containing $x$,
that is 
\begin{equation}\label{defq}
q(\phi,x,n):=\max \left\{ \  |\gG| \ : \ x\in \gG,\  \gG \text{ is connected and } \forall y\in \gG, \phi(y)\ge n\right\},
\end{equation}
with the convention that $q(\phi,x,n)=0$ if $\phi(x)\le n-1$.

\medskip

Let us observe that \eqref{singledouble} implies that, for  $x\sim y$, when the boundary is sent to infinity,
the respective probability for $\{q(\phi,x,n)=1\}$ and $\{q(\phi,x,n)=q(\phi,y,n)\ge 2\}$ can be approximated by $\alpha_1 J^{2n}$
 and $\alpha_2 J^{3n}$.
We prove later in the paper (see Proposition \ref{rourou}) that the probability of appearance for cluster of size $3$ or larger is of a smaller order.
 Thus it seems plausible that for small values of $u$,
we can obtain a good approximation of the partition function by only keeping clusters of size one and two 
in the Hamiltonian, giving reward $u$ for clusters of size one, and penalty $\bar H\{x,y\}$ for clusters of size two. 

\medskip

As correlations are rapidly decaying for SOS at low temperature (cf.\ \eqref{decayz}).
 we  decide to push the approximation one step further by assuming that the cluster are ``independently distributed'',
by making the following approximation
\begin{multline}\label{wouhou}
  \bE_{N,\gb}\left[e^{u |\phi^{-1}[n,\infty)|- \bar H (\phi^{-1}[n,\infty))} \right]\\
  \approx 
  \bE\left[  \exp\left( u \sum_{x\in \gL_N} V_x - \sumtwo{x,y\in \gL_N}{x\sim y}  \log \left(\frac{1-J^4}{1-J^3}\right) W_{x,y} \right)\right].
   \end{multline}
  where $(V_x)_{x\in \gL}$ and  $(W_{x,y})_{x,y\in \gL}$ are independent Bernoulli variables
 and for all $x,y$ 
 \begin{equation*}
  \bP[V_x=1]=\alpha_1 J^{2n}, \text{ and }   \bP[W_{x,y}=1]=\alpha_2 J^{3n}.
\end{equation*}
The fact that \eqref{wouhou} might be a valid approximation for some value of $n$ is nothing straight-forward. 
We inferred that the distribution of the clusters of $\phi^{-1}[n,\infty)$ looks like IID,
but it could be that the main contribution to the partition function comes from a very atypical event conditioned to which the distribution of 
clusters has very different characteristics:
this is in fact this is what happens if $n$ is not chosen in the optimal way.
Most of the challenge lies in showing that provided $n$ is well chosen, no such thing happens.

\medskip

\noindent The r.h.s.\ of \eqref{wouhou} is equal to 
\begin{equation}
 \left(1+\alpha_1 J^{2n}(e^u-1)\right)^{N^2}  \left(1- \frac{\alpha_2(J^4-J^3)}{1-J^3}J^{3n}\right)^{2(N^2-N-1)} .
\end{equation}
Taking the $\log$, applying Taylor's formula and dividing by $N^2$ this yields 
\begin{equation}
 \bar \tf(\gb,u)\approx \alpha_1 J^{2n}u-\frac{2\alpha_2(J^4-J^3)}{1-J^3}J^{3n}+O(u^{2}J^{2n}+J^{6n}).
\end{equation}
As the optimal value for $n$ grows when $u$ gets small, the third term is clearly negligible and thus we can infer that \eqref{lequivdelamor} should hold.

\subsection{Organization of the paper}

We could not find a direct path to transform this heuristic into a proof, but both the proof for the lower bound and the upper bound are based on the intuition exposed above.
For the sake of exposition, and also because they require less strict assumptions, we decided to prove first  rougher bounds, 
which allow to identify the 
right order of magnitude for the free energy, before going for the sharper results.

\medskip

\noindent The remainder of the paper is organized as follows:

\medskip

\noindent In Section \ref{prelim}, we present various classic tools and results used in the study of SOS such as monotonicity/FKG inequality (Section \ref{monot}) 
 the contour decomposition (Section \ref{contour}) and Peierls argument (Section \ref{Peierls}).
The proof of the more technical results are deferred to the appendix.

\medskip

\noindent Sections \ref{eloweb} and \ref{nozigo}, are dedicated to prove lower bounds on the free energy.
In Section \ref{eloweb} we prove a bound of order $u^3$ under the minimal assumption \eqref{finite}, and we improve it to a sharp upper bound in Section \ref{nozigo}.

\medskip

\noindent In Section \ref{upeb} we prove an upper bound of order $u^3$ valid in full generality, and improve it in a sharp upper bound under more restrictive 
assumptions in 
Section \ref{sub}.
The different sections are not independent and are meant be read in order.

\section{Preliminaries}\label{prelim}

We present in this section various general tools needed for the  study of the distribution $\bP_{N,\gb}$, and its infinite volume limit.

\subsection{Order and positive correlation}\label{monot}

The set $\gO_{\gL}$ is naturally equipped with an order defined as follows 
$$\phi\le \phi'   \quad \Leftrightarrow \quad \forall x\in \gL,\  \phi(x)\le \phi'(x).$$
Using this order we can define a notion of increasing function ($f$ is increasing if $\phi\le \phi'\Rightarrow f(\phi)\le f(\phi')$) and of increasing event 
($A$ is increasing if the function $\ind_A$ is).
We say that a measure $\mu$ on $\gO_{\gL}$ dominates another measure $\mu'$ (we write $\mu \succcurlyeq \mu'$) if for any increasing function $f$
$$\mu(f(\phi))\ge \mu'(f(\phi))$$

\medskip

It can be easily verified that the distribution given by \eqref{defSOS} satisfies Holley's condition \cite[Equation (7)]{cf:Holley}.
And thus  $\bP^{n}_{\gb,\gL}$ satisfies the FKG inequality which states that for any increasing event $A$ (the opposite inequality holds for decreasing events)
\begin{equation}\label{FKG}
 \bP^{n}_{\gb,\gL}[ \ \cdot \  | \ A] \succcurlyeq   \bP^{n}_{\gb,\gL}.
\end{equation}
We will also use the FKG inequality in some other context for a product measure (in that case the inequality is referred to as FKG-Harris inequality) 
when studying the distribution of contour sets
(see the proof of Lemma \ref{restrict} below).

\subsection{Contour representation}\label{contour}

The contour decomposition of $\phi$ describes how the function $\phi$ can be reconstructed from the knowledge of  its level lines.
The formalism of this section is section is strongly inspired by the one found in \cite{cf:PLMST, cf:PMT} but also borrows ingredients from 
\cite{cf:ADM}.

\medskip

We let $(\bbZ^2)^*$ denote the dual lattice of $\bbZ^2$ (dual edges cross that of $\bbZ^2$ orthogonally in their midpoints).
Two adjacent edges  $(\bbZ^2)^*$ meeting at $x^*$ of are said to be \textit{linked} if they both lie on the same side of the line making an angle $\pi/4$ with the horizontal 
and passing through $x$ (see Figure \ref{linked}).

We define a \textit{contour sequence} to be  a finite sequence $(e_1,\dots,e_n)$ of distinct edges of $(\bbZ^2)^*$ which satisfies:
\begin{itemize}
 \item [(i)] For any $i=\lint 1,n-1\rint$, $e_i$ and $e_{i+1}$ have a common end point $(\bbZ^2)^*$, $e_1$ and $e_{|\tilde \gamma|}$ also have a common end point.
 \item [(ii)] If for $i\ne j$, if $e_i$, $e_{i+1}$, $e_j$ and $e_{j+1}$ meet at a common end point then  $e_i$, $e_{i+1}$ are linked and so are $e_j$ and $e_{j+1}$ (with the convention that $n+1=1$).
\end{itemize}
A \textit{geometric contour} $\tilde \gamma:=\{e_1,\dots,e_{|\tilde \gamma|}\}$ is a set of edges that forms a contour sequence when displayed in the right order.
The cardinality $|\tilde \gamma|$ is called the length of the contour. 
A \textit{signed contour} which we call simply \textit{contour}, is a couple composed of a geometric contour and a $\pm$ sign. 
With some abuse of notation, when a contour $\gamma$ is mentioned, $\tilde \gamma$ will be used to denote the geometric contour associated to $\gamma$. We let 
$\gep(\gamma)$ denote the sign of $\gamma$.

\medskip

We let $\bar \gamma$, the interior of $\gamma$ denote the set of vertices of $\bbZ^2$ enclosed by $\tilde\gamma$.
We let $\Delta_{\gamma}$, the neighborhood of $\gamma$, be the set of vertices of $\bbZ^2$ located either at a (Euclidean) distance $1/2$ from 
$\tilde \gamma$ 
(when considered as a subset of $\bbR^2$) 
or at a distance $1/\sqrt{2}$ from the meeting point of two non-linked edges in $\gamma$.
We split the $\Delta_{\gamma}$ into two disjoint sets, the internal and the external neighborhoods of $\gamma$  (see Figure \ref{compa})
$$\Delta^-_{\gamma}:=\Delta_{\gamma}\cap \bar \gamma \quad \text{ and } \quad  \Delta^+_{\gamma}:=\Delta_{\gamma}\cap \bar \gamma^{\cc}.$$
We let $\cC$ denote the set contours in $\bbZ^2$ and $\cC_{\gL}$ that of signed contours such that $\bar \gamma \subset \gL$.

 \begin{figure}[ht]
\begin{center}
\leavevmode
\epsfysize = 3 cm
\epsfbox{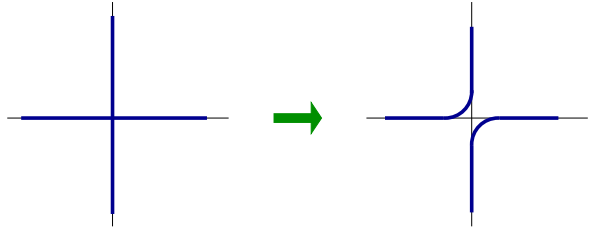}
\end{center}
\caption{\label{linked} 
The rule for splitting a four edges meeting at one points into two pairs of linked edges.
To obtain the set of contours that separates $\{x \ : \ \phi(x)\ge h\}$ from   $\{x \ : \ \phi(x)< h\}$ for $h\in \bbZ$,
we draw all dual edges separating two sites $x$, $y$ such that $\phi(x)\ge h>\phi(y)$ and apply the above graphic rule for every dual vertex where four edges meet.
When several sets of level lines include the same contour, it corresponds to a cylinder of intensity $2$ or more for $\phi$.}
\end{figure}

\medskip

Given $\phi\in \gO_{\gL}$, we say that $\gamma \in \cC_{\gL}$  is a contour for $\phi$ if (using the convention that $\phi(y)=0$ if $y\notin \gL$)

\begin{equation}\label{cont+}
\min_{x\in \Delta^-_{\gamma}, y\in \Delta^+_{\gamma}} \gep(\gamma)(\phi(x)-\phi(y))>0.
\end{equation}
The positive quantity
\begin{equation}
\label{lintens}
k(\gamma,\phi):= \min_{x\in \Delta^-_{\gamma}, y\in \Delta^+_{\gamma}} \gep(\gamma)(\phi(x)-\phi(y))
\end{equation}
 is called the \textit{intensity} of the contour and the
triplet $(\gamma,n)=(\tilde \gamma,\gep(\gamma),k)$ is called a \textit{cylinder}. We say that  $(\gamma,k)$
is a cylinder for $\phi$ if $\gamma$ is a contour of intensity $k$.
The cylinder function associated to $(\gamma,k)$ is defined by
\begin{equation}
\varphi_{(\gamma,k)}=\gep(\gamma) k \ind_{\bar \gamma}.
\end{equation}
We denote by $\hat\cC_{\gL}$ the set of cylinders in $\gL$.
We let $\hat \Upsilon(\phi)$ denote the set of cylinders for $\phi$ and $\Upsilon(\phi)$ the associated set of signed contours.

\medskip

We say that a finite subset $\gL$ of $\bbZ^2$ is \textit{simply connected} if it can be written in the form $\bar \gamma_{\gL}$ for some contour $\gamma_{\gL}$.
We note that for a simply connected set $\phi\in \gO_{\gL}$ is uniquely characterized by its cylinders. More precisely, we have  
\begin{equation}\label{cylinder}
\phi(x):= \sum_{(\gamma,k) \in \hat \Upsilon(\phi)}  \varphi_{(\gamma,k)}.
\end{equation}
The assumption of simple connectedness is present to ensure that there are no level lines in $\phi$ that surrounds holes in $\gL$ (and thus would not be contours).
Furthermore, the reader can check that
\begin{equation}\label{express}
 \cH_{\gL}(\phi)=\sum_{(\gamma,k) \in \hat \Upsilon(\phi)} k|\tilde \gamma|.
\end{equation}
Of course not every set of cylinder is of the form $\hat \Upsilon(\phi)$ and we must introduce a notion of compatibility
which characterizes the ``right" sets of cylinder.

\medskip

Two cylinders $(\gamma,k)$ and  $(\gamma',k')$ are said to be compatible if they are
cylinders for the function $\varphi_{(\gamma,\gep, n)}+\varphi_{(\gamma',\gep', n')}$.
This is equivalent to the three following conditions  being satisfied (see Figure \ref{compa}):
\begin{itemize}
 \item [(i)] $\tilde \gamma\ne \tilde \gamma'$ and $\bar \gamma' \cap  \bar \gamma\in\{\emptyset, \bar \gamma', \bar \gamma\}$.
  \item [(ii)]  If $\gep(\gamma)= \gep(\gamma')$ and $\bar \gamma' \cap \bar \gamma= \emptyset$, then
 then $\bar \gamma'\cap \Delta^+_{\gamma}=\emptyset$ .
 \item [(iii)] If $\gep(\gamma)\ne \gep(\gamma')$ and $\bar \gamma'\subset \bar \gamma$ (resp.\ $\bar \gamma\subset \bar \gamma'$) then 
 $\bar \gamma'\cap \Delta^-_{\gamma}=\emptyset$  (resp. $\bar \gamma\cap \Delta^-_{\gamma'}=\emptyset$).

\end{itemize}
This first condition simply states that compatible contours do not cross each-other.
The conditions $\bar \gamma'\cap \Delta^+_{\gamma}=\emptyset$ and  $\bar \gamma'\cap \Delta^-_{\gamma}=\emptyset$  in  $(ii)$ and $(iii)$  
can be reformulated as: $\gamma$ and $\gamma'$ do not share edges, and if both $\gamma$ and $\gamma'$ possesses two edges adjacent to one vertex 
$x^*\in (\bbZ^2)^*$ 
 then the two edges in $\gamma$ are linked and so are those in $\gamma'$.

 \begin{figure}[ht]
\begin{center}
\leavevmode
\epsfxsize =5 cm
\epsfbox{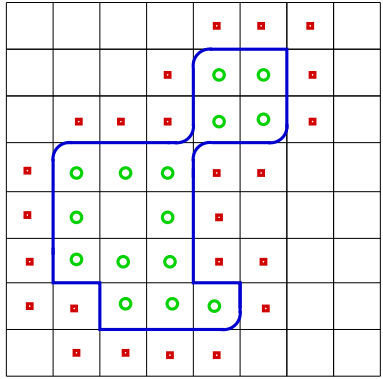}
\end{center}
\caption{\label{compa} 
A contour $\gamma$ represented with its internal (circles) and external (squares) neighborhood.
To be compatible with $\gamma$, a contour $\gamma'$ of the same sign such that $\bar \gamma'\cap \bar \gamma= \emptyset$ cannot enclose any squares.
A compatible contour of opposite sign enclosed in $\gamma$ (such that $\bar \gamma'\subset \bar \gamma$) cannot enclose any circles.
}
\end{figure}

Note the compatibility of two cylinders does not depend on their respective intensity, so that the notion can naturally be extended to  contours:
 The contours  $\gamma$ and $(\gamma'$ are said to be compatible if the cylinders $(\gamma,1)$ and $(\gamma',1)$ are.

\medskip

A collection of cylinders (or of signed contours) 
is said to be a compatible collection if its elements are pairwise compatible.
The reader can check by inspection that the following result holds. In 
particular it establishes that the set of compatible collections of cylinder is in bijection with $\gO_{\gL}$.

\begin{lemma}
If $\gL$ is simply connected, then
for any $\phi\in \gO_{\gL}$, $\hat \Upsilon(\phi)$ is a compatible collection of cylinders
and reciprocally, if $\hat \cA \subset \hat \cC_{\gL}$ is a compatible collection of cylinder in $\gL$
then its elements are the cylinders of the function 
$\sum_{\hat \gamma\in \hat \cA} \varphi_{\hat \gamma}.$
\end{lemma}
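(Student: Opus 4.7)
The plan is to prove the two assertions of the lemma separately, the forward direction being more routine and the reverse direction carrying the bulk of the content.

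For the forward direction, given $\phi \in \gO_{\gL}$, I would verify the three compatibility conditions for any pair of cylinders $(\gamma,k),(\gamma',k')\in\hat\Upsilon(\phi)$. Condition (i) amounts to the topological fact that two distinct level lines of an integer-valued function cannot cross: if $\tilde\gamma$ and $\tilde\gamma'$ shared a dual edge but were distinct as geometric contours, then at the common endpoint of that edge they would have to part ways along two other dual edges, and the linking convention of Figure~\ref{linked} was set up precisely so that each 4-valent dual vertex is resolved into two pairs of linked edges belonging to two distinct non-crossing contours. Conditions (ii) and (iii) are then a local check: if $\bar\gamma'\cap\bar\gamma=\eset$ and $\bar\gamma'\cap\Delta^+_\gamma\neq\eset$ (same sign case), one finds a site $y\in\Delta^+_\gamma\cap\Delta^-_{\gamma'}$ whose height simultaneously satisfies $\phi(y)<\phi(x)$ (from $\gamma$) and $\phi(y)>\phi(x')$ (from $\gamma'$) in an incompatible way; the opposite-sign case (iii) is symmetric.

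For the reverse direction, given a compatible collection $\hat\cA$, I set $\psi:=\sum_{\hat\gamma\in\hat\cA}\varphi_{\hat\gamma}$ and must prove $\hat\Upsilon(\psi)=\hat\cA$. By condition (i), the interiors $\{\bar\gamma:\gamma\in\hat\cA\}$ form a forest under inclusion, so I would proceed by induction on the depth of this forest (equivalently, by peeling off innermost contours). At each step, an innermost cylinder $(\gamma_0,k_0)$ creates on $\bar\gamma_0$ a flat plateau at height $\gep(\gamma_0)k_0$ added to the contribution of its ancestors; conditions (ii) and (iii) guarantee that no other cylinder of $\hat\cA$ has edges adjacent to $\tilde\gamma_0$ in a way that would either cancel a piece of $\tilde\gamma_0$ from the set of discontinuities of $\psi$, or create a spurious extra level line. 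Thus $\gamma_0\in\Upsilon(\psi)$ with the correct intensity; removing it from $\hat\cA$ and subtracting $\varphi_{(\gamma_0,k_0)}$ from $\psi$ reduces to a strictly smaller compatible collection, and the induction closes.

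The use of simple connectedness enters only to rule out the converse inclusion $\hat\Upsilon(\psi)\subset\hat\cA$: in a non-simply connected $\gL$, the field $\psi$ could in principle display a jump along a closed dual curve which encircles a hole of $\gL$ and which would therefore not qualify as an element of $\cC_\gL$. Under the assumption $\gL=\bar\gamma_\gL$, every closed dual loop bounding a jump of $\psi$ lies in $\cC_\gL$, so running the induction in reverse (add cylinders back one by one and check that no new discontinuities of $\psi$ appear beyond those of $\hat\cA$) gives the equality.

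The step I expect to be the main obstacle is the bookkeeping at dual vertices of degree four in the verification that the linking convention is consistent with compatibility condition (ii)-(iii): one must check case by case that the four possible sign/nesting configurations at such a vertex are exactly those permitted by the compatibility rules, and that the resolution of Figure~\ref{linked} is the \emph{unique} one producing pairwise compatible contours. Once this local combinatorial check is done, both directions assemble straightforwardly.
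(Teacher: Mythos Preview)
The paper does not actually prove this lemma: immediately before the statement it says only that ``the reader can check by inspection that the following result holds.'' Your proposal is therefore supplying the omitted verification rather than reproducing an existing argument, and the strategy you outline---checking the three compatibility conditions directly for the forward direction, then proving $\hat\Upsilon(\psi)=\hat\cA$ by inducting on the depth of the inclusion forest of the $\bar\gamma$'s and peeling off innermost cylinders---is correct and is the standard way to carry out this kind of check.

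One minor imprecision: your placement of where simple connectedness enters is slightly off. In the reverse direction every $\hat\gamma\in\hat\cA$ already satisfies $\bar\gamma\subset\gL$ by definition of $\hat\cC_\gL$, so $\psi=\sum\varphi_{\hat\gamma}$ vanishes on all of $\partial\gL$ (including any internal boundary components) and cannot exhibit a jump along a dual curve encircling a hole; the inclusion $\hat\Upsilon(\psi)\subset\hat\cA$ therefore does not rely on simple connectedness. The hypothesis is really used for the bijection stated just before the lemma (and for \eqref{cylinder}): for a general $\phi\in\gO_\gL$ on a multiply connected domain, the level lines of $\phi$ encircling holes are not elements of $\cC_\gL$ and are thus absent from $\hat\Upsilon(\phi)$, so one cannot recover $\phi$ from its cylinder set. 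This is exactly what the remark following the lemma is pointing out. This does not affect the correctness of your argument for the lemma as stated.
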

 
 \begin{rem}
 The above result does not necessary hold when $\gL$ is not a simply connected set, as in that case, 
the set of level lines enclosing holes in $\gL$ need to satisfy some additional requirement beyond compatibility in order to fit the boundary condition.
 \end{rem}
 
\noindent This description of $\phi$ in terms of compatible collections of cylinders gives a convenient description of the measure $\bP_{\gL,\gb}$.

\begin{lemma} \label{geom}
Conditioned to the set of contours  $\Upsilon(\phi)$, 
the intensities of the corresponding cylinders  under $\bP_{\gL,\gb}$ are independent geometric random variables.

\medskip

More precisely if
 $\hat A \in \hat \cC_{\gL}$ is a compatible collection of cylinders and $\cA\subset \cC_{\gL}$ is the corresponding 
 collection of signed contours then 

\begin{equation}
\bP_{\gL,\gb}\left[ \hat \Upsilon(\phi)=\hat \cA \ | \   \Upsilon(\phi)=\cA \right]
= \prod_{(\gamma,k)\in \hat \cA} (1-e^{-\gb k |\tilde \gamma|}) e^{-\gb k |\tilde \gamma|}.
 \end{equation}
\end{lemma}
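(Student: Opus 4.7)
The plan is to exploit the energy decomposition \eqref{express} together with the bijective correspondence between $\phi\in\gO_\gL$ (for $\gL$ simply connected) and compatible collections of cylinders in $\hat\cC_\gL$ established in the preceding lemma. Once these two ingredients are in place, the result becomes a direct calculation.

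First I would fix a compatible collection $\cA$ of signed contours and describe the fiber $\{\phi\,:\,\Upsilon(\phi)=\cA\}$ in terms of cylinders. The key observation is that the compatibility conditions (i)--(iii) given in Section~\ref{contour} involve only the geometric contours $\tilde\gamma$ and their signs $\gep(\gamma)$ and make no reference to intensities whatsoever. Hence a compatible collection of cylinders $\hat\cA$ projecting onto $\cA$ is the same thing as an arbitrary assignment of positive integer intensities $k_\gamma\in\bbN$, one for each $\gamma\in\cA$. In other words, the fiber is in bijection with $\bbN^\cA$.

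Next I would apply \eqref{express} to write, for any $\phi$ with $\Upsilon(\phi)=\cA$ and associated intensities $(k_\gamma)_{\gamma\in\cA}$,
\begin{equation*}
e^{-\gb \cH_\gL(\phi)}=\prod_{\gamma\in\cA} e^{-\gb k_\gamma |\tilde\gamma|}.
\end{equation*}
The conditional probability is then obtained by normalising over $\bbN^\cA$:
\begin{equation*}
\bP_{\gL,\gb}\bigl[\hat\Upsilon(\phi)=\hat\cA \,\big|\, \Upsilon(\phi)=\cA\bigr]
=\frac{\prod_{(\gamma,k)\in\hat\cA} e^{-\gb k |\tilde\gamma|}}{\prod_{\gamma\in\cA}\sum_{k'\ge 1}e^{-\gb k' |\tilde\gamma|}}
=\prod_{(\gamma,k)\in\hat\cA}(1-e^{-\gb|\tilde\gamma|})\,e^{-\gb (k-1)|\tilde\gamma|},
\end{equation*}
where the geometric series $\sum_{k'\ge 1}e^{-\gb k'|\tilde\gamma|}=e^{-\gb|\tilde\gamma|}/(1-e^{-\gb|\tilde\gamma|})$ was summed for each $\gamma$. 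This exhibits the conditional law as a product of geometric distributions of parameter $1-e^{-\gb|\tilde\gamma|}$ on $\bbN$, proving independence and identifying the marginals. (I note in passing that the exponent on the right-hand side of the stated identity looks like it should read $e^{-\gb(k-1)|\tilde\gamma|}$ rather than $e^{-\gb k |\tilde\gamma|}$; the computation above is the natural one.)

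There is no real obstacle: the only point that requires a moment's attention is the factorisation of the compatibility constraint, which is immediate from inspection of the three conditions (i)--(iii) in Section~\ref{contour}. Everything else is the standard Gibbs-measure computation of the conditional distribution of an additive functional whose summands are indexed by a frozen skeleton.
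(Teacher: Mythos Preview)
Your argument is correct and is exactly the intended one: the paper's own proof is the single sentence ``This is a direct consequence of \eqref{express} and of the definition of $\bP_{\gL,\gb}$,'' and you have simply spelled out that consequence. Your observation about the displayed formula is also correct: the right-hand side should read $(1-e^{-\gb|\tilde\gamma|})\,e^{-\gb(k-1)|\tilde\gamma|}$, as confirmed by the computation $\sum_{k\ge 1}e^{-\gb k|\tilde\gamma|}=(e^{\gb|\tilde\gamma|}-1)^{-1}$ used in the proof of Lemma~\ref{restrict}.
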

\begin{proof}
 This is a direct consequence of \eqref{express} and of the definition of $\bP_{\gL,\gb}$.
\end{proof}

To complete the description of the distribution of  $\phi$ in terms of level lines, we must give a description of the distribution of $\Upsilon(\phi)$.
We let $\bQ_{\gL,\gb}$ be the distribution of a random element $\chi\in \cP(\cC_{\gL})$ (the set of parts of $\cC_{\gL}$) under which
the variables $\chi(\gamma):=\ind_{\{\gamma\in \chi\}}$ are independent and 
\begin{equation}\label{defgroq}
\bQ_{\gL,\gb}\left( \gamma\in \chi \right)=e^{-\gb|\tilde \gamma|}. 
\end{equation}

\begin{lemma}\label{restrict}
We have 
\begin{equation}\label{condit}
 \bP_{\gL,\gb}[\hat \cC_{\phi} \in \cdot]:= \bQ_{\gL,\gb}\left[ \chi \in \cdot \ | \  \chi \text{ is a compatible collection } \right].
\end{equation}
In particular the distribution of $\hat\cC_{\phi}$ is stochastically dominated (for the inclusion) by $\hat \bP_{\gL,\gb}$.
 \end{lemma}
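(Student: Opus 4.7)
The plan is to compute both sides of \eqref{condit} explicitly and observe that they agree. First, for a compatible collection of signed contours $\cA \subset \cC_{\gL}$, the bijection between $\phi \in \gO_{\gL}$ and compatible collections of cylinders (valid since $\gL = \gL_N$ is simply connected) together with the factorization \eqref{express} of the Hamiltonian yields that the $\phi$ with $\Upsilon(\phi) = \cA$ are in bijection with intensity assignments $(k_\gamma)_{\gamma \in \cA} \in \bbN^{\cA}$. Hence
$$\bP_{\gL,\gb}[\Upsilon(\phi) = \cA] \;=\; \frac{1}{\cZ_{\gL,\gb}} \sum_{(k_\gamma)_{\gamma \in \cA}} \prod_{\gamma \in \cA} e^{-\gb k_\gamma |\tilde\gamma|} \;=\; \frac{1}{\cZ_{\gL,\gb}} \prod_{\gamma \in \cA} \frac{e^{-\gb |\tilde\gamma|}}{1 - e^{-\gb |\tilde\gamma|}},$$
where each intensity is summed as an independent geometric series. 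Summing this identity over all compatible $\cA$ also provides the alternative expression $\cZ_{\gL,\gb} = \sum_{\cA \text{ compatible}} \prod_{\gamma \in \cA} \frac{e^{-\gb|\tilde\gamma|}}{1 - e^{-\gb|\tilde\gamma|}}$, which will be matched below.

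On the right-hand side, $\bQ_{\gL,\gb}$ being a product of Bernoulli variables with parameters $e^{-\gb|\tilde\gamma|}$ (see \eqref{defgroq}), one has the clean factorization $\bQ_{\gL,\gb}[\chi = \cA] = C \prod_{\gamma \in \cA} \frac{e^{-\gb|\tilde\gamma|}}{1 - e^{-\gb|\tilde\gamma|}}$ with $C := \prod_{\gamma \in \cC_{\gL}} (1 - e^{-\gb|\tilde\gamma|})$ a constant independent of $\cA$. The constant $C$ cancels between numerator and denominator when forming $\bQ_{\gL,\gb}[\chi = \cA \mid \chi \text{ compatible}]$, and the resulting denominator reduces to $\cZ_{\gL,\gb}$ by the rewriting above. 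The two sides of \eqref{condit} therefore coincide term by term.

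For the stochastic domination, the essential observation is that the event $\{\chi \text{ is a compatible collection}\}$ is \emph{decreasing} for set inclusion, since removing contours from a compatible collection can only preserve compatibility. Because $\bQ_{\gL,\gb}$ is a Bernoulli product measure, the FKG-Harris inequality yields that conditioning on any decreasing event produces a measure stochastically dominated by the unconditioned one, which is precisely the asserted domination.

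I expect no real obstacle here: the proof amounts to unpacking definitions once the cylinder/contour bijection and the factorization \eqref{express} of the Hamiltonian are in hand. The only implicit subtlety is the use of simple connectedness of $\gL$, which is immediate for $\gL_N = \lint 1, N\rint^2$; for a general $\gL$ this step would require additional care since level lines might encircle holes and extra global constraints would enter the compatibility notion.
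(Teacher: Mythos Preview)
Your proof is correct and follows essentially the same approach as the paper: both sum over intensities using \eqref{express} to obtain the weight $\prod_{\gamma\in\cA}\frac{e^{-\gb|\tilde\gamma|}}{1-e^{-\gb|\tilde\gamma|}}$ for a compatible collection $\cA$, identify this with the Bernoulli product $\bQ_{\gL,\gb}$ conditioned on compatibility, and then invoke FKG--Harris on the decreasing event $\{\chi\text{ compatible}\}$ for the domination. Your version is simply a bit more explicit about the normalization constants and the role of simple connectedness.
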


\begin{proof}
Equation \eqref{condit}  can be deduced from \eqref{express}:
After summing over the intensities, we obtain that the probability of observing a given collection of signed contour 
$\cA$ is proportional to
$$ \prod_{(\gamma,\gep)\in \cA} \sum_{k\ge 1} e^{-\gb k |\tilde \gamma|}=\frac{1}{ e^{\gb|\tilde \gamma|}-1}.$$
This corresponds to the relative weights of a random the set of contour is given by independent Bernouilli variables with parameter satisfying 
$p_{\gamma}/(1-p_{\gamma})= (e^{\gb|\tilde \gamma|}-1)^{-1}.$

\medskip

We observe that $\{ \chi \text{ is a compatible collection}\}$ is a decreasing event for the inclusion. 
Hence the stochastic domination follows immediately from the FKG-Harris inequality (recall \eqref{FKG} and the paragraph below it).

\end{proof}

\subsection{Exponential decay  and asymptotics for peak probabilities} \label{Peierls}

The probability for extremal events can be estimated by evaluating the probability for the corresponding contour realization.
This type of argument dates back to \cite{cf:Pei}.
The result we present here is an extension of  \cite[Proposition 3.9]{cf:PLMST2} which only treats the case of single peaks. 
The proof of our generalization uses the same ideas, and appears  in Appendix \ref{ppeaks} for the sake of completeness. 
\begin{proposition}\label{rourou}
For any $\gb$ satisfying condition \eqref{finite}, there exists a constant $C(\gb)$ such that  for any $n\ge 1$ any $\gL$ and any triple of distinct vertices $(x,y,z)$
such that $x\sim y \sim z$ we have 
\begin{equation}
 \begin{split}
 \frac{1}{2}e^{-4\gb n}&\le  \bP_{\gL,\gb}[\phi(x)\ge n]\le C e^{-4\gb n},\\
  \frac{1}{4}e^{-6\gb n}& \le \bP_{\gL,\gb}[\min(\phi(x),\phi(y)) \ge n]\le C e^{-6\gb n},\\
   \frac{1}{8}e^{-8\gb n}&\le  \bP_{\gL,\gb}[\min(\phi(x),\phi(y),\phi(z))\ge n]\le C e^{-8\gb n}.
 \end{split}
\end{equation}
\end{proposition}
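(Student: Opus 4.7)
I would attack all six inequalities through two complementary devices: an elementary shift of the field for the lower bounds, and a Peierls-type expansion over positive contours for the upper bounds. Throughout, write $S\subset \gL$ for the target set ($\{x\}$, $\{x,y\}$ or $\{x,y,z\}$, with $|S|=k\in\{1,2,3\}$). In each case $S$ is connected in $\bbZ^2$, and its edge boundary satisfies $|\partial_E S|=4,6,8$ respectively, matching the exponents $4\gb n$, $6\gb n$, $8\gb n$ in the statement.

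\textbf{Lower bounds.} The plan is to compare $\{\phi(w)\ge n\ \forall w\in S\}$ with $\{\phi(w)\ge 0\ \forall w\in S\}$ through the bijection $T\colon \gO_\gL\to\gO_\gL$, $T\phi=\phi+n\ind_S$. Edges internal to $S$ are inert under $T$ (both endpoints shift by $n$), while for each of the $|\partial_E S|$ edges with exactly one endpoint in $S$ the triangle inequality gives $\bigl||\phi(w)+n-\phi(w')|-|\phi(w)-\phi(w')|\bigr|\le n$. Summing, $\cH_\gL(T\phi)-\cH_\gL(\phi)\le n|\partial_E S|$. A change of variable $\psi=T^{-1}\phi$ in the partition-function sum therefore yields
\begin{equation*}
\bP_{\gL,\gb}[\phi(w)\ge n\ \forall w\in S]\ge e^{-\gb n|\partial_E S|}\,\bP_{\gL,\gb}[\phi(w)\ge 0\ \forall w\in S].
\end{equation*}
To finish I would invoke the symmetry $\phi\mapsto-\phi$ (preserved by $\bP_{\gL,\gb}$ since the boundary condition is $0$) to deduce $\bP_{\gL,\gb}[\phi(w)\ge 0]\ge 1/2$ for each individual $w$, and then apply the FKG inequality from Section \ref{monot} to the increasing events $\{\phi(w)\ge 0\}$. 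This produces the factor $2^{-k}\in\{1/2,1/4,1/8\}$, matching the stated constants.

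\textbf{Upper bounds.} Here the strategy is the contour decomposition of Section \ref{contour}. For each level $m\in\{1,\dots,n\}$, if $\phi(w)\ge n\ge m$ for all $w\in S$, then $S$ lies in a single connected component of $\{\phi\ge m\}$ (by connectedness of $S$), and the positive contour bounding this component encloses $S$. Translated to cylinders, $\{\phi(w)\ge n\ \forall w\in S\}$ implies $\sum_{(\gamma,k)\in\hat\Upsilon(\phi),\,\bar\gamma\supseteq S,\,\gep(\gamma)=+}k\ge n$. Lemma \ref{restrict} stochastically dominates the contour set by the Bernoulli measure $\bQ$, and given the contours Lemma \ref{geom} provides independent geometric intensities; together these yield $\bP_{\gL,\gb}[(\gamma_i,k_i)\in\hat\Upsilon\ \forall i]\le \prod_i e^{-\gb k_i|\tilde\gamma_i|}$. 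A union bound over witnessing compatible cylinder configurations then gives
\begin{equation*}
\bP_{\gL,\gb}[\phi(w)\ge n\ \forall w\in S]\le \sum_{\{(\gamma_i,k_i)\}:\,\bar\gamma_i\supseteq S,\,\gep_i=+,\,\sum_i k_i\ge n}\ \prod_i e^{-\gb k_i|\tilde\gamma_i|}.
\end{equation*}
Since every $\gamma$ with $\bar\gamma\supseteq S$ satisfies $|\tilde\gamma|\ge|\partial_E S|$, the bulk factor $e^{-\gb|\partial_E S|n}$ pulls out of the product.

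\textbf{Main obstacle.} The delicate step is bounding the leftover combinatorial sum by a finite constant $C(\gb)$, uniformly in $n$. The difficulty is that only finitely many cylinder shapes achieve the minimum perimeter $|\partial_E S|$ (one for $|S|=1,2$, at most two for $|S|=3$, counting the $2\times 2$ square in the L-tromino case), and on such cylinders the factor $e^{-\gb(|\tilde\gamma|-|\partial_E S|)k}=1$ offers no decay in the intensity $k$. I would therefore split the sum according to the intensity $k_0$ of the minimum-perimeter cylinder(s) and analyze two regimes: when $k_0\ge n$ the remaining cylinders are unrestricted and can be summed into a convergent product using \eqref{finite}; when $k_0<n$ the ``outer'' cylinders of strictly larger perimeter (at least $|\partial_E S|+2$, since $\bbZ^2$-contours have even length) must absorb the deficit $n-k_0$, and a geometric-series estimate shows that this sub-case also contributes at most $C(\gb)e^{-\gb|\partial_E S|n}$. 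The finiteness of the product of $(1-e^{-\gb|\tilde\gamma|})^{-1}$ over the remaining larger contours is exactly what assumption \eqref{finite} is designed to supply.
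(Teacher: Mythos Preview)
Your proposal is correct and follows essentially the same route as the paper: the lower bounds via the shift $\phi\mapsto\phi+n\ind_S$, symmetry, and FKG are identical to the paper's argument, and the upper bounds via nested positive contours enclosing $S$ with total intensity at least $n$, combined with Lemmas~\ref{geom}--\ref{restrict}, match the paper's Peierls computation. The only cosmetic difference is in the bookkeeping of the residual combinatorial sum: the paper organizes it by fixing the intensities $m_1,m_2$ carried on the (at most two) strictly nested contours of minimal perimeter and summing compositions of $n-m_1-m_2$ over the remaining contours of perimeter $\ge|\partial_E S|+2$, which is exactly the mechanism behind your $k_0\ge n$ versus $k_0<n$ split.
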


Using Theorem \ref{infinitevol} we deduce from Proposition \ref{rourou} the exact asymptotic behavior for the probability of have high peaks of size one and two
(recall the definitions \eqref{limzits} and \eqref{defq}). The details of the proof are also given in Appendix \ref{ppeaks}.

\begin{proposition}\label{sendingout}
 For $\gb>\gb_0$ (given by Theorem \ref{infinitevol})  there exist positive constants $\alpha_1$, $\alpha_2$,
 $c$ and $C$  such that 
 for any $n\ge 1$, $\gL\subset \bbZ$ and $x,y\in \in \gL$ with $x\sim y$ we have 
 \begin{equation}\label{qtronz}
 \begin{split}
  \left|\bP_{\gL,\gb}[\phi(x)\ge n]- \alpha_1 J^{2n} \right|&\le C \left(J^{3n}+ 
  e^{-c  d(x,\partial\gL)} \right),\\
  \left|\bP_{\gL,\gb}[\min(\phi(x),\phi(y))\ge n]- \alpha_2 J^{3n}\right|&\le C \left( nJ^{4n}+ 
  e^{-c  d(x,\partial\gL)} \right),
  \end{split}
 \end{equation}
where $d(x,\partial\gL)$ is defined in  \eqref{disset}. 
We also have (recall \eqref{defq})
 \begin{equation}\label{qtronzac}
 \begin{split}
  \left|\bP_{\gL,\gb}[q(\phi,x,n)=1]- \alpha_1 J^{2n} \right|&\le C \left(J^{3n}+ 
  e^{-c  d(x,\partial\gL)} \right),\\
  \left|\bP_{\gL,\gb}[q(\phi,x,n)=2]- 4\alpha_2 J^{3n}\right|&\le C \left( nJ^{4n}+ 
  e^{-c  d(x,\partial\gL)} \right).
  \end{split}
 \end{equation}
 Both results remain  valid for $\bP_{\gb}$ (only the first error term being kept in that case).
\end{proposition}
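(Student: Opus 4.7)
The plan is to prove the asymptotics first for the infinite-volume measure $\bP_\gb$ (from which $\alpha_1$ and $\alpha_2$ are defined in \eqref{singledouble}), and then transfer them to $\bP_{\gL,\gb}$ using Theorem \ref{infinitevol}. The transfer step is immediate: applied to the local indicators $\ind_{\{\phi(x)\ge n\}}$ and $\ind_{\{\min(\phi(x),\phi(y))\ge n\}}$, each of $L^\infty$-norm $1$ and support size at most $2$, Theorem \ref{infinitevol} gives
$$\left|\bP_{\gL,\gb}[\,\cdot\,]-\bP_\gb[\,\cdot\,]\right|\le 8\, e^{-d(x,\partial\gL)},$$
which is absorbed in the $e^{-cd(x,\partial\gL)}$ term. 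All substantive work thus concerns the infinite-volume asymptotics.

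For $\bP_\gb[\phi(x)\ge n]$, I would exploit the cylinder representation \eqref{cylinder}, under which $\phi(x)$ is the signed sum of intensities of cylinders enclosing $x$. The dominant contribution comes from a single positive cylinder $\gamma_x$ of minimal length $4$ (the unit square around $x$) of intensity $\ge n$, with no other contour enclosing $x$. Using Lemmas \ref{restrict} and \ref{geom}, conditional on this configuration of contours at $x$, the intensity of $\gamma_x$ is geometric with parameter $1-J^2$, and its contribution factorizes as $J^{2(n-1)}$ times a factor that encodes the (absence of) other cylinders enclosing $x$ and the conditional law of the remaining contours. A dominated convergence argument, justified by the Peierls-type bounds of Proposition \ref{rourou} and by the summability hypothesis \eqref{finite}, shows that this residual factor converges as $n\to\infty$, yielding both the existence of
$$\alpha_1=\lim_{n\to\infty} e^{4\gb}\,\bP_\gb[\phi(x)\ge n]$$
and the leading asymptotic $\alpha_1 J^{2n}$.

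The error term $O(J^{3n})$ absorbs three competing subleading contributions: (i) a single positive enclosing contour of length $\ge 6$, bounded by $e^{-6\gb n}=J^{3n}$; (ii) two or more positive enclosing contours whose intensities sum to $\ge n$, which after enumerating the intensity splittings contribute $O(J^{3n})$ by the Peierls estimates of Proposition \ref{rourou}; and (iii) configurations containing a negative enclosing contour around $x$, which force compensating positive intensity and again cost at least $J^{3n}$. The estimate for $\bP_\gb[q(\phi,x,n)=1]$ follows by the same argument, since a cluster of size exactly $1$ in $\phi^{-1}[n,\infty)$ corresponds essentially to the single-length-$4$ contour scenario, with sub-leading corrections of order $J^{3n}$. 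For the two-site event, the same approach applies with $\gamma_x$ replaced by the length-$6$ contour enclosing the domino $\{x,y\}$, producing $\alpha_2 J^{3n}$ as the leading term. The error $O(nJ^{4n})$ picks up length-$\ge 8$ enclosing contours (contributing $J^{4n}$) and multi-contour configurations with two positive contours whose intensities add to $\ge n$, where the linear factor $n$ arises from the enumeration of intensity splittings. The factor $4$ in the $q(\phi,x,n)=2$ estimate reflects the four choices of neighbor of $x$ forming the enclosing length-$6$ domino.

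The main obstacle is uniform control, as $n\to\infty$, of the conditional sums over compatible multi-contour configurations arising in (ii)--(iii). To legitimize the dominated convergence argument one needs absolute convergence of
$$\sum_{\{\gamma\,:\,{\bf 0}\in\bar\gamma\}} e^{-\gb|\tilde\gamma|},$$
which is precisely condition \eqref{finite} guaranteed by $\gb>\gb_1$, together with bounds on the associated conditional expectations that are independent of $n$; the Peierls estimates of Proposition \ref{rourou} and the stochastic domination provided by Lemma \ref{restrict} supply the required uniform envelopes.
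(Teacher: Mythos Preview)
Your reduction to the infinite-volume measure via Theorem \ref{infinitevol} matches the paper's first step. After that, however, the paper takes a different and shorter route. Rather than enumerating contour configurations directly, it uses the shift map $\Theta_A(\phi)=\phi+\ind_A$ (already introduced in the proof of Proposition \ref{rourou}) to show that $e^{6\gb n}\bP_\gb[\cA_n]$, with $\cA_n=\{\min(\phi(x),\phi(y))\ge n\}$, is non-decreasing in $n$; combined with the uniform upper bound from Proposition \ref{rourou}, monotone convergence immediately yields the existence of $\alpha_2$ together with the one-sided bound $\bP_\gb[\cA_n]\le\alpha_2 J^{3n}$. For the matching lower bound the paper notes that the inequality $\cH_\gL(\Theta_{\{x,y\}}(\phi))\le\cH_\gL(\phi)+6$ is an \emph{equality} on the event $\cB_n=\{\phi(z)\le n\text{ for all }z\in\partial\{x,y\}\}$, which gives the one-step recursion $\bP_\gb[\cA_n]\ge e^{6\gb}\bP_\gb[\cA_{n+1}]-CnJ^{4n}$, the defect $\bP_\gb[\cA_{n+1}\cap\cB_n^{\cc}]$ being controlled by the three-site estimate of Proposition \ref{rourou}. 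Iterating and summing the geometric tail produces the error $O(nJ^{4n})$. The bounds \eqref{qtronzac} for $q(\phi,x,n)$ are then derived by inclusion--exclusion over the four neighbors of $x$.

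Your contour-enumeration approach should also go through and has the merit of making the leading mechanism (the minimal cylinder around $x$, respectively the domino $\{x,y\}$) explicit. But the paper's monotonicity argument is considerably shorter: existence of the limit is immediate, and the error term comes from a single recursion rather than from cataloguing and bounding all subleading multi-contour configurations. The ``dominated convergence'' step you invoke for the residual factor is the thinnest point of your sketch; making it rigorous essentially reproduces the multi-contour summation already performed in the proof of Proposition \ref{rourou}, whereas the paper's route avoids repeating that work entirely.
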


\section{Identification of the critical point and rough lower bound} \label{eloweb}

In this section, we prove a lower bound on the free energy which has the right order of magnitude, but is not sharp up to constant.
In particular this lower bound combined with the observation made below \eqref{youpee}, allows to 
identify the value of $h_w(\gb)$.

\begin{proposition}\label{activiax}
Assume that $\gb$ satisfies condition \eqref{finite}.
Then we have $$h_w(\gb)=\log \left(\frac{e^{4\gb}}{e^{4\gb}-1}\right)$$ and 
there exists a constant $c_{\gb}>0$ such that for all $u\in (0,1]$, we have  
\begin{equation}\label{ordretroi}
\bar \tf(\gb,u)\ge c_{\gb} u^3.
\end{equation}
\end{proposition}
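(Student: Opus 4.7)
The strategy mirrors the heuristic of Section \ref{houra}: we lower-bound $\cZ^{h}_{N,\gb}$ by shifting the boundary condition from $0$ to a well-chosen level $n\in\bbZ_+$, and then exploit the fact that under $\bP^{n}_{\gL_N,\gb}$ the set $\phi^{-1}(\bbZ_-)$ consists overwhelmingly of isolated sites, each contributing a bonus $u$ while paying no $\bar H$ cost. Combined with the Chalker inequality recalled in \eqref{chalbound}, any strict positivity $\bar\tf(\gb,u)>0$ for every $u>0$ immediately identifies $h_w(\gb)=\log(e^{4\gb}/(e^{4\gb}-1))$.

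I would first extend Lemma \ref{represent} to arbitrary boundary condition: running the identical $\phi=\phi_+-\phi_-$ splitting (and using $\cZ^{n}_{\gL,\gb}=\cZ_{\gL,\gb}$ by translation invariance) yields
\[
\frac{\cZ^{h,n}_{\gL,\gb}}{\cZ_{\gL,\gb}}=\bE^{n}_{\gL,\gb}\Big[\exp\Big(u\,|\phi^{-1}(\bbZ_-)|-\bar H(\phi^{-1}(\bbZ_-))\Big)\Big].
\]
On the other hand $|\cH_N(\phi)-\cH^{n}_N(\phi)|\le n|\partial \gL_N|=O(nN)$ pointwise, so $\log \cZ^{h}_{N,\gb}\ge \log \cZ^{h,n}_{N,\gb}-\gb n \cdot O(N)$, an error that vanishes after division by $N^2$.

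Next, Jensen's inequality $\log\bE[e^Y]\ge\bE[Y]$ combined with the reflection $\psi:=n-\phi$ (whose law under $\bP^{n}_{\gL,\gb}$ is $\bP_{\gL,\gb}$, since $\cH^{n}(\phi)=\cH(\psi)$) reduces the problem to lower bounding
\[
u\,\bE_{N,\gb}\big[|\psi^{-1}[n,\infty)|\big]-\bE_{N,\gb}\big[\bar H(\psi^{-1}[n,\infty))\big].
\]
The first inequality of Proposition \ref{rourou} gives $\bE_{N,\gb}[|\psi^{-1}[n,\infty)|]\ge \tfrac12 J^{2n}N^2$. Since $\bar H$ is additive over connected components, decomposing $\psi^{-1}[n,\infty)$ accordingly and invoking Lemma \ref{lehagga} (which yields $\bar H\{x\}=0$ and $\bar H(\gG)\le c_2|\gG|$ for connected $\gG$ with $|\gG|\ge 2$) produces
\[
\bar H(\psi^{-1}[n,\infty))\le c_2\,\big|\{x:\psi(x)\ge n,\ \exists y\sim x,\ \psi(y)\ge n\}\big|,
\]
whose expectation is at most $4c_2 C J^{3n}N^2$ by the second inequality of Proposition \ref{rourou}.

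Dividing by $N^2$ and sending $N\to\infty$ leaves the family of lower bounds
\[
\bar\tf(\gb,u)\ge \tfrac12 u J^{2n}-4c_2 C J^{3n},\qquad n\in \bbZ_+.
\]
Choosing $n$ to be the smallest integer with $J^n\le u/(16 c_2 C)$ forces the penalty to be at most a quarter of the reward and yields $\bar\tf(\gb,u)\ge c_\gb u^3$ for every $u\in(0,1]$, with $c_\gb=J^2/(4(16c_2 C)^2)$. Together with \eqref{chalbound} this equality identifies $h_w(\gb)$. The only delicate point is that condition \eqref{finite} is what makes the peak estimates of Proposition \ref{rourou} uniform in $\gL$; once those are granted the remainder is Jensen together with a one-variable optimization.
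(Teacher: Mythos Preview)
Your argument is correct and matches the paper's proof essentially line for line: Lemma~\ref{ojko} is exactly the boundary-shift-plus-reflection identity you derive (the paper changes the measure after using \eqref{youpee} rather than extending Lemma~\ref{represent} first, but the outcome is identical), and the remainder---Jensen, the bound $\bar H(\gG)\le c_2\sum_x\ind_{\{q(\phi,x,n)\ge 2\}}$ from Lemma~\ref{lehagga}, the two peak estimates of Proposition~\ref{rourou}, and the optimization over $n$---is precisely Section~\ref{izigo}. One tiny slip: with your choice of $n$ the penalty is at most \emph{half} (not a quarter) of the reward, but this still yields the stated constant $c_\gb=J^2/(4(16c_2C)^2)$.
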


The section is organized as follows:
In Section \ref{recodex} we turn into a rigorous statement the claim made in Section \ref{houra} about the change of boundary condition,
and we use it in Section \ref{izigo} to prove Proposition \ref{activiax}.
The proof of the sharper bound, developed in Section \ref{nozigo} also builds on the ideas exposed in the present section.

\subsection{Changing the boundary condition and rewriting the partition function}\label{recodex}

We are going to work with the following alternative characterization of the free energy.

\begin{lemma}\label{ojko}
 
 For any $n\ge 1$, we have 
 \begin{equation}\label{lefromage}
 \bar \tf(\gb,u)=\lim_{N\to \infty} \frac{1}{N^2} \log \bE_{N,\gb} \left[ e^{u |\phi^{-1}[n,\infty)|- \bar H(\phi^{-1}[n,\infty))} \right].
 \end{equation}
\end{lemma}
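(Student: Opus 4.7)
The plan is to build a direct bridge from the right-hand side of \eqref{lefromage} back to the identity \eqref{youpee} through two elementary transformations of the field (a reflection and an integer translation), followed by a boundary-condition swap whose cost is negligible on the free-energy scale. My starting point is \eqref{youpee} itself: combined with the definition of $\bar\tf$ and the value $h_w(\gb)=-\log(1-J^2)$ it yields
$$\bar\tf(\gb,u)=\lim_{N\to\infty}\frac{1}{N^2}\log \bE_{N,\gb}\!\left[e^{u|\phi^{-1}(\bbZ_-)|-\bar H(\phi^{-1}(\bbZ_-))}\right].$$

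The first move I will make is the reflection $\phi\mapsto -\phi$, under which $\cH_N$ is invariant (both the bulk increment $|\phi(x)-\phi(y)|$ and the boundary term $|\phi(x)|$ are even functions of $\phi$); this involution sends $\phi^{-1}(\bbZ_-)=\{\phi\le 0\}$ to $\phi^{-1}[0,\infty)=\{\phi\ge 0\}$, so the expectation above equals $\bE_{N,\gb}\!\left[e^{u|\phi^{-1}[0,\infty)|-\bar H(\phi^{-1}[0,\infty))}\right]$. The second move is the integer translation $\phi\mapsto\phi+n$, which is a bijection of $\gO_{\gL_N}$ transforming $\cH_N$ into $\cH^n_N$ and the set $\phi^{-1}[0,\infty)$ into $\phi^{-1}[n,\infty)$; reindexing the sum defining the expectation gives
$$\bE_{N,\gb}\!\left[e^{u|\phi^{-1}[0,\infty)|-\bar H(\phi^{-1}[0,\infty))}\right]=\bE^{\,n}_{N,\gb}\!\left[e^{u|\phi^{-1}[n,\infty)|-\bar H(\phi^{-1}[n,\infty))}\right].$$

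It then remains to swap the boundary condition $n$ back to $0$ on the right-hand side. From \eqref{defhamil} combined with the elementary inequality $||a-n|-|a||\le n$, I obtain the pointwise estimate $|\cH^{\,n}_N(\phi)-\cH_N(\phi)|\le 4nN$ uniform in $\phi$; since $\cZ_{N,\gb}$ is independent of the boundary height, this produces the ratio bound
$$e^{-4\gb nN}\,\bE_{N,\gb}[F(\phi)]\le \bE^{\,n}_{N,\gb}[F(\phi)]\le e^{4\gb nN}\,\bE_{N,\gb}[F(\phi)],$$
valid for every nonnegative functional $F$. For a fixed $n$ the resulting additive error $4\gb n/N$ to the log-expectation per unit area vanishes as $N\to\infty$, and chaining this with the previous display delivers \eqref{lefromage}.

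I do not expect a serious obstacle in executing this plan: the argument is really the same mechanism that makes $\cZ_{N,\gb}$ itself boundary-independent. Changing the boundary data by $n$ perturbs $\cH_N$ only across an $O(N)$ set of edges and hence perturbs $\log\bE_{N,\gb}[F(\phi)]$ by at most $O(nN)$, which is negligible at the volume scale $N^2$.
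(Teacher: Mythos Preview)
Your proof is correct and essentially the same as the paper's: both use the symmetry $\phi\mapsto n-\phi$ (which you split into a reflection and a translation) together with the boundary-swap estimate $|\cH^n_N-\cH_N|\le 4nN$, merely applying these two ingredients in the opposite order. One cosmetic point: do not invoke ``$h_w(\gb)=-\log(1-J^2)$'' as a known fact, since this lemma is used to prove exactly that; the shift $-\log(1-J^2)$ is simply the one appearing in the definition of $\bar\tf$.
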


\begin{proof}

Recalling the definition \eqref{defhamil}, we have for every $\phi\in \gO_N$ 
$$ |\cH^n_N(\phi)-\cH_N(\phi)|\le  4nN.$$ 
As a consequence we have 
$$e^{-4nN\gb}\le \frac{\dd \bP^n_{N,\gb}}{\dd \bP_{N,\gb}}(\phi)\le e^{4nN\gb},$$
and thus recalling \eqref{youpee} we obtain that 
\begin{equation} \label{lacomparade}
  \left| \log \frac{\cZ^{u-\log(1-J^2)}_{N,\gb}}{\cZ_{N,\gb}}- \log \bE^n_{N,\gb} \left[ e^{u |\phi^{-1}(\bbZ_{-})|- \bar H(\phi^{-1}(\bbZ_{-}))} \right] \right| \le
  4nN \gb,
\end{equation}
Finally, noticing that by symmetry $\bP^n_{N,\gb}[ n-\phi \in \cdot]=\bP_{N,\gb}[ \phi \in \cdot]$ we have
\begin{equation}
\bE^n_{N,\gb} \left[ e^{ u |\phi^{-1}(\bbZ_{-})|- \bar H(\phi^{-1}(\bbZ_{-}))} \right]
 =\bE_{N,\gb} \left[ e^{u |\phi^{-1}[n,\infty)|- \bar H(\phi^{-1}[n,\infty)} \right],
\end{equation}
and thus 
\begin{multline}
 \lim_{N\to \infty} \frac{1}{N^2} \log \bE_{N,\gb} \left[ e^{u |\phi^{-1}[n,\infty)|- \bar H(\phi^{-1}[n,\infty))} \right]\\
 =  \lim_{N\to \infty} \frac{1}{N^2}  \log \frac{\cZ^{u-\log(1-J^2)}_{N,\gb}}{\cZ_{N,\gb}}=\bar \tf(\gb,u).
 \end{multline}
 \end{proof}

\subsection{Proof of Proposition \ref{activiax}} \label{izigo}

\noindent Note that Lemma \ref{ojko} implies, by Jensen's inequality, that for every $n$ 
\begin{equation}\label{jensenox}
  \bar \tf(\gb,u)\ge \limsup_{N\to \infty} \frac{1}{N^2}\bE_{N,\gb} \big[ u |\phi^{-1}[n,\infty)|- \bar H(\phi^{-1}[n,\infty)) \big].
\end{equation}
To prove a lower bound on    $\bar \tf$, we can replace the term  $\bar H(\phi^{-1}[n,\infty))$ by an effective lower bound.
Recall the definition \eqref{defq} and set 
for any $i\ge 1$
\begin{equation}
\begin{split}
f^n_i(\phi):=\#\{ x\in \gL_N \ : \ q(\phi,x,n)=i \}, \\
f^n_{i+}(\phi):=\#\{ x\in \gL_N \ : \ q(\phi,x,n)\ge i \}.
\end{split}
\end{equation}
Using Lemma \ref{lehagga}, we have for any $u\ge 0$  
\begin{equation}\label{daabaa}
 u |\phi^{-1}[n,\infty)|- \bar H(\phi^{-1}[n,\infty))\ge u f^n_{1+}(\phi)- c_2(\gb) f^n_{2+}(\phi).
\end{equation}
Thus from \eqref{jensenox}, we obtain
\begin{equation}\label{jaxx}
\bar \tf(\gb,h)\ge \limsup_{N\to \infty}\frac{1}{N^2} \bE_{N,\gb}\left[u f^n_{1+}(\phi)- c_2(\gb) f^n_{2+}(\phi) \right].
\end{equation}
Using Proposition \ref{rourou} we immediately obtain
\begin{equation}\label{zeone}
\frac{1}{N^2} \bE_{N,\gb}\left[f^n_{1+}(\phi)\right]\ge \inf_{x\in \gL_N}  \bP_{N,\gb}[\phi(x)\ge n] \ge \frac{1}{2}J^{2n}.
\end{equation}
Now $f^n_{2+}(\phi)$ can be bounded above by twice the number of couple of neighboring points above level $n$.
As there are $2(N-1)N$ edges in $\gL_N$ we obtain that using Proposition \ref{rourou} that
\begin{equation}\label{zetoo}
\frac{1}{N^2} \bE_{N,\gb}\left[f^n_{2+}(\phi)\right]\le  \frac{4N(N-1)}{N^2}
\sup_{\{x,y\in \gL_n \ : \ x\sim y\}}  \!\!\!\!\!  \bP_{N,\gb}[\min(\phi(x),\phi(y))\ge n] \ge  C J^{3n}.
\end{equation}
Thus combining \eqref{jaxx},\eqref{zeone} and \eqref{zetoo}, and changing the value of $C$, we obtain
\begin{equation}
 \bar \tf(\gb,h)\ge \frac{1}{2}J^{2n}u- C J^{3n},
\end{equation}
and as $n\ge 1$ is arbitrary the result follows after optimizing over $n$ (that is, we must choose $J^n$ of order $u$ times a small constant).
\qed

 \section{The sharp lower bound on the free energy}\label{nozigo}
 
We explain now how to improve the method developed in the previous section in order to obtain an optimal upper bound.
 
 \begin{proposition}\label{activriax}
For $\gb>\gb_0$ (of Theorem \ref{infinitevol}), there exists  a constant $C$ such that for every $\gb\ge \gb_0$, for every $u\in(0,1]$
\begin{equation}\label{ordretdroi}
\bar \tf(\gb,h)\ge F(\gb,u)- C |\log u| u^{24/7}
\end{equation}
\end{proposition}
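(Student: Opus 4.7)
The plan is to start from the representation provided by Lemma \ref{ojko}, applied with the integer $n=n(u)$ that maximizes $F(\gb,u)$:
\begin{equation*}
\bar \tf(\gb,u)=\lim_{N\to \infty} \frac{1}{N^2} \log \bE_{N,\gb} \left[ e^{u |\phi^{-1}[n,\infty)|- \bar H(\phi^{-1}[n,\infty))} \right].
\end{equation*}
By Jensen's inequality applied to the convex function $e^{(\cdot)}$, this is bounded below by $\liminf_N \frac{1}{N^2} \bE_{N,\gb}[u|\phi^{-1}[n,\infty)|-\bar H(\phi^{-1}[n,\infty))]$. The proof then reduces to computing this linear functional of the field sharply and controlling the errors, so that the discrete maximum over $n$ produces $F(\gb,u)$ up to the claimed correction.

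To handle $\bar H$, I would decompose it over the connected components of $\phi^{-1}[n,\infty)$: Lemma \ref{lehagga} gives zero cost per singleton, cost $\log((1-J^4)/(1-J^3))$ per pair, and cost at most $c_2|\gG|$ for components of size $\geq 3$. Using $\log(1+y) \leq y$ to pass from the logarithm to the explicit rational expression appearing in $F(\gb,u)$, this yields
\begin{equation*}
\bar H(\phi^{-1}[n,\infty)) \;\leq\; \frac{J^3-J^4}{1-J^3}\, N_2(\phi,n) + c_2\, f^n_{3+}(\phi),
\end{equation*}
where $N_2$ counts the two-element clusters and $f^n_{3+}$ counts vertices sitting inside a component of size at least three. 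Expectations are then handled by Proposition \ref{sendingout}: $\bE_{N,\gb}[|\phi^{-1}[n,\infty)|]/N^2 = \alpha_1 J^{2n} + O(J^{3n}+1/N)$ and $\bE_{N,\gb}[N_2]/N^2 = 2\alpha_2 J^{3n}+O(nJ^{4n}+1/N)$ (using that a vertex has $q(\phi,x,n)=2$ iff it belongs to a pair-cluster and dividing by two), while Proposition \ref{rourou} yields $\bE_{N,\gb}[f^n_{3+}]/N^2 = O(J^{4n})$ upon summing the triple-peak estimate over the $O(1)$ connected three-vertex shapes containing a given vertex.

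Plugging these asymptotics in produces
\begin{equation*}
\bar\tf(\gb,u) \;\geq\; \alpha_1 J^{2n}u - \frac{2\alpha_2(J^3-J^4)}{1-J^3} J^{3n} - O\!\left(uJ^{3n} + nJ^{4n}\right),
\end{equation*}
i.e.\ the bracket inside the maximum defining $F(\gb,u)$ at scale $n$. Since one may pick $n=n(u)$ as in the remark after Theorem \ref{main} so that $J^n \asymp u$ and the integer part of the theoretical optimizer is retained, the main term equals $F(\gb,u)$ up to an $O(1)$ multiplicative factor that disappears in the final statement, and the residual errors pick up a $|\log u|$ factor from $nJ^{4n}$. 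A refinement of the $c_2 f^n_{3+}$ bound, in which clusters of size $\geq 3$ are truncated at some auxiliary scale and the tail is handled by a direct Peierls argument using \eqref{finite}, provides the precise exponent $24/7$ stated in the proposition.

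The main obstacle I anticipate is the sharp control of the \emph{pair} contribution: unlike the crude estimate of Section \ref{izigo}, where one could afford to upper-bound $\bar H$ by $c_2 f^n_{2+}$, here the prefactor $\frac{2\alpha_2(J^3-J^4)}{1-J^3}$ is the exact second-order coefficient, so one needs the second line of \eqref{qtronzac} with its specific constant $4\alpha_2$ together with a careful accounting showing that overcounting of adjacent pairs and the $O(nJ^{4n})$ error stay within the claimed remainder. A secondary technical point is the transfer from $\bP_{N,\gb}$ to the infinite-volume $\bP_\gb$ (Theorem \ref{infinitevol}) uniformly in $x\in\gL_N$: boundary points contribute $O(1/N)$ per unit area, which is absorbed in the $N\to\infty$ limit, but this requires the error terms $e^{-cd(x,\partial\gL)}$ in Proposition \ref{sendingout} to be summable, as is indeed the case.
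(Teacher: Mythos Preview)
Your argument is precisely the ``inconclusive attempt'' that the paper discusses and rejects in Section~6.1. Applying Jensen directly at level $n$ and plugging in the exact pair value $\bar H\{x,y\}=\log\frac{1-J^4}{1-J^3}$ (or, after your inequality $\log(1+y)\le y$, the larger quantity $\frac{J^3-J^4}{1-J^3}$) yields a negative coefficient in front of $J^{3n}$ that does \emph{not} match the sharp one. The sharp constant, obtained both in the lower-bound proof \eqref{gimik} and in the upper-bound proof of Lemma~\ref{dabuty}, is $\frac{2\alpha_2(J^3-J^4)}{1-J^4}$; your step $\log(1+y)\le y$ produces $\frac{2\alpha_2(J^3-J^4)}{1-J^3}$, which is \emph{larger}. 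The gap between the two is of order $J^6\cdot J^{3n}\asymp J^6 u^3$, i.e.\ a fixed ($\gb$-dependent) multiple of $u^3$. This is the same order as $F(\gb,u)$ itself, so it cannot be absorbed into a remainder $C|\log u|u^{24/7}=o(u^3)$. In other words, direct Jensen at level $n$ only proves $\bar\tf(\gb,u)\ge (1-c(\gb))F(\gb,u)$ for some $c(\gb)>0$, not the asymptotic equivalence.

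The missing idea is the two-scale conditioning of Section~\ref{rewrite}: one conditions on the level set $\phi^{-1}[n-k,\infty)$ for an intermediate $k$, computes the conditional log-moment generating function $G^{k,u}$ exactly on singletons and pairs (Lemma~\ref{stimaG}), and only then applies Jensen. The point is that the pair contribution enters through $g^k_2(0)=\log\bE^+_{\{x,y\}}[e^{-\bar H(\phi^{-1}[k,\infty))}]$ rather than through $\bE[-\bar H]$; expanding this log-Laplace transform yields the factor $1-e^{-\bar H\{x,y\}}=\frac{J^3-J^4}{1-J^4}$ in place of $\bar H\{x,y\}$ itself, which is exactly the improvement needed. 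The exponent $24/7$ then arises from balancing the three error terms $uJ^{3n-k}$, $J^{3(n+k)}$ and $(n-k)J^{4(n-k)}$ via the choice $k=n/7$; without the auxiliary parameter $k$ there is no mechanism in your scheme to produce that exponent, and your final sentence about ``a refinement of the $c_2 f^n_{3+}$ bound'' does not address the pair term, which is where the actual loss occurs.
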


We are not going to prove \eqref{ordretdroi} directly, but rather prove that there exists a constant $C$ such that 
 for every $\gb\ge \gb_0$, for every $u\in(0,1]$ and $n\ge 1$
\begin{equation}\label{gimik}
 \bar \tf(\gb,u)\ge \alpha_1 J^{2n} -\frac{2\alpha_2(J^3-J^4)}{1-J^4} J^{3n}-C \left(uJ^{20n/7}+ n J^{24n/7} \right). 
\end{equation}
Choosing $n$ that maximizes  $\alpha_1 J^{2n} -\frac{2\alpha_2(J^3-J^4)}{1-J^4} J^{3n}$
we can check that the last term is of order $|\log u| u^{24/7}$. Thus \eqref{gimik} implies \eqref{ordretdroi}.
 
 \subsection{An inconclusive attempt}
 
 Before going to the proof of Proposition \ref{activriax}, let us shortly discuss 
 some points in the previous proof where we can tighten the estimates.
 Firstly we know exactly the penalty induced by clusters of size $2$ and thus it would seem more reasonable, instead of \eqref{daabaa} to use the sharper estimate
\begin{equation}\label{scroos}
 u |\phi^{-1}[n,\infty)|- \bar H(\phi^{-1}[n,\infty))\ge u f^n_{1+}(\phi)- \frac{1}{2}
 \log\left( \frac{1-J^4}{1-J^3}\right) f^n_{2}(\phi) -c_{2} f^n_{3+}(\phi).
\end{equation}
Secondly, to estimate $ N^{-2}\bE_{N,\gb}\left[f^n_{1+}(\phi)\right]$ and $N^{-2}\bE_{N,\gb}\left[f^n_{2}(\phi)\right]$, it seems more appropriate to use 
Proposition \ref{sendingout} instead of Proposition \ref{rourou}.
For large values of $n$, the two quantity can respectively be approximated by $\alpha_1 J^{2n}$ and $4\alpha_2 J^{3n}$.
These changes would yield a bound of the type 
\begin{equation}\label{ratherbad}
\bar \tf(\gb,u)\ge \alpha_1 J^{2n}u - 2 \alpha_2  \log\left( \frac{1-J^4}{1-J^3}\right)J^{3n}- R(n,u),
\end{equation}
where $R(n,u)$ is of a smaller order than two first terms.
This is worse than the desired \eqref{gimik}.

\medskip

The reason why this bound is not sharp  is that  Jensen's inequality in \eqref{jensenox} is slightly suboptimal.
Indeed it can be checked that applying Jensen's inequality in \eqref{wouhou}, 
we would get something similar to the r.h.s.\ of \eqref{ratherbad}.

\medskip

In order to cope with the problem, we must use the fact peaks of size two (the term causing trouble is the one involving $f^n_2$)
are distributed in a very uncorrelated fashion.
We achieve this in a rather indirect manner, by looking at the conditional expectation of the partition function with respect to an intermediate level set of $\phi$ between $0$ and $n$.

\subsection{A second rewriting of the partition function}\label{rewrite}

We decide to push the transformation of the partition function performed in Section \ref{recodex} one step further.
We want to define an auxiliary model in which the Gibbs weight of a trajectory $\phi$ depends on the set of points above level $n-k$ (for $k\in\lint 1,n \rint$) 
but whose partition function is the same as the one appearing in the l.h.s.\ of \eqref{lefromage}.
The reader can check from the definition of $\bP_{N,\gb}$ \eqref{defSOS} that conditionally on $\phi^{-1}[n-k,\infty)= \gG$, 
the restricted process $[\phi-(n-k)]\restrict_{\gG}$ is independent of 
of $\phi\restrict_{\gG^{\cc}}$ and its conditional distribution is given by $\bP^+_{\gG}$ defined on $\gO^+_{\gG}$ by (recall \eqref{ZZplus}
\begin{equation}\label{defpus}
\bP^+_{\gG}(\phi):=\frac{1}{\cZ_{\gG,+}}e^{-\gb\cH_{\gG}(\phi)}.
\end{equation}
Thus we have 
\begin{multline}\label{cruize}
 \bE_{N,\gb} \left[ e^{ u |\phi^{-1}[n,\infty)|- \bar H(\phi^{-1}[n,\infty))} \ | \ \phi^{-1}[n-k,\infty)= \gG \right]\\
  = \bE^+_{\gG} \left[ e^{ u |\phi^{-1}[k,\infty)|-  \bar H(\phi^{-1}[k,\infty))}\right].
\end{multline}
We set 
 $$G^{k,u}(\gG):= \log  \bE^+_{\gG} \left[e^{ u |\phi^{-1}[k,\infty)|-  \bar H(\phi^{-1}[k,\infty))}\right],$$ 
and considering the expectation of \eqref{cruize}  we obtain
 \begin{equation}
 \bE_{N,\gb} \left[ e^{ u |\phi^{-1}[n,\infty)|- \bar H(\phi^{-1}[n,\infty))} \right]=\bE_{N,\gb}\left[ e^{G^{k,u}(\phi^{-1}[n-k,\infty))} \right].
\end{equation}
In particular,
\eqref{lefromage} implies that 
\begin{equation}\label{lefromagess}
 \bar \tf(\gb,u)=\lim_{N\to \infty} \frac{1}{N^2} \log \bE_{N,\gb} \left[ e^{G^{k,u}(\phi^{-1}[n-k,\infty))} \right].
\end{equation}
As for $H$, the value of  $G^{k,u}(\phi^{-1}[n-k,\infty))$ can be recovered by summing over all maximal connected components of $\phi^{-1}[n-k,\infty)$.
Similarly to what was done in Lemma \ref{lehagga}, we want to have a sharp estimate for the contribution for clusters of size $1$ and $2$, and a reasonable control 
over the rest.

\medskip

 We do not need an upper bound on $G^{k,u}\{x,y\}$ to prove Proposition \ref{activriax} but we include one here as it is going to be useful later on to obtain the upper bound on the free energy.
Note that the bound which we obtain in $(iii)$ below is far from optimal  
but it is sufficient for our purpose.

 \begin{lemma}\label{stimaG}

\begin{itemize}
 \item [(i)] For any $u\in \bbR$ we have 
 $$g^k_1(u):=G^{k,u}\{x\}= \log \left(1+J^{2k}(e^u-1)\right).$$
 \item [(ii)] For $x\sim y$, we have  
  $$g^k_2(0):=G^{k,0}\{x,y\}= \log \left(1-\left(\frac{J^3-J^4}{1-J^4}\right)J^{3k}\right).$$
 Moreover, there exists $C$ such that for all $u\in (0,1]$ and all $k$
  $$g^k_2(u)\le g^k_2(0)+CJ^{3k}u.$$
  \item [(iii)] There exists a constant $c_2(\gb)$ such that for any connected set $\gG$ with $|\gG|\ge 3$ we have 
  $$G^{k,0}(\gG)\ge -c_2(\gb) |\gG|.$$ 
 \end{itemize}
 \end{lemma}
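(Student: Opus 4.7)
All three parts are direct computations using the explicit form of $\bP^+_\gG$ for small $\gG$ combined with Lemma~\ref{lehagga}.

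For (i), on the singleton $\gG=\{x\}$ all four neighbors of $x$ lie in $\partial\gG$, so $\cH_{\{x\}}(\phi)=4\phi(x)$ and consequently $\bP^+_{\{x\}}(\phi(x)=j)=(1-J^2)J^{2j}$ for $j\geq 0$. Since any subset of a singleton is either empty or a singleton, $\bar H$ vanishes identically on $\phi^{-1}[k,\infty)\cap\{x\}$. Splitting the expectation defining $g^k_1(u)$ according to whether $\phi(x)<k$ or $\phi(x)\geq k$ yields
\[
e^{g^k_1(u)}=(1-J^{2k})+e^uJ^{2k}=1+J^{2k}(e^u-1).
\]

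For (ii), the first task is to compute $p:=\bP^+_{\{x,y\}}[\min(\phi(x),\phi(y))\geq k]$. The two-site Hamiltonian reads $\cH_{\{x,y\}}(\phi)=|\phi(x)-\phi(y)|+3\phi(x)+3\phi(y)$, and the change of variables $\phi'(\cdot)=\phi(\cdot)-k$ shows that the partition function restricted to $\min\geq k$ equals $e^{-6\gb k}\cZ^+_{\{x,y\}}=J^{3k}\cZ^+_{\{x,y\}}$, so $p=J^{3k}$. Combining with $\bar H\{x,y\}=\log\frac{1-J^4}{1-J^3}$ from Lemma~\ref{lehagga}(ii) and the fact that $\bar H$ vanishes on subsets of cardinality at most one, summing over the four possible values of $\phi^{-1}[k,\infty)\cap\{x,y\}$ gives
\[
e^{g^k_2(0)}=1-p\bigl(1-\tfrac{1-J^3}{1-J^4}\bigr)=1-\tfrac{J^3-J^4}{1-J^4}J^{3k}.
\]
For the $u$-upper bound I would write the difference $e^{g^k_2(u)}-e^{g^k_2(0)}=2q(e^u-1)+p(e^{2u}-1)\tfrac{1-J^3}{1-J^4}$ with $q:=\bP^+_{\{x,y\}}[\phi(x)\geq k,\phi(y)<k]$, evaluate $q$ via an explicit geometric sum, and then conclude using $e^u-1,e^{2u}-1\leq Cu$ on $(0,1]$, $\log(1+z)\leq z$, and $e^{g^k_2(0)}$ bounded away from zero.

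For (iii), Jensen's inequality gives
\[
G^{k,0}(\gG)=\log\bE^+_\gG\bigl[e^{-\bar H(\phi^{-1}[k,\infty))}\bigr]\geq -\bE^+_\gG[\bar H(\phi^{-1}[k,\infty))].
\]
Since $\bar H$ decomposes additively over separated connected components, and Lemma~\ref{lehagga} gives $\bar H(C)\leq c_2|C|$ for any connected component $C$ (the singleton case giving $0\leq c_2$ trivially), we obtain $\bar H(\phi^{-1}[k,\infty))\leq c_2|\phi^{-1}[k,\infty)|\leq c_2|\gG|$, yielding $G^{k,0}(\gG)\geq -c_2|\gG|$.

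The main delicate point is the $u$-derivative estimate in (ii): the quantity $q$ behaves like $J^{2k}$ times an explicit rational function of $J$, so the rational prefactors must be unwound carefully to identify the $u$-linear contribution and align it with the claimed power of $J$.
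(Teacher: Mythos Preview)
Your approach to all three parts is sound and essentially parallel to the paper's. For (i) both you and the paper do the same geometric-sum computation. For the exact value of $g^k_2(0)$ in (ii), your change-of-variables argument giving $p=J^{3k}$ is slightly slicker than the paper's explicit $(a,b)$-decomposition, but the two are equivalent. For (iii) the paper argues even more crudely than you: it simply bounds the numerator of $e^{G^{k,0}(\gG)}=\frac{1}{\cZ^+_{\gG}}\sum_{\phi}e^{-\gb\cH(\phi)-\bar H(\phi^{-1}[k,\infty))}$ from below by the $\phi\equiv 0$ term, which equals $1$, giving $G^{k,0}(\gG)\ge -H(\gG)$ directly; your Jensen argument works just as well.

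Your caution about the $u$-bound in (ii) is well placed, and in fact the stated inequality $g^k_2(u)\le g^k_2(0)+CJ^{3k}u$ cannot hold with a $k$-independent constant. Your computation gives $e^{g^k_2(u)}-e^{g^k_2(0)}=2q(e^u-1)+p\tfrac{1-J^3}{1-J^4}(e^{2u}-1)$, and a direct evaluation shows
\[
q=\bP^+_{\{x,y\}}[\phi(x)\ge k,\ \phi(y)<k]=J^{2k}(1-J^k)\,\frac{1+J+J^2}{1+J^2},
\]
so the leading contribution is genuinely of order $J^{2k}u$, not $J^{3k}u$. The paper's own proof contains the same slip: it writes the difference as $\sum_{a=0}^{k-1}J^{2k+a}(e^u-1)+\frac{J^{3k}(e^{2u}-1)}{(1-J^2)^2}$ and then asserts this is $\le CJ^{3k}u$, which is false since the first sum is of order $J^{2k}$. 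The correct bound is $g^k_2(u)\le g^k_2(0)+CJ^{2k}u$. This weaker bound is, however, entirely sufficient for the only place the estimate is used (the proof of Lemma~\ref{dabuty}): there the term $g^k_2(u)\bP_{\gb}[q(\phi,x,r)=2]$ produces an error of size $CJ^{2k}u\cdot J^{3r}=CuJ^{2k+3r}$, which is already one of the listed error terms and is absorbed into $u^{3+\gep}$ for the chosen value $k=k_u$.
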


 \begin{proof}
Let us rewrite  $e^{G^{k,u}(\gG)}$ in a more convenient fashion.
We have 
\begin{equation}\label{muzak}
 e^{G^{k,u}(\gG)}=\frac{1}{\cZ^{+}_{\gG,\gb}}\sum_{x\in \gO^+_{\gG}} e^{-\gb\cH(\phi)+u|\phi^{-1}[k,\infty)|+\bar H(\phi^{-1}[k,\infty))}.
\end{equation}
To prove $(i)$, we observe that the numerator of \eqref{muzak} is equal to 
 \begin{equation}
 \sum_{i=0}^{k-1} J^{2i}+\sum_{i=k}^{\infty}e^u J^{2i}= \frac{1}{1-J^2}(1+J^{2k}e^u),
 \end{equation}
and thus we can conclude using Lemma \ref{lehagga} to estimate the denominator.

\medskip

To prove $(ii)$, decomposing the sum according to the possible values for $a:=\min(\phi(x),\phi(y))$ and $b:=|\phi(x)-\phi(y)|$,
we obtain for the numerator
\begin{multline}\label{danum}
\cZ^{+}_{\{x,y\}}  e^{G^{k,u}(\gG)}=
\sum_{a=0}^{k-1} J^{3a} \left(1+2\sum_{b=1}^{k-a-1}J^{2b}+ 2\sum_{b=k-a}^{\infty}e^{u}J^{2b}\right)\\
+
\sum_{a=k}^{\infty}e^{2u}\left(\frac{1-J^3}{1-J^4}\right)J^{3a} \left(1+\sum_{b=1}^{\infty}J^{2b} \right).
\end{multline}
When $u=0$ this yields
\begin{multline}
\cZ^{+}_{\{x,y\}}  e^{G^{k,0}(\gG)}=\frac{(1-J^{3k})(1+J^2)}{(1-J^3)(1-J^2)}+\frac{J^{3k}}{(1-J^2)^2}\\
  = \frac{1-J^4}{(1-J^2)^2(1-J^3)}\left[1-\left(\frac{J^3-J^4}{1-J^4}\right)J^{3k}\right].
\end{multline}
Using Lemma \ref{lehagga} $(ii)$ for $\cZ^{+}_{\{x,y\},\gb}$, this yields the desired value for $G^{k,0}\{x,y\}$.
To obtain the upper-bound for positive $u$ we simply notice that from \eqref{danum}
\begin{equation}
\cZ^{+}_{\{x,y\}}\left(e^{G^{k,u}(\gG)}- e^{G^{k,0}(\gG)}\right)=\sum_{a=0}^{k-1} J^{2k+a}(e^u-1)
+\frac{J^{3k}(e^{2u}-1)}{(1-J^2)^2}
   \le C J^{3k} u.
\end{equation}
The third point is a trivial consequence of the upper-bound of Lemma \ref{lehagga} $(iii)$ since the numerator of \eqref{muzak} is larger than $1$.
\end{proof}

\subsection{Proof of Proposition \ref{activiax}}

Fix $u\in(0,1]$. Using Jensen inequality in \eqref{lefromagess}, we obtain
\begin{equation}
 \bar \tf(\gb,u)\ge \limsup_{N\to \infty}\frac{1}{N^2}\bE_{N,\gb}[G^{k,u}[n-k,\infty)].
\end{equation}
We can use Lemma \ref{stimaG} and the fact that $G^{k,u}$ is increasing in $u$
to obtain (similarly to \eqref{scroos}) that 
\begin{equation}
G^{k,u}[n-k,\infty)\ge g^k_1(u) f^{n-k}_1(\phi)
+ \frac{1}{2} g^k_2(0)  f^{n-k}_2(\phi)- c_2(\gb)f^{n-k}_{3+}(\phi).
\end{equation}
Using the fact that $g^k_1(u)\ge u$ we thus  have
\begin{multline}\label{dazkoll}
  \bar \tf(\gb,u)\ge J^{2k} u \lim_{N\to \infty} N^{-2}  \bE_{N,\gb}\left[f^{n-k}_1(\phi)\right]\\
  +  \frac{1}{2}\log \left(1-\left(\frac{J^3-J^4}{1-J^4}\right)J^{3k}\right)\lim_{N\to \infty}N^{-2}\bE_{N,\gb}\left[f^{n-k}_2(\phi)\right]\\
  - c_2(\gb)\lim_{N\to \infty} N^{-2}  \bE_{N,\gb}\left[f^{n-k}_{3+}(\phi)\right],
\end{multline}
provided the limits exists. This existence is guaranteed by the following result (which is an immediate consequence of \eqref{decayz}).

\begin{lemma}
  For any $i$ and $m$ we have 
 \begin{equation}\label{ispoor}\begin{split}
  \lim_{N\to \infty} N^{-2}  \bE_{N,\gb}\left[f^{m}_{i}(\phi)\right]=\bE_{\gb}[q(\phi,{\bf 0},m)=i],\\
  \lim_{N\to \infty} N^{-2}  \bE_{N,\gb}\left[f^{m}_{i+}(\phi)\right]=\bE_{\gb}[q(\phi,{\bf 0},m)\ge i].
  \end{split}
 \end{equation}
\end{lemma}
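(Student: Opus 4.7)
The plan is to derive the lemma as a direct consequence of the exponential decay of correlations given by Theorem \ref{infinitevol}, once I establish the crucial observation that for fixed integers $i \ge 0$ and $m$, the events $\{q(\phi,x,m) = i\}$ and $\{q(\phi,x,m) \ge i\}$ are local with supports contained in a ball around $x$ whose radius depends only on $i$.

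First I would prove the locality. A connected subset of $\bbZ^2$ with $i$ vertices has $\ell_1$-diameter at most $i-1$, so whenever $q(\phi,x,m) = i$, the connected component of $x$ in $\phi^{-1}[m,\infty)$ is contained in the $\ell_1$-ball $B_{i-1}(x)$. This can be verified by inspecting $\phi$ only on $B_i(x) := \{y : |y-x|_1 \le i\}$: compute the connected component of $x$ in $\phi^{-1}[m,\infty) \cap B_i(x)$ and check that it has cardinality $i$ and does not meet the outer sphere $\{y : |y-x|_1 = i\}$. If the latter holds, then no path in $\phi^{-1}[m,\infty)$ can escape $B_i(x)$, so this truncated component coincides with the full one. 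Hence $\{q(\phi,x,m)=i\}$ is local with support in $B_i(x)$, a set of cardinality $C(i)$ independent of the ambient volume. The event $\{q(\phi,x,m) \ge i\}$ is the complement of the finite disjoint union $\bigsqcup_{j=0}^{i-1}\{q(\phi,x,m) = j\}$ and inherits locality with support in $B_{i-1}(x)$.

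Next I would combine this with Theorem \ref{infinitevol} and translation invariance of $\bP_\gb$. Writing
\begin{equation*}
N^{-2}\, \bE_{N,\gb}[f^m_i(\phi)] = N^{-2}\sum_{x \in \gL_N} \bP_{N,\gb}[q(\phi,x,m) = i],
\end{equation*}
and applying Theorem \ref{infinitevol} to $f = \ind_{\{q(\phi,x,m) = i\}}$ gives, for every $x$ with $B_i(x) \subset \gL_N$,
\begin{equation*}
\bigl|\bP_{N,\gb}[q(\phi,x,m) = i] - \bP_\gb[q(\phi,\mathbf{0},m) = i]\bigr| \le 4\,C(i)\,e^{-d(\partial \gL_N,\, B_i(x))},
\end{equation*}
where translation invariance of $\bP_\gb$ has been used to move $x$ to the origin. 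Summing over $x \in \gL_N$ and dividing by $N^2$, the contribution from the $O(iN)$ vertices within $\ell_1$-distance $i$ of $\partial\gL_N$ (for which the bound is trivially at most $1$) is $O(i/N)$, while the remaining terms are controlled by the standard boundary-layer estimate $\sum_{x \in \gL_N} e^{-d(x,\partial\gL_N)} = O(N)$. The total error is $O(1/N) \to 0$, proving the first convergence in \eqref{ispoor}. The second convergence is obtained identically, after replacing $\{q=i\}$ by $\{q \ge i\}$ and invoking the corresponding locality.

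The only subtlety, which is the main (mild) obstacle, is the locality observation: $\{q(\phi,x,m) = i\}$ is \emph{a priori} a global event because it depends on the full connected component of $x$, which could be large; but, since $i$ is fixed, the elementary diameter bound for connected sets of cardinality $i$ confines the relevant information to a ball of bounded radius and reduces the statement to a direct application of Theorem \ref{infinitevol}.
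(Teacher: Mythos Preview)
Your proof is correct and follows exactly the approach the paper indicates: the paper simply states that the lemma ``is an immediate consequence of \eqref{decayz}'' (the exponential decay in Theorem~\ref{infinitevol}), and your argument is precisely the natural fleshing out of that claim, via the locality of $\{q(\phi,x,m)=i\}$ in a ball of radius $i$ and a routine boundary-layer estimate.
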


\noindent The next step is to replace $\bE_{\gb}[q(\phi,{\bf 0},m)=i]$ and $\bE_{\gb}[q(\phi,{\bf 0},m)\ge i]$ by their asymptotic approximation.
Recalling Proposition \ref{rourou} and \ref{sendingout} we have
\begin{equation}
\begin{split}
 |\bP_{\gb}[q(\phi,{\bf 0},m)=1]- \alpha_1 J^{2m}|&\le C J^{3m},\\
 |\bP_{\gb}[q(\phi,{\bf 0},m)=2]- 4\alpha_2 J^{3m}|&\le C m J^{4m},\\
 |\bP_{\gb}[q(\phi,{\bf 0},m)\ge 3]|&\le C m J^{4m}.\\
 \end{split}
\end{equation}
We also observe (using a Taylor expansion) that 
\begin{equation}\label{isrich}
\left|g^k_2(0)+\frac{J^3-J^4}{1-J^4}J^{3k}\right|\le C J^{6k}. 
\end{equation}
Thus, using Equations \eqref{ispoor}-\eqref{isrich} in \eqref{dazkoll} we obtain (for some constant $C$ which depends on $\gb$)
\begin{multline}
   \bar \tf(\gb,u)\ge  J^{2k}\left(\alpha_1 J^{2(n-k)}- C J^{3(n-k)}\right) u \\
   - 
   \frac{1}{2} \left(\left(\frac{J^3-J^4}{1-J^4}\right)J^{3k}+C J^{6k} \right)\left(4\alpha_2 J^{3(n-k)}+C (n-k)J^{4(n-k)} \right)-C (n-k) J^{4(n-k)}\\
   \le \alpha_1 J^{2n} u-\frac{2\alpha_2(J^3-J^4)}{1-J^4}J^{3n}-C \left( J^{3n-k}u +  J^{3(n+k)}+ (n-k) J^{4(n-k)} \right).
\end{multline}
To conclude the proof of \eqref{gimik} we just need to choose $k=n/7$.

\section{An easier upper bound on the free energy}\label{upeb}

As we did for the lower bound, we  first present a proof of a rougher upper bound for the free energy which gives  the right order of magnitude (at least when \eqref{finite} holds) 
and that is valid for all values of $\gb$.
It is an immediate consequence of Lemma \ref{uppg} below which is also of crucial importance to prove the sharp upper bound.

\begin{proposition}\label{easyup}
 For every $\gb>0$
 there exists a constant $C$ such that for every $u\in [0,1]$
 \begin{equation}
  \tf(\gb,h_c(\gb)+u)\le C u^3.
 \end{equation}

\end{proposition}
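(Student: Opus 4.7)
The strategy is to obtain Proposition \ref{easyup} as an immediate corollary of the forthcoming Lemma \ref{uppg}, by optimising the level-parameter $n$ in the layered representation of $\bar\tf$.

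First I would combine Lemma \ref{ojko} with the reformulation \eqref{lefromagess} of Section \ref{rewriite} to write, for every pair of integers $1\le k\le n$,
\begin{equation*}
\bar\tf(\gb,u) \;=\; \lim_{N\to\infty}\frac{1}{N^2}\log \bE_{N,\gb}\!\left[e^{G^{k,u}(\phi^{-1}[n-k,\infty))}\right].
\end{equation*}
Using the additive decomposition $G^{k,u}(\gG)=\sum_{C}G^{k,u}(C)$ over connected components and the estimates of Lemma \ref{stimaG}, each singleton contributes at most $g_1^k(u)\le 2J^{2k}u$, each double at most $g_2^k(u)\le g_2^k(0)+CJ^{3k}u$ (with $g_2^k(0)\le 0$), and each larger component at most $u|C|$, with $G^{k,0}(C)\le 0$ serving as a safety margin. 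Lemma \ref{uppg} is designed to convert these per-component bounds into a global bound on the moment generating function, of the form
\begin{equation*}
\bar\tf(\gb,u) \;\le\; \max_{n\ge 1}\bigl(\alpha_1 J^{2n}u - c(\gb) J^{3n}\bigr) \,+\, r(\gb,u),
\end{equation*}
which mirrors the asymptotic $F(\gb,u)$ of \eqref{uds}, with a remainder $r$ of smaller order than $u^3$.

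Granted this, Proposition \ref{easyup} follows at once: setting $J^n\asymp u$ maximises the bracket and gives $O(u^3)$, while for $u$ bounded away from $0$ the claim is trivial since $\bar\tf(\gb,\cdot)$ is bounded on compact subsets of $[0,\infty)$.

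The main difficulty sits in Lemma \ref{uppg}. A naive decoupling that kept only the singleton contribution would give $\bar\tf\le A\,J^{2n}u$ for every $n$, and letting $n\to\infty$ would force $\bar\tf\le 0$, in direct contradiction with the lower bound of Proposition \ref{activiax}. The argument therefore has to track the singleton reward and the double penalty simultaneously, exploiting the contour representation of Section \ref{contour} together with the product-measure domination of Lemma \ref{restrict} to factorise the moment generating function over essentially independent small-contour events, in full agreement with the localisation heuristic of Section \ref{houra}.
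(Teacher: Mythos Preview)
Your high-level plan --- invoke Lemma \ref{uppg} together with the representation \eqref{lefromagess}, then optimise a level parameter --- is the paper's, but you have misidentified both what Lemma \ref{uppg} actually says and where the constraint that prevents a contradiction lies.

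In the paper, Lemma \ref{uppg} is the \emph{deterministic} statement that for every connected $\gG$ with $|\gG|\ge 2$ and every $k\in\bigl[1,\tfrac{|\log u|}{2\gb}-K\bigr]$ one has $G^{k,u}(\gG)\le -\tilde c_1(\gb)J^{3k}<0$. This gives immediately (Equation \eqref{simpel})
\[
G^{k,u}\bigl(\phi^{-1}[n-k,\infty)\bigr)\;\le\; g_1^k(u)\,f_1^{n-k}(\phi)\;\le\; g_1^k(u)\,N^2,
\]
whence $\bar\tf(\gb,u)\le g_1^k(u)\le 2uJ^{2k}$. Now take $k=\bigl\lfloor\tfrac{|\log u|}{2\gb}-K\bigr\rfloor$, so that $J^{2k}\asymp u^2$ and the bound is $Cu^3$. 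The parameter $n$ drops out entirely; the optimisation is over $k$, and the admissible range of $k$ is dictated by the hypothesis of Lemma \ref{uppg}.

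Consequently your objection that ``a naive decoupling that kept only the singleton contribution would give $\bar\tf\le AJ^{2n}u$ for every $n$'' and hence force $\bar\tf\le 0$ is mistaken: the naive decoupling \emph{is} the paper's argument, and what stops you from sending the bound to zero is precisely the upper constraint $k\le\tfrac{|\log u|}{2\gb}-K$ in Lemma \ref{uppg}. There is no need to track the double penalty, no factorisation over small-contour events, and no use of the product-measure domination of Lemma \ref{restrict} in deducing Proposition \ref{easyup} from Lemma \ref{uppg}; that machinery belongs to the sharp bound in Section \ref{sub}. The only nontrivial content is the proof of Lemma \ref{uppg} itself, which proceeds via an odd/even splitting of the lattice and conditional independence under $\bP^+_{\gG}$, not via contour domination.
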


\subsection{Estimating the contribution of large clusters}

To prove upper bound results, we use the alternative expression for the free energy introduced in \eqref{lefromagess}.
We can rely on Lemma \ref{stimaG} to estimate the contribution to $G^{k,u}(\phi^{-1}[n-k,\infty))$ of clusters of size one and two,
but we yet need some upper bound for clusters of larger size.
This is the purpose of the following result.

\begin{lemma}\label{uppg}
Given $\gb>0$, there exist constants $K>0$  and $\tilde c_1(\gb)>0$ such that for all $k \in \left[1,\frac{|\log u|}{2\gb}-K\right]$, and all connected sets $\gG$ satisfying $|\gG|\ge 2$, we have
\begin{equation}\label{tii}
 G^{h,k}(\gG)\le -\tilde c_1(\gb) J^{3k}.
\end{equation}
\end{lemma}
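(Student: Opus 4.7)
The plan is to separate the base case $|\gG|=2$, treated via the explicit formula of Lemma \ref{stimaG}(ii), from the more delicate case $|\gG|\ge 3$, where I combine a cluster decomposition of $\bar H$ with an FKG lower bound on the probability of having a pair of neighboring sites both at height $\ge k$.

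For $|\gG|=2$, Lemma \ref{stimaG}(ii) yields $g^k_2(u)\le -\tfrac{J^3-J^4}{1-J^4}J^{3k}+CJ^{3k}u$. The constraint $k\le |\log u|/(2\gb)-K$ combined with $k\ge 1$ implies $u\le J^{K+1}$; taking $K=K(\gb)$ large enough that $CJ^{K+1}\le \tfrac12\tfrac{J^3-J^4}{1-J^4}$ produces the required bound with $\tilde c_1=\tfrac{1}{2}\tfrac{J^3-J^4}{1-J^4}$.

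For $|\gG|\ge 3$, write $A:=\phi^{-1}[k,\infty)$. Since $\bar H$ is additive over the clusters of $A$ and $\bar H(C)\ge 2c_1$ for any cluster $C$ with $|C|\ge 2$ by Lemma \ref{lehagga}(iii), one has
\[
e^{G^{k,0}(\gG)}\le 1-(1-e^{-2c_1})\,\bP^+_\gG\big[A\text{ has a cluster of size }\ge 2\big].
\]
Choosing any edge $(x_0,y_0)\subset \gG$ (available since $\gG$ is connected with $|\gG|\ge 2$), the event $\{\phi(x_0),\phi(y_0)\ge k\}$ is contained in $\{A\text{ has a cluster of size }\ge 2\}$. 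Both $\{\phi(x_0),\phi(y_0)\ge k\}$ and $\{\phi\ge 0\}$ are increasing, so writing $\bP^+_\gG=\bP_\gG[\,\cdot\,|\,\phi\ge 0]$ and applying FKG gives $\bP^+_\gG[\phi(x_0),\phi(y_0)\ge k]\ge \bP_\gG[\phi(x_0),\phi(y_0)\ge k]\ge \tfrac14 J^{3k}$ by Proposition \ref{rourou}. Hence $G^{k,0}(\gG)\le -\tfrac{1-e^{-2c_1}}{4}J^{3k}$, uniformly in $\gG$.

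To pass from $G^{k,0}$ to $G^{k,u}$, the simplest route is $e^{u|A|}\le e^{u|\gG|}$, giving $G^{k,u}(\gG)\le u|\gG|+G^{k,0}(\gG)$. With $u\le J^{k+K}$, this is already sufficient whenever $|\gG|$ is bounded by a $(\gb,K)$-dependent constant. The main obstacle is the case of large $\gG$. There I would use the sharper inequality $e^{u|A|-\bar H(A)}\le e^{u|A_1|-\frac{c_1}{2}|A_{\ge 2}|}$ (valid for $u\le c_1/2$, where $A_1$ denotes the isolated sites of $A$ and $A_{\ge 2}=A\setminus A_1$), together with an extensive improvement of the bound on $G^{k,0}$: by picking a maximal collection of disjoint pairs of neighboring sites in $\gG$ and using the exponential decay of correlations from Theorem \ref{infinitevol} to decorrelate them, one obtains $G^{k,0}(\gG)\le -c|\gG|J^{3k}$. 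Combined with a concentration estimate for $|A|$ around its mean of order $|\gG|J^{2k}$ (again via Theorem \ref{infinitevol}), this yields $G^{k,u}(\gG)\le |\gG|J^{3k}\bigl(C uJ^{-k}-c/2\bigr)\le -\tilde c_1 J^{3k}$ once $K$ is taken large enough that $uJ^{-k}\le J^K\le c/(2C)$. Making these decorrelation and concentration estimates rigorous under $\bP^+_\gG$, which does not enjoy full translation invariance, is where the main technical work lies.
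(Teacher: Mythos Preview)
Your argument for $|\gG|=2$ is clean and correct, and your treatment of bounded $|\gG|$ via $G^{k,u}\le u|\gG|+G^{k,0}$ together with the FKG bound $\bP^+_\gG[\phi(x_0),\phi(y_0)\ge k]\ge\tfrac14 J^{3k}$ is also fine. The gap is exactly where you identify it: large $|\gG|$. Your proposed fix --- invoking Theorem~\ref{infinitevol} to decorrelate a maximal family of disjoint edges and to concentrate $|A|$ --- does not work as stated. Theorem~\ref{infinitevol} concerns the unconstrained measure $\bP_{\gL,\gb}$ on a box, not $\bP^+_\gG$ on an arbitrary connected domain with a hard wall; no analogue for $\bP^+_\gG$ is available in the paper, and establishing one would be substantial extra work. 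Moreover, using Theorem~\ref{infinitevol} forces $\gb>\gb_0$, whereas Lemma~\ref{uppg} is stated (and needed in Proposition~\ref{easyup}) for \emph{every} $\gb>0$.

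The paper bypasses the decorrelation problem entirely with a structural trick you are missing. Starting from
\[
e^{G^{k,u}(\gG)}\le \bE^+_\gG\bigl[e^{u f^k_{1+}(\phi)-c_1 f^k_{2+}(\phi)}\bigr],
\]
it splits $f^k_{1+}$ and $f^k_{2+}$ into their odd and even parts and applies Cauchy--Schwarz. The point is that, conditionally on $\phi\restrict_{\gG_{\odd}}$, the variables $(\phi(x))_{x\in\gG_{\even}}$ are \emph{independent} under $\bP^+_\gG$ --- this is a consequence of the nearest-neighbour Hamiltonian and holds for any $\gb$ and any domain, with no need for cluster expansion. A direct one-site computation then shows that each even neighbour of a high odd site is itself $\ge k$ with conditional probability at least $J^{k}/(1+J)$, so
\[
\bE^+_\gG\bigl[e^{-2c_1 f^{k,\even}_{2+}}\,\big|\,\phi\restrict_{\gG_{\odd}}\bigr]\le e^{-cJ^k f^{k,\odd}_{1+}(\phi)}.
\]
The factor $e^{2u f^{k,\odd}_{1+}}$ is absorbed because the condition $k\le|\log u|/(2\gb)-K$ means $u\ll J^k$. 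Finally, $f^{k,\odd}_{1+}$ stochastically dominates a sum of $|\gG_{\odd}|$ i.i.d.\ Bernoulli($J^{2k}$) variables (again a one-site conditional computation), yielding $G^{k,u}(\gG)\le -c'|\gG|J^{3k}$, which is even stronger than the stated bound. This bipartite conditional-independence argument is the key idea you should replace your decorrelation sketch with.
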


\noindent The above result implies in particular that for  $k \in \left[1,\frac{| \log u|}{2\gb}-K\right]$, the contribution of clusters of size two and larger is negative and thus that 
\begin{equation}\label{simpel}
 G^{k,u}(\phi^{-1}[n-k,\infty))\le g^k_1(u) f^{n-k}_1(\phi).
\end{equation}
We already obtain a bound of the right order on the free energy only by using this information for the largest value allowed for $k$.

\begin{proof}[Proof of Proposition \ref{easyup}]
 For $u$ sufficiently small 
let us fix $k= \left\lfloor \frac{|\log u|}{2\gb}-K \right\rfloor$ and $n> k$ arbitrary. Using \eqref{simpel} we have
 \begin{equation}
\bE_{N,\gb} \left[ e^{G^{k,u}(\phi^{-1}[n-k,\infty))}\right]
  \le \bE_{N,\gb}  \left[ e^{a^{k}_1(u) f^{n-k}_1(\phi)}\right]
 \le  e^{N^2 a^k_1(u)}.
 \end{equation}
We can then conclude using \eqref{lefromagess} and expression given in Lemma \ref{stimaG} for $a^k_1(u)$ that 
$$\bar \tf(\gb,u)\le a^k_1(u)\le 2 u J^{2k}$$
provided that $k$ is sufficiently large. To conclude we just need to replace $k$ by its value in the above expression.
\end{proof}

\subsection{Proof of Lemma \ref{uppg}}

We can assume that $u$ is small as if not, the interval\\
 $\left[1,\frac{|\log u|}{2\gb}-K\right]$ could be  empty.
Using Lemma \ref{lehagga}, we have (recall \eqref{defpus})
\begin{equation}\label{spoof}
e^{G^{k,u}(\gG)}= \bE^+_{\gG}\left[e^{u\phi^{-1}[k,\infty)+\bar H\left( \phi^{-1}[k,\infty)\right) }\right]
\le   \bE^+_{\gG}\left[e^{u f^k_{1+}(\phi)- c_1(\gb)f^k_{2+}(\phi)}\right].
  \end{equation}
We split now $f^k_{1+}(\phi)$ and $f^{k}_{2+}(\phi)$ in two parts corresponding to the respective contribution of 
\textit{odd} and \textit{even} sites (we call sites of $\bbZ^2$ odd resp.\ even when the sum of their coordinates are odd resp. even)
\begin{equation}\begin{split}
 f^k_{1+}(\phi)=f^{k,\odd}_{1+}(\phi)+f^{k,\even}_{1+}(\phi),\\
 f^{k}_{2+}(\phi)=f^{k,\odd}_{2+}(\phi)+f^{k,\even}_{2+}(\phi).\\
\end{split}\end{equation}
The Cauchy-Schwartz inequality applied to \eqref{spoof} yields
\begin{equation}\label{CS}
e^{2G^{k,u}(\gG)}\le \bE^+_{\gG}\left[e^{2u f^{k,\odd}_{1+}(\phi)-2c_1(\gb) f^{k,\even}_{2+}(\phi)}\right]
   \bE^+_{\gG}\left[e^{2u f^{k,\even}_{1+}(\phi)- 2c_1(\gb)f^{k,\odd}_{2+}(\phi)}\right].
   \end{equation}
In order to estimate each factor in the r.h.s., we are going to condition to  realization the field on half of the lattice sites and then use
independence.
Let $\gG_{\odd}$ and $\gG_{\even}$ denote the set of odds and even sites in $\gG$ respectively.
We need to prove that the  following holds for some positive constant $c$
 \begin{equation}\begin{split}\label{labonte}
  \bE^+_{\gG}\left[e^{2u f^{k,\odd}_{1+}(\phi)- 2c_1(\gb)f^{k,\even}_{2+}(\phi)} \ | \ \phi \restrict_{\gG_{\odd}} \right]&\le  e^{-c J^{k} f^{k,\odd}_{1+}(\phi)},\\
  \bE^+_{\gG}\left[e^{2u f^{k,\even}_{1+}(\phi)- 2c_1(\gb)f^{k,\odd}_{2+}(\phi)} \ | \ \phi \restrict_{\gG_{\even}} \right]&\le  e^{-c J^{k} f^{k,\even}_{1+}(\phi)}.
  \end{split}
 \end{equation}
It is immediate to check that 
$$\bP^+_{\gG,\gb}\left[ \phi(x)\ge k \ | \ \phi(y), y\ne x \right]\ge J^{2k},$$
meaning that $f^{k}_{1+}(\phi)$ stochastically dominates a sum of $|\gG|$ independent Bernoulli variables of parameter $J^{2k}$.
As a consequence of \eqref{labonte} we obtain thus
\begin{multline}
\bE^+_{\gG}\left[e^{2u f^{k,\odd}_{1+}(\phi)-2c_1(\gb) f^{k,\even}_{2+}(\phi)}\right]\le \bE^+_{\gG}\left[ e^{-c J^{k} f^{k,\odd}_{1+}(\phi)}\right]\\
\le  \left( 1- J^{2k}\left(1-e^{-c J^{k}}\right)\right)^{|\Gamma_{\odd}|}\le e^{-c'|\gG_{\odd}|J^{3k}}.
\end{multline}
The other factor in \eqref{CS} can be bounded in the same manner and this yields \eqref{tii}.

 \medskip

\noindent By symmetry, we only have to prove the first inequality in \eqref{labonte}.
Note that conditioned to $\phi \restrict_{\gG_{\odd}}$, the variables $(\phi(x))_{x\in \gG_{\even}}$ are independent.
Further more we have 
\begin{equation}\label{claim}
   \bE^+_{\gG}\left[ \phi(x)\ge k \ | \ \phi \restrict_{\gG_{\odd}} \right]\ge \frac{J^{k}}{1+J}\ind_{\{\exists y, y\sim x \text{ and } \phi(y)\ge k \}}.
\end{equation}
Indeed, setting $$v_j(\phi):=  \exp\left(-\gb \sum_{y\sim x}|j-\phi(y)|\right),$$ where by convention $\phi(y)$ is taken to be $0$ if $y\in \partial \gG$,
the left-hand side is equal to
\begin{equation}
    \bE^+_{\gG}\left[ \phi(x)\ge k \ | \ \phi \restrict_{\gG_{\odd}} \right]=\frac{\sum_{j\ge k} v_j(\phi)}{\sum_{j\ge 0} v_j(\phi)}.
\end{equation}
The reader can check that the r.h.s.\ is increasing in $\phi$ (this is in fact a consequence of the FKG inequality \eqref{FKG})
and thus it is sufficient 
to check the case where $\phi$ is equal to $k$ for one neighbor and $0$ for the others. In that case we have 
\begin{equation}
 \frac{\sum_{j\ge k} v_j(\phi)}{\sum_{j\ge 0} v_j(\phi)}=\frac{(1-J)J^k}{(1-J^2)(1-J^k)+(1-J)J^k}\le  \frac{J^{k}}{1+J},
\end{equation}
which proves the claim \eqref{claim}.
Now, we observe now that 
\begin{equation}
\#\{ x\in \gG_{\even} \ | \ \exists y\in \gG_{\odd}, \ y\sim x \text{ and } \phi(y)\ge k\} \ge 
f^{k,\odd}_{1+}(\phi)/4.
\end{equation}
As for all the vertices counted above, $\phi(x)\ge k$ implies $q(\phi,x,k)\ge 2$, we have for some adequate choice of $c>0$, we have
\begin{multline}
   \bE^+_{N,\gb}\left[e^{-2c_1(\gb) f^{k,\even}_{2+}(\phi)} \ | \ \phi \restrict_{\gG_{\odd}} \right]\\
   \le \left( 1+(1-e^{-2c_1(\gb)})\frac{J^{k}}{1+J}\right)^{f^{k,1}_{1+}(\phi)/4}\le e^{-c J^{k} f^{k,\odd}_{1+}(\phi)}.
\end{multline}
With our condition on $k$, this yields \eqref{labonte}.

\qed

\section{Sharpening  the upper bound}\label{sub}

We finally prove the sharp upper bound on the free energy,

\begin{proposition}\label{superbien}
There exists $\gep>0$ and $C>0$ such that for any $\gb\ge \gb_0$ (from Theorem \ref{infinitevol}) for all $u\in(0,1]$ we have
\begin{equation}\label{gross}
 \bar\tf(\gb,u)\le F(\gb,u)+C u^{3+\gep}.
\end{equation}
\end{proposition}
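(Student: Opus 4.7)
The plan is to combine the intermediate-level representation \eqref{lefromagess} with a controlled cumulant expansion, pushing the mean computation of Section \ref{nozigo} and adding a bound on the Jensen gap that exploits the exponential decay of correlations from Theorem \ref{infinitevol}.

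Fix $u \in (0,1]$ small, let $n=n(u)$ be the integer maximizing $F(\gb,u)$ (so that $J^n$ is of order $u$), and set $k = \lfloor \eta n \rfloor$ for a small constant $\eta>0$ to be optimized. Decomposing $G^{k,u}(\phi^{-1}[n-k,\infty))$ over connected components, keeping the exact contributions of size-$1$ and size-$2$ clusters from Lemma \ref{stimaG}(i)-(ii), and discarding clusters of size $\geq 3$ via Lemma \ref{uppg} (which applies in the admissible range $k\le |\log u|/(2\gb)-K$ and guarantees a nonpositive contribution), yields the pointwise bound
\begin{equation*}
 G^{k,u}(\phi^{-1}[n-k,\infty)) \leq g_1^k(u)\,f_1^{n-k}(\phi) + \tfrac{1}{2}\,g_2^k(u)\,f_2^{n-k}(\phi) =: X(\phi) = \sum_{x \in \gL_N} Y_x(\phi),
\end{equation*}
where each $Y_x$ is supported in a ball of radius $1$ around $x$ and satisfies $\|Y_x\|_\infty \leq C J^{3k}$.

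Next I would split $\log \bE_{N,\gb}[e^X] = \bE_{N,\gb}[X] + \log \bE_{N,\gb}[e^{X - \bE_{N,\gb}[X]}]$. The first term is handled essentially as in the proof of Proposition \ref{activriax}: substituting the explicit forms of $g_1^k,g_2^k$ from Lemma \ref{stimaG} and the sharp asymptotics \eqref{qtronzac} of Proposition \ref{sendingout} yields $N^{-2}\bE_{N,\gb}[X] = F(\gb,u) + O(u^{3+\gep'})$ for some $\gep'>0$ depending on $\eta$. The heart of the argument is the second term. A direct computation using
\begin{equation*}
 \bE_{N,\gb}[Y_x^2] \leq C J^{6k}\,\bP_{\gb}[q(\phi,{\bf 0},n-k)\ge 1] \leq C\, u^{3+3\eta},
\end{equation*}
together with the pairwise covariance bound $|\Cov(Y_x,Y_y)| \leq C \|Y\|_\infty^2 e^{-c|x-y|}$ derived from Theorem \ref{infinitevol}, gives $\Var_{N,\gb}(X) \leq C N^2 u^{3+3\eta}$, which is $o(N^2 u^3)$ for any $\eta>0$.

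To promote this variance estimate into an exponential-moment bound of matching order, I would organize
\begin{equation*}
 \bE_{N,\gb}\!\left[\prod_{x \in \gL_N}\bigl(1 + (e^{Y_x}-1)\bigr)\right]
\end{equation*}
as a cluster expansion over connected subsets $S \subset \gL_N$. Using $|e^{Y_x}-1| \leq C J^{3k}$ and the exponential decay from Theorem \ref{infinitevol} to bound the contribution of each $S$ with $|S|\geq 2$ by $(CJ^{3k})^{|S|} e^{-c\,\mathrm{diam}(S)}$, summing over $S$ and taking the logarithm should yield $\log \bE[e^{X-\bE X}] \leq C N^2 u^{3+3\eta}$. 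The main obstacle is this last step: the $Y_x$'s are nonzero only on rare events (of probability $\asymp u^{3-3\eta}$ for the size-$2$ contribution), so the distribution of $X$ is very far from Gaussian and a naive Hoeffding- or Herbst-type concentration inequality does not suffice. The cluster-expansion route above appears most natural, but its execution requires careful bookkeeping of the interplay between $\eta$, the admissible range of $k$ from Lemma \ref{uppg}, and the correlation-decay constant, with the final $\gep > 0$ in \eqref{gross} emerging from simultaneous optimization of these parameters.
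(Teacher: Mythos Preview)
Your approach is genuinely different from the paper's: you fix the optimal level $n$, compute the mean of $X=\sum_x Y_x$ as $N^2 F(\gb,u)$, and attempt to control the Jensen gap $\log\bE[e^{X-\bE X}]$ globally via a cluster/cumulant expansion. The paper instead partitions $\gL_N$ into mesoscopic cells of side $M\asymp u^{-1/2}$, conditions on the set of \emph{large} contours, and shows that in each good cell the cell-contribution $\bar G^{k,u}(\phi,z)$ is deterministically $O(u^{9/5})$ (with the choice $J^k\asymp u^{9/10}$), so a one-step Taylor expansion suffices; the expectation is then reduced to a pointwise inequality $\bE_\gb[\cX^r(x)]\le F(\gb,u)+u^{3+\gep}$ valid for \emph{every} effective level $r$ (Lemma~\ref{dabuty}), exploiting that $F$ is a supremum.

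Your route has a concrete gap in the fluctuation step. The covariance bound you invoke, $|\Cov(Y_x,Y_y)|\le C\|Y\|_\infty^2 e^{-c|x-y|}$, is too weak: summing over $y$ gives a per-site contribution of order $\|Y\|_\infty^2\asymp J^{6k}\asymp u^{6\eta}$, not $u^{3+3\eta}$. For $\eta$ small (as you propose) this is $\gg u^3$, so the variance bound fails as stated. To recover $u^{3+3\eta}$ one would need a covariance estimate that also carries the rarity factor $\bP[q(\phi,x,n-k)\ge 1]\asymp u^{2-2\eta}$, something like $|\Cov(Y_x,Y_y)|\le C\,\bE[Y_x^2]\,e^{-c|x-y|}$; this does \emph{not} follow from Theorem~\ref{infinitevol} alone (which compares finite to infinite volume, not conditional to unconditional), and proving it amounts to a mixing statement of the same flavor as Lemmas~\ref{contourcondi}--\ref{leschaps}. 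There is moreover a real tension in the choice of $\eta$: following Proposition~\ref{activriax} the mean error contains a term $(n-k)J^{4(n-k)}\asymp|\log u|\,u^{4(1-\eta)}$, which forces $\eta<1/4$, while the crude covariance sum forces $\eta>1/2$. The paper's cell decomposition avoids this conflict precisely because it never needs a global variance or cumulant bound: conditioning on large contours decouples the cells, and the uniform-in-$r$ pointwise bound absorbs whatever effective height the large contours impose.
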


\subsection{Decomposition of the proof}

The proof of this statement is the most delicate part of the paper.  
We use the characterization of the free energy provided by \eqref{lefromagess}.
The idea is to factorize $\bE_{N,\gb}\left[e^{G^{k,u}\left(\phi^{-1}[n-k,\infty)\right)}\right]$ by dividing 
the box $\gL_N$ into cells of fixed size. If the size of the cells is
chosen sufficiently large (but depending only on $u$ and not on $N$) 
$\phi$ should be almost independent in different cells because of the fast decay of correlations \eqref{decayz}.

\medskip

To fix ideas we choose the value of the parameters used in the cell decomposition now. Other choices are possible and we do not try to optimize the value of $\gep$ in \eqref{gross}.
Given $u$ we set 
\begin{equation}\label{ramets}
k_u:= \left\lceil \frac{9 |\log u|}{20 \gb} \right\rceil,\quad 
M_u:= \left\lceil u^{-1/4}\right\rceil^2 \quad \text{and} \quad  \ell_u=\left\lceil (\log u)^2 \right \rceil.
\end{equation}
The value of $n$ (in \eqref{lefromagess}) to be chosen is somehow arbitrary, the bound that we prove being independent of it.

\medskip

We split $\gL_N$ into cells of side-length $M=M_u$ assuming that 
$N$ is a multiple of $M$. For $z\in \lint 0, M/N-1 \rint^2$ we set  $$\cB_z:=\gL_M+ Mz.$$
For technical reasons, it turns out to be convenient to define also a smaller cell contained in $\cB_z$.
We set 
$$\bar \cB_z:= \left\lint \sqrt{M}, M-\sqrt{M}\right\rint^2 +Mz.$$

Recall that $\Upsilon(\phi)$ denote the set contours of $\phi$ defined in Section \ref{contour}.
We let  $\Diam(\gamma)$ denote the Euclidean diameter of the geometric contour $\tilde \gamma$ considered as a subset of $\bbR^2$.
Recalling \eqref{ramets} we define
\begin{equation}
\Upsilon^{\larg}(\phi):=\{ \gamma \in \Upsilon(\phi) \ : \ \Diam(\gamma)\ge \ell_u \},
\end{equation}
and let $\hat \Upsilon^{\larg}(\phi)$ denote the associated set of cylinders.
We say that a cell $\cB_z$ is \textit{good} for $\phi$ if it is not crossed by a large contour or more precisely 
$$ \forall \gamma \in \Upsilon^{\larg}(\phi), \quad  \Delta_{\gamma}\cap \cB_z=\emptyset $$
while we say it is \textit{bad} if not (see Figure \ref{goodbad}).

\begin{figure}[ht]
\begin{center}
\leavevmode
\epsfxsize =8 cm
\psfragscanon
\psfrag{N}[c][l]{\small $N$}
\psfrag{M}[c][l]{ \small $M$}
\psfrag{B13}[c][l]{ \tiny $\cB_{(1,3)}$}
\psfrag{bB41}[c][l]{ \tiny $\bar \cB_{(4,1)}$}
\epsfbox{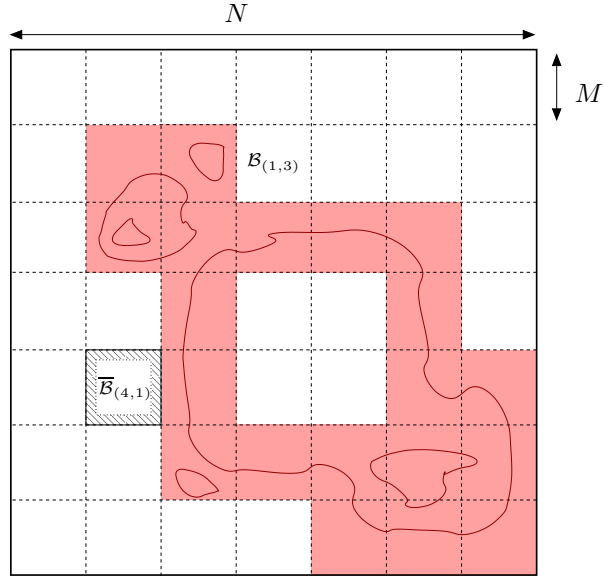}
\end{center}
\caption{\label{goodbad} 
We split our boxes of diameter $N$ into cells of fixed (independent of $N$) diameter $M$.
The set of large contour (of length larger than $|\log u|^2$)
determines the good and bad cells (bad cells are shadowed in the picture).
To avoid boundary effects in our analysis, we consider inside for each cell $\cB_z$ a subcell $\bar \cB_z$ obtained by pushing each side of the cell of an amount
$\sqrt{M}$.
}
\end{figure}

We claim (and this is the crucial point of our proof) 
that conditioned on the set of large contours, we can control the contribution each good cells to the partition function in 
a way that does not depend on the realization of $\phi$
 out of the cell.
For that claim to make sense we need to approximately factorize 
$e^{G^{k,u}(\phi^{-1}[n-k,\infty))}$ to fit the cell decomposition.

\medskip

\noindent From Lemmas \ref{stimaG} and \ref{uppg} we have 
\begin{multline}\label{didicell}
 G^{k,u}(\phi^{-1}[n-k,\infty))\le g^k_1(u)f^{n-k}_1(\phi)+ g^k_2(u)f^{n-k}_2(\phi)-\tilde c_1(\gb)J^{3k}f^{n-k}_{3+}(\phi)\\
 =\sum_{x\in \gL_N} \cX^{n-k}(x)
\end{multline}
where 
\begin{equation}
 \cX^{r}(x):=g^k_1(u) \ind_{\{ q(\phi,x,r)=1\}}+  g^k_2(u)\ind_{\{ 
 q(\phi,x,r)=2\}}- \tilde c_1(\gb) J^{3k}\ind_{\{  q(\phi,x,r)\ge 3\}}.
\end{equation}
An important observation is that for $u$ sufficiently small, $g^k_2(u)$ is negative (recall Lemma \ref{stimaG}).
With our choice of $k$ and using the expression obtained for  $g^k_1(u)$ we thus have 
\begin{equation}\label{1415}
 \cX^{r}(x)\le g^k_1(u)\le 2 u J^{2k_u}\le C u^{14/5}.
\end{equation}
For each $z\in \lint 0,M/N-1\rint^2$ we set  
\begin{equation}
 \bar G^{k,u}(\phi,z)= \sum_{x\in \bar B_z} \cX^{n-k}(x).
\end{equation}
We deduce from \eqref{didicell} that
\begin{equation}\label{groop}
  G^{k,u}(\phi^{-1}[n-k,\infty))\le \sum_{z\in \lint 0, M/N-1 \rint^2 }
  \bar G^{k,u}(\phi,z)+ g^k_1(u)\left|\gL_N \setminus \bigcup_{z\in \lint 0, M/N-1 \rint^2} \bar \cB_z\right|.
\end{equation}
The size of the set in the r.h.s.\ of \eqref{groop} is of order $N^2 M^{-1/2}$ and thus 
 \begin{multline}
 \bar  \tf(\gb,u)=\liminf_{N\to \infty} \frac{1}{N^2}\bE_{N,\gb}\left[ e^{G^{k,u}(\phi^{-1}[n-k,\infty))}\right]
 \\ \le  \liminf_{N\to \infty} \frac{1}{N^2}\bE_{N,\gb}\left[ \prod_{z\in  \lint 0, M/N-1 \rint^2 } e^{\bar G^{k,u}_n(\phi,z)}\right]
 +C M^{-1/2} g^k_1(u).
 \end{multline}
We can check using the definition of $M$ and \eqref{1415} that the last term is of order $u^{3+\gep}$ with $\gep=1/20$. 
Hence to conclude we only need to prove that 
\begin{equation}\label{left}
  \liminf_{N\to \infty} \frac{1}{N^2}\bE_{N,\gb}\left[ \prod_{z\in  \lint 0, M/N-1 \rint^2 } e^{\bar G^{k,u}(\phi,z)}\right]
  \le F(\gb,u)+u^{3+\gep}.
\end{equation}
To do this we need to combine two key results.
The  first allows to estimate the contribution of each cell to the expectation.

\begin{proposition}\label{youbad}
 
 We have for any $z\in \lint 0, M/N-1 \rint^2$, almost surely
 
 \begin{multline}\label{dca}
M^{-2}\log \bE_{N,\gb}\left[ e^{\bar G^{k,u}(\phi,z)}    \ | \ \Upsilon^{\larg}(\phi), \phi\restrict_{\gL_N\setminus \cB_z} \right]\\
\le \begin{cases}
     \left( F(\gb,u)+ u^{3+\gep} \right) \ & \text{ if } \cB_z \text{ is good},\\
     C u^{14/5}    & \text{ if } \cB_z \text{ is bad}.
    \end{cases}
 \end{multline}
 \end{proposition}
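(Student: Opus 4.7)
My plan is to handle the two cases of \eqref{dca} separately, with the bad case trivial and the good case carrying the technical content. For the bad case, since $g_2^k(u)<0$ for $u$ small (Lemma \ref{stimaG}$(ii)$) and the third summand of $\cX^{n-k}(x)$ is non-positive, one has the deterministic pointwise bound $\cX^{n-k}(x)\le g_1^k(u)\le Cu^{14/5}$ (the latter via Lemma \ref{stimaG}$(i)$ and the choice of $k_u$ in \eqref{ramets}). Summing over $x\in \bar\cB_z$ yields $\bar G^{k,u}(\phi,z)\le Cu^{14/5}M^2$ deterministically, and the bound $Cu^{14/5}$ follows without any conditioning.

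For a good cell $\cB_z$ the key geometric observation is that, since $\cB_z$ is not crossed by any large contour, every maximal connected component of $\phi^{-1}[n-k,\infty)$ meeting $\bar\cB_z$ has diameter strictly less than $\ell_u$ (otherwise it would be enclosed by a contour of length $\ge \ell_u$, contradicting goodness). Consequently each $\cX^{n-k}(x)$, $x\in \bar\cB_z$, is a local function of $\phi$ supported on the $\ell_u$-ball around $x$, entirely contained in $\cB_z$ thanks to the buffer of width $\sqrt{M_u}\gg \ell_u$. Conditionally on $\Upsilon^{\larg}(\phi)$ and $\phi\restrict_{\gL_N\setminus\cB_z}$, the field inside $\cB_z$ is distributed as an SOS measure with a boundary level $m$ inherited from the outside; since the target bound is independent of $n$, the substitution $(n,\phi)\mapsto(n-m,\phi-m)$ lets me assume $m=0$. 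Then Theorem \ref{infinitevol}, applied with distance parameter $\sqrt{M_u}\sim u^{-1/4}$, shows that for any $x$ at distance at least $\sqrt{M_u}$ from $\partial\cB_z$ the conditional law of any local function near $x$ differs from its law under $\bP_\gb$ by an error $e^{-c\sqrt{M_u}}$, negligible against any polynomial power of $u$.

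It therefore suffices to bound $M^{-2}\log\bE_\gb[e^{\bar G^{k,u}(\phi,z)}]$ by $F(\gb,u)+u^{3+\gep}$. Using $|\cX^{n-k}|\le Cu^{14/5}$ I would expand $e^{\cX^{n-k}(x)}\le 1+\cX^{n-k}(x)+C[\cX^{n-k}(x)]^2$, expand the product $\prod_{x\in\bar\cB_z} e^{\cX^{n-k}(x)}$, and finish with $\log(1+a)\le a$. The leading contribution is $\sum_x\bE_\gb[\cX^{n-k}(x)]$, which by Proposition \ref{sendingout} and the Taylor expansions of $g_1^k(u)$ and $g_2^k(0)$ from Lemma \ref{stimaG}$(i)$--$(ii)$ equals $M^2(\alpha_1 J^{2n}u-\frac{2\alpha_2(J^3-J^4)}{1-J^3}J^{3n})$ up to $O(M^2 u^{3+\gep'})$; the first two terms are bounded above by $M^2 F(\gb,u)$ by definition of $F$. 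The quadratic and higher terms $\bE_\gb[\prod_{x\in S}\cX^{n-k}(x)]$ for $|S|\ge 2$ are negligible: pairs at distance $\le 2\ell_u$ number $O(M^2\ell_u^2)$ and contribute at most $u^{28/5}$ each, while pairs at larger distance factorize up to $e^{-c\ell_u}$ thanks to Theorem \ref{infinitevol}, producing a contribution of order $(\sum_x\bE_\gb[\cX^{n-k}(x)])^2$, of smaller order than the main term.

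The principal obstacle lies in the coupling step: one must show that replacing the conditional measure by $\bP_\gb$ does not spoil the sharp constant $F(\gb,u)$, even though $\bE_\gb[\cX^{n-k}(x)]$ is itself of order $u^3$ at the optimal $n$. This relies crucially on the polynomial scaling of $\sqrt{M_u}$ making the decay in Theorem \ref{infinitevol} beat any polynomial in $u$. A secondary nuisance is that sites within the $\sqrt{M_u}$-boundary of $\bar\cB_z$ cannot be coupled to $\bP_\gb$, but they number only $O(M_u^{3/2})$ and contribute at most $M_u^{3/2}u^{14/5}\ll M_u^2 u^{3+\gep}$ in total.
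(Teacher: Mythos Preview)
Your treatment of the bad-cell case and the overall Taylor/second-order strategy are fine, and in fact your second-order bookkeeping is more elaborate than needed: since $|\bar G^{k,u}(\phi,z)|\le CM^2u^{14/5}\le C'u^{9/5}$ deterministically, the paper simply bounds $\bE[|\bar G|^2]\le CM^4u^{28/5}\sim CM^2u^{23/5}$ without any decorrelation argument.

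The genuine gap is in your coupling step. You assert that conditionally on $\Upsilon^{\larg}(\phi)$ and $\phi\restrict_{\gL_N\setminus\cB_z}$ the field inside $\cB_z$ is an SOS measure ``with a boundary level $m$ inherited from the outside'', and then invoke Theorem~\ref{infinitevol}. Neither claim is justified. First, even for a good cell, small contours may cross $\partial\cB_z$, so the boundary datum on $\partial\cB_z$ is the actual (non-constant) field $\phi\restrict_{\partial\cB_z}$, not a constant $m$; the shift $(n,\phi)\mapsto(n-m,\phi-m)$ therefore does not reduce to zero boundary condition. Second, the conditioning on $\Upsilon^{\larg}(\phi)$ for a good cell also imposes the constraint that no large contour lies inside $\cB_z$, which you do not address. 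Third, Theorem~\ref{infinitevol} as stated concerns only $\bP_{\gL,\gb}$ with zero boundary condition; it gives no direct control on SOS with an arbitrary boundary profile, let alone with an additional no-large-contours constraint.

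The paper closes this gap by two auxiliary lemmas. Lemma~\ref{contourcondi} shows that the no-large-contours conditioning changes probabilities by at most $e^{-c(\log u)^2}$. Lemma~\ref{leschaps} handles the non-constant boundary by an FKG sandwich: for \emph{monotone} events (or intersections of two monotone events) one dominates from above and below by the extreme measures $\bP_+$, $\bP_-$ with one-sided constraints on $\partial\gL_z$, then peels off the outermost layer of boundary contours to land on a genuine $\bP_{\gL',\gb}$ with zero boundary, and only then applies Theorem~\ref{infinitevol}. This is why the paper insists that the events $\{q(\phi,x,r)=i\}$ be written as intersections of monotone events. Your proposal lacks this monotonicity argument, and without it the reduction to $\bP_\gb$ is unproven.
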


This result tells us that good cells give a contribution to the partition function which does not exceed the required bound, but provides only 
a rough bound for bad cells.
To conclude, we need to show that the major contribution to the partition function comes from realization of $\phi$ with few bad cells.
This is the objective our second result which controls the total length of large contours.

\begin{lemma}\label{larg}
 Setting 
 $\cL^{\larg}(\phi):=\sum_{\gamma \in \Upsilon^{\larg}(\phi)} |\tilde \gamma|$,
 if $\gb>\gb_1$ we have
\begin{equation}
\bP_{N,\gb}\left[ \cL^{\larg}(\phi) \ge N^2 u^2  \right]\le e^{-c N^2 u^2}. 
\end{equation}
\end{lemma}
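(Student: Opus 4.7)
The strategy is a standard Peierls/Chernoff argument applied to the contour representation, exploiting that the event $\{\cL^{\larg} \ge N^2 u^2\}$ is monotone in the set of contours.

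First, I would reduce to an independent model. By Lemma \ref{restrict}, the set $\Upsilon(\phi)$ is stochastically dominated (for inclusion) by the i.i.d.\ Bernoulli contour model $\bQ_{N,\gb}$, in which each $\gamma$ with $\bar\gamma\subset \gL_N$ is present independently with probability $e^{-\gb|\tilde\gamma|}$. Since $\cL^{\larg}(\phi)=\sum_{\gamma\in \Upsilon(\phi),\Diam(\gamma)\ge \ell_u}|\tilde\gamma|$ is increasing in the contour set, we obtain
\begin{equation*}
\bP_{N,\gb}\bigl[\cL^{\larg}(\phi)\ge N^2 u^2\bigr]\le \bQ_{N,\gb}\bigl[\tilde\cL\ge N^2 u^2\bigr],\qquad
\tilde\cL:=\sum_{\gamma:\Diam(\gamma)\ge \ell_u,\,\bar\gamma\subset\gL_N}|\tilde\gamma|\,\chi(\gamma).
\end{equation*}

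Second, I would apply the exponential Chebyshev inequality. For any $\lambda>0$, by independence and using $1+x\le e^x$,
\begin{equation*}
\bQ_{N,\gb}\bigl[\tilde\cL\ge N^2u^2\bigr]\le e^{-\lambda N^2u^2}\prod_{\gamma}\Bigl(1+(e^{\lambda|\tilde\gamma|}-1)e^{-\gb|\tilde\gamma|}\Bigr)
\le \exp\Bigl(-\lambda N^2u^2+\sum_{\gamma:\Diam(\gamma)\ge \ell_u,\,\bar\gamma\subset\gL_N} e^{-(\gb-\lambda)|\tilde\gamma|}\Bigr).
\end{equation*}
Since $\gb>\gb_1$, I fix $\lambda\in(0,\gb-\gb_1)$ so that the assumption \eqref{finite} holds with $\gb$ replaced by $\gb':=\gb-\lambda$. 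Translation invariance of the contour count (each contour is rooted at some site of its interior) gives
\begin{equation*}
\sum_{\gamma:\Diam(\gamma)\ge \ell_u,\,\bar\gamma\subset\gL_N} e^{-\gb'|\tilde\gamma|}\le N^2\sum_{\gamma:{\bf 0}\in\bar\gamma,\ |\tilde\gamma|\ge 2\ell_u}e^{-\gb'|\tilde\gamma|},
\end{equation*}
where I used that any contour with diameter at least $\ell_u$ has length at least $2\ell_u$ since a closed contour must enclose its diameter on both sides.

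Third, I would show that the tail in the r.h.s.\ above is subpolynomially small in $u$. By \eqref{finite}/\eqref{finiteprim} applied at $\gb'>\gb_1$, the full sum $\sum_{\gamma:{\bf 0}\in\bar\gamma}e^{-\gb'|\tilde\gamma|}$ converges; the number of contours of length $L$ containing ${\bf 0}$ grows at most like $L\mu^L$ for some $\mu<e^{\gb_1}$, hence a standard geometric tail estimate yields a constant $c=c(\gb)>0$ such that
\begin{equation*}
\sum_{\gamma:{\bf 0}\in\bar\gamma,\ |\tilde\gamma|\ge 2\ell_u}e^{-\gb'|\tilde\gamma|}\le C\,e^{-c\ell_u}.
\end{equation*}
With $\ell_u=\lceil(\log u)^2\rceil$, we have $e^{-c\ell_u}=o(u^2)$, so for $u$ sufficiently small the second term in the exponential is at most $\tfrac{\lambda}{2}N^2u^2$, yielding the claimed bound $e^{-cN^2 u^2}$ (for values of $u$ outside a small neighborhood of $0$ the statement is trivial after enlarging the constant).

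\textbf{Main obstacle.} The step requiring most care is the combinatorial bound on the number of contours of a given length enclosing a fixed point; one must verify that the linked-edges convention of Section \ref{contour} still yields self-avoiding-polygon-type growth with exponential rate below $e^{\gb_1}$, so that $\sum e^{-\gb'|\tilde\gamma|}$ converges and its tail beyond length $2\ell_u$ decays exponentially. Once this is in place, the rest is the routine Chernoff computation above.
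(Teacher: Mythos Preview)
Your proof is correct and follows essentially the same route as the paper: stochastic domination by the independent contour measure $\bQ_{N,\gb}$ via Lemma~\ref{restrict}, an exponential Chebyshev bound with $\lambda\in(0,\gb-\gb_1)$, reduction to a per-site sum over contours containing a fixed vertex, and the observation that the tail over contours of length $\gtrsim(\log u)^2$ is $o(u^2)$.

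One remark on your ``main obstacle'': there is in fact nothing to verify there. The quantity $\gb_1$ is \emph{defined} in \eqref{finiteprim} as the infimum of $\gb$ for which $\sum_{\gamma:\,{\bf 0}\in\bar\gamma}e^{-\gb|\tilde\gamma|}<\infty$, so convergence at any $\gb'>\gb_1$ is immediate and does not require a separate self-avoiding-polygon count. Exponential decay of the tail then follows by the standard splitting: pick $\gb''\in(\gb_1,\gb')$ and write $e^{-\gb'|\tilde\gamma|}=e^{-(\gb'-\gb'')|\tilde\gamma|}e^{-\gb''|\tilde\gamma|}$, so that
\[
\sum_{\gamma:\,{\bf 0}\in\bar\gamma,\ |\tilde\gamma|\ge L} e^{-\gb'|\tilde\gamma|}\le e^{-(\gb'-\gb'')L}\sum_{\gamma:\,{\bf 0}\in\bar\gamma} e^{-\gb''|\tilde\gamma|}=Ce^{-(\gb'-\gb'')L}.
\]
This is exactly how the paper proceeds (choosing $\lambda=(\gb-\gb_1)/2$ and extracting $e^{-\frac{\gb-\gb_1}{4}(\log u)^2}$), so your argument can be streamlined by dropping the combinatorial discussion altogether.
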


The remainder of the section is organized as follows: in  Section \ref{boodu}, we finish the proof of Proposition \ref{superbien} by proving \eqref{left} using Proposition \ref{youbad} and Lemma \ref{larg}.
These two results are proved respectively in Sections \ref{ybp} and \ref{lastl}.

\subsection{Proof of Proposition \ref{superbien}}\label{boodu}

We prove first that Proposition \ref{youbad} implies that 
\begin{multline}\label{yougood}
 \log \bE_{N,\gb}\left[\prod_{z\in  \lint 0, M/N-1 \rint^2 } e^{\bar G^{k,u}_n(\phi,z)}  \ | \ \Upsilon^{\larg}(\phi)\right]
 \\
 \le M^2  \left(F(\gb,u)+ u^{3+\gep}\right) \cN^N_{\good}+   C M^2 \cN^N_{\bad} u^{14/5},
\end{multline}
where 
$\cN^N_{\good}$ and  $\cN^N_{\bad}$ denote the number of good and bad cells in $\gL_N$.
We notice that if $z$ is fixed, $\bar G^{k,u}(\phi,y)$ is a function of   $\phi\restrict_{\gL_N\setminus \cB_z}$ for all $y\ne z$. 
Thus we have
\begin{multline}
 \bE_{N,\gb}\left[ \prod_{y\in \lint  0 ,M-1 \rint^2} 
 e^{\bar G^{k,u}(\phi,y)}    \ | \ \Upsilon^{\larg}(\phi)\right]
 \\=    \bE_{N,\gb}\left[ \prod_{y\in \lint  0 ,M-1 \rint^2\setminus \{ z\}}  e^{\bar G^{k,u}(\phi,y)} 
   \bE_{N,\gb}\left[ e^{\bar G^{k,u}(\phi,z)} \ | \ \Upsilon^{\larg}(\phi),  \phi\restrict_{\gL_N\setminus \cB_z}\right] \ | \ \Upsilon^{\larg}(\phi) \right]\\
 \le  \bE_{N,\gb}\left[ \prod_{y\in \lint  0 ,M-1 \rint^2\setminus \{z\} }  e^{\bar G^{k,u}(\phi,y)}   \ | \ \Upsilon^{\larg}(\phi) \right]\\
 \times
 \exp\left( \left( F(\gb,u)+ u^{3+\gep}\right)\ind_{\{\cB_z \text{ is good }\}}  + u^{14/5}\ind_{\{\cB_z \text{ is bad }\}} \right).
\end{multline}
Equation \eqref{yougood} then follows by iterating the process.

\medskip

\noindent The bound provided by \eqref{yougood}
is of the right order if  the proportion of bad cells is small.
This observation leads us to make the following decomposition of the partition function:
\begin{multline}\label{deuxterms}
\bE_{N,\gb}\left[\prod_{z\in  \lint 0, M/N-1 \rint^2 } e^{\bar G^{k,u}_n(\phi,z)} \right]= \bE_{N,\gb}\left[\prod_{z\in  \lint 0, M/N-1 \rint^2 } e^{\bar G^{k,u}_n(\phi,z)} \ind_{\{\cL^{\larg}(\phi) \ge N^2 u^2\}}\right]\\
+ \bE_{N,\gb}\left[\prod_{z\in  \lint 0, M/N-1 \rint^2 } e^{\bar G^{k,u}_n(\phi,z)}  \ind_{\{\cL^{\larg}(\phi) < N^2 u^2\}}\right].
\end{multline}
To estimate the first term  \eqref{deuxterms}, we consider the contribution of the 
worse case scenario  where that all boxes are bad.
Using Lemma \ref{larg}, we obtain for $u$ sufficiently small
\begin{multline}\label{lelarge}
  \bE_{N,\gb}\left[e^{G^{k,u}\phi^{-1}[n-k,\infty)}\ind_{\{\cL^{\larg}(\phi) \ge N^2 u^2\}}\right] \\ \le
  e^{CN^2u^{14/5}}\bP_{N,\gb}\left[ \cL^{\larg}(\phi) \ge N^2 u^2  \right] \le  e^{N^2(C u^{14/5}-c u^2)} \le 1.
\end{multline}
The main contribution to the partition function is thus given by the second term in \eqref{deuxterms}.
We know that that the number of bad cells is at most equal to $2\cL^{\larg}(\phi)$. Indeed we the size of the neighborhood 
$|\Delta_{\gamma}|$ is at most equal to $2|\tilde \gamma|$ and each cell contains at least one vertex in the neighborhood of a large contour.
Hence we have 
\begin{equation}
 \{\cL^{\larg}(\phi)<CN^2 u^2\}\subset \{ \cN_{\bad}\le  2CN^2 u^2 \}.
\end{equation}
Using \eqref{yougood} we obtain that 
 \begin{multline}
\bE_{N,\gb}\left[ \prod_{z\in  \lint 0, M/N-1 \rint^2 } e^{\bar G^{k,u}_n(\phi,z)}  \ | \ \cL^{\larg}(\phi)<CN^2 u^2 \right]\\
  \le \exp\left( N^2\left(F(\gb,u)+ u^{3+\gep}\right) + 2C M^2  N^2 u^2 \right)
  \le e^{ N^2\left(F(\gb,u)+ 2u^{3+\gep} \right)}.
 \end{multline}
Together with \eqref{deuxterms} and \eqref{lelarge} this concludes the proof of  \eqref{left}.
 
\qed

\subsection{Proof of Proposition \ref{youbad}}\label{ybp}

The case of bad boxes is trivial as we just need to remark that from \eqref{1415} we have 
$$\bar G^{k,u}(\phi,z)\le M^{2}g^{k}_1(u)\le CM^{2} u^{14/5}.$$
Let us thus focus on good boxes.
With our choice of parameter \eqref{ramets} we have $$|\bar G^{k,u}_n(\phi,z)|\le C M^2 u^{14/5}\le C' u^{9/5}.$$
Hence from Taylor's formula we have
\begin{multline}
\bE_{N,\gb}\left[ e^{\bar G^{k,u}(\phi,z)}  \ | \  \Upsilon^{\larg}(\phi),  \phi\restrict_{\gL_N\setminus \cB_z}\right]\\
\le \bE_{N,\gb}\left[\bar G^{k,u}(\phi,z)  \ | \  \Upsilon^{\larg}(\phi),  \phi\restrict_{\gL_N\setminus \cB_z} \right]
+
 \bE_{N,\gb}\left[ |\bar G^{k,u}(\phi,z)|^2  \ | \  \Upsilon^{\larg}(\phi),  \phi\restrict_{\gL_N\setminus \cB_z} \right]\\
 \le \bE_{N,\gb}\left[\bar G^{k,u}(\phi,z)  \ | \  \Upsilon^{\larg}(\phi),  \phi\restrict_{\gL_N\setminus \cB_z} \right]+ C M^2 u^{23/5}.
\end{multline}
The last term being smaller than $M^2 u^{3+\gep}$, we are done provided we show that
\begin{equation}\label{woops}
 \bE_{N,\gb}\left[\bar G^{k,u}(\phi,z)  \ | \  \Upsilon^{\larg}(\phi), \phi\restrict_{\gL_N\setminus \cB_z} \right]\le M^2 \left(F(\gb,u)+u^{3+\gep}\right).
\end{equation}
This is achieved with the following estimate.
\begin{lemma}\label{dabuty}
If $\cB_z$ is a good box, then for any $x\in \bar \cB_z$ we have 
for any $r\in \bbZ$
\begin{equation} \label{lineq}
 \bE_{N,\gb}\left[  \mathcal X^r(x) \ | \  \Upsilon^{\larg}(\phi), \phi\restrict_{\gL_N\setminus \cB_z} \right]
 \le F(\gb,u)+u^{3+\gep}.
\end{equation}
\end{lemma}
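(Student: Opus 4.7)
The strategy is to show, using the geometric picture of the problem, that conditionally on $\mathcal F:=\sigma(\Upsilon^{\larg}(\phi),\phi\restrict_{\gL_N\setminus\cB_z})$, with $\cB_z$ good and $x\in\bar\cB_z$, the local law of $\phi$ near $x$ agrees up to negligible error with an integer-shifted copy of the infinite-volume measure $\bP_\gb$, and then to evaluate the resulting infinite-volume expectation by plugging in the asymptotics of Proposition \ref{sendingout} and Lemma \ref{stimaG}.

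\emph{Decoupling.} Since $\cB_z$ is good, every cylinder in $\hat\Upsilon^{\larg}(\phi)$ either has its interior disjoint from, or entirely containing, $\cB_z$. Define the deterministic integer shift
\[
j=j(\mathcal F):=\sum_{\substack{(\gamma,k)\in\hat\Upsilon^{\larg}(\phi)\\ \bar\gamma\supseteq\cB_z}}\gep(\gamma)\,k,
\]
so that, by \eqref{cylinder}, $\phi=j+\phi'$ on $\cB_z$, where $\phi'$ is the contribution of the small cylinders. Conditionally on $\mathcal F$, $\phi'\restrict_{\cB_z}$ is distributed as an SOS field on $\cB_z$ with a (non-constant) boundary condition, conditioned on containing no contour of diameter $\ge\ell_u$. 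Since $d(x,\partial\cB_z)\ge\sqrt M=\lceil u^{-1/4}\rceil\gg\ell_u=\lceil(\log u)^2\rceil$, combining the exponential decay of correlations from Theorem \ref{infinitevol} applied on a sub-box of $\cB_z$ of radius $\sqrt M/2$ around $x$ with a Peierls estimate based on \eqref{finite}, together with the FKG-Harris stochastic domination of Lemma \ref{restrict} to absorb the conditioning on the absence of large contours, shows that the conditional law of $\phi'$ on an $O(1)$-neighborhood of $x$ is within total variation $e^{-c\sqrt M}=o(u^{\infty})$ of $\bP_\gb$. Setting $r'=r-j$ and using $\{q(\phi,x,r)=i\}=\{q(\phi',x,r')=i\}$ together with translation invariance of $\bP_\gb$ yields
\[
\bE_{N,\gb}[\cX^r(x)\mid\mathcal F]=\bE_\gb[\cX^{r'}({\bf 0})]+o(u^{10}).
\]
This decoupling is the main technical obstacle, since the conditional measure involves simultaneously a general boundary condition and the non-trivial event of no large contour in $\cB_z$; removing both requires an adaptation of the cluster-expansion type argument underlying Theorem \ref{infinitevol}.

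\emph{Computation and conclusion.} Consider first $r'\ge 1$. By Proposition \ref{sendingout} in infinite volume, $\bP_\gb[q(\phi,{\bf 0},r')=1]\le\alpha_1 J^{2r'}+CJ^{3r'}$ and $\bP_\gb[q(\phi,{\bf 0},r')=2]\ge 4\alpha_2 J^{3r'}-Cr'J^{4r'}$, while Lemma \ref{stimaG} gives $g^k_1(u)\le J^{2k}u$ and $g^k_2(u)\le-\frac{J^3-J^4}{1-J^4}J^{3k}+CJ^{3k}u<0$ for $u$ small. Dropping the non-positive contribution $-\tilde c_1 J^{3k}\bP_\gb[q\ge 3]$ and setting $n=k+r'$, we obtain
\[
\bE_\gb[\cX^{r'}({\bf 0})]\le\alpha_1 J^{2n}u-\frac{4\alpha_2(J^3-J^4)}{1-J^4}J^{3n}+R(n,u),
\]
with $R(n,u)=O\bigl(J^{2k+3r'}u+r'J^{3k+4r'}+J^{3n}u\bigr)$. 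With the choice $k=k_u=\lceil 9|\log u|/(20\gb)\rceil$ (so that $J^{-k}\asymp u^{-9/10}$), parametrizing $J^n=u^c$ and checking term by term shows $R(n,u)\le(c_0-c_0')J^{3n}+C|\log u|u^{3+1/10}$ uniformly in $r'\ge 1$, where $c_0=\frac{4\alpha_2(J^3-J^4)}{1-J^4}$ and $c_0'=\frac{2\alpha_2(J^3-J^4)}{1-J^3}$. The elementary inequality $1+J^4>2J^3$ for $J\in(0,1)$ is equivalent to $c_0>c_0'$, so the slack $(c_0-c_0')J^{3n}$ is absorbed and
\[
\alpha_1 J^{2n}u-c_0J^{3n}+(c_0-c_0')J^{3n}=\alpha_1 J^{2n}u-c_0'J^{3n}\le F(\gb,u)
\]
by definition \eqref{uds}. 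For the remaining range $r'\le 0$, the inclusion $\{\phi\ge r'\}\supseteq\{\phi\ge 0\}$ and the percolation of $\{\phi\ge 0\}$ under $\bP_\gb$ at low temperature (which follows from Proposition \ref{rourou}) yield $\bP_\gb[q(\phi,{\bf 0},r')\ge 3]\ge c_*>0$; since $g^k_2(u)\bP_\gb[q=2]\le 0$, we bound $\bE_\gb[\cX^{r'}({\bf 0})]\le J^{2k}u-c_*\tilde c_1 J^{3k}$, which is non-positive for $u$ small because $J^{2k}u=u^{14/5}\ll J^{3k}=u^{27/10}$. Combining with the decoupling error $o(u^{10})$ proves \eqref{lineq} for any $\gep<1/10$.
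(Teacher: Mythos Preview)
Your overall strategy matches the paper's: reduce the conditional expectation to the infinite-volume measure $\bP_\gb$ (up to a negligible error), then bound $\bE_\gb[\cX^{r'}({\bf 0})]$ for all integer shifts $r'$. However, there are gaps in both halves.

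\textbf{Decoupling.} You invoke Theorem \ref{infinitevol} to pass from the conditional law to $\bP_\gb$, but that theorem only compares $\bP_{\gL,\gb}$ with \emph{zero} boundary condition to $\bP_\gb$; here the conditional measure on $\cB_z$ has a nontrivial (non-constant) boundary and an extra conditioning on the absence of large contours. The paper resolves this in two steps: Lemma \ref{contourcondi} removes the large-contour conditioning by a Peierls bound, and Lemma \ref{leschaps} sandwiches the resulting measure $\bP_{U^+,U^-}$ between $\bP_{\gL_z,\gb}$ and its FKG-monotone extremes, then applies \eqref{decayz}. Your sentence ``combining Theorem \ref{infinitevol} \dots\ with a Peierls estimate \dots\ together with FKG-Harris'' gestures at these ingredients but does not assemble them; in particular the FKG sandwich is essential and not a consequence of Theorem \ref{infinitevol}.

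\textbf{Case split in the computation.} The real gap is your treatment of small positive $r'$. You split at $r'\le 0$ versus $r'\ge 1$ and, for all $r'\ge 1$, feed in the sharp asymptotics of Proposition \ref{sendingout}. But those asymptotics carry an error $Cr'J^{4r'}$ in $\bP_\gb[q=2]$ which, after multiplication by $|g^k_2(u)|\asymp J^{3}J^{3k}$, yields a term of order $CJ^{3k+3}\cdot r'J^{r'}$; for $r'=1$ this is $CJ\cdot J^{3k+3}$. Your claimed slack $(c_0-c_0')J^{3n}=(c_0-c_0')J^{3k+3}$ has $c_0-c_0'\asymp 2\alpha_2 J^3$ (both $c_0$ and $c_0'$ are of order $J^3$), so absorbing the error would require $CJ\lesssim \alpha_2 J^3$, i.e.\ $C\lesssim \alpha_2 J^2$. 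Nothing in Proposition \ref{sendingout} guarantees this; the constant $C$ there is unspecified and may be large. Hence your assertion ``$R(n,u)\le (c_0-c_0')J^{3n}+C|\log u|u^{3+1/10}$ uniformly in $r'\ge 1$'' is not justified for $r'$ of order one. The paper avoids this by splitting instead at $r=\sqrt{k}$: for $r\le\sqrt{k}$ it shows directly that $\bE_\gb[\cX^r]\le 0$ using only $\bP_\gb[q\ge 2]\ge cJ^{3\sqrt{k}}\ge u^\gep$ and $J^{3k}\gg J^{2k}u$; for $r>\sqrt{k}$ the factor $r J^{r}$ is genuinely small and the sharp asymptotics apply cleanly. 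Your argument for $r'\le 0$ is essentially this direct argument (and is fine, though the percolation claim is more than you need: FKG and $\bP_\gb[\phi(x)\ge 0]\ge 1/2$ already give $\bP_\gb[q\ge 3]\ge 1/8$). You just need to extend that regime up to $r'\sim\sqrt{k}$.

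Minor point: $g^k_1(u)=\log(1+J^{2k}(e^u-1))\ge J^{2k}u$ for $u>0$, so your inequality $g^k_1(u)\le J^{2k}u$ is reversed; the correct bound (used in \eqref{1415}) is $g^k_1(u)\le 2J^{2k}u$.
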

\noindent We obtain \eqref{woops} by applying \eqref{lineq} for $r=n-k$ and summing over $x\in \bar \cB_z$.

\medskip

\noindent We prove the inequality \eqref{lineq} with a slightly stronger conditioning.
We define
\begin{equation}
\Upsilon^{z}(\phi):=\Upsilon^{\larg}(\phi) \cup 
 \left\{ \gamma \in  \Upsilon(\phi) \ : \Xi_\gamma \cap \left( \gL_N \setminus \cB_z \right) \ne \emptyset \right\},
\end{equation}
and let $\hat \Upsilon^{z}(\phi)$ denote the corresponding set of cylinders.
We  are going to prove  
\begin{equation}\label{looze}
\bE_{N,\gb}\left[  \mathcal X^r(x) \ | \  \hat \Upsilon^{z}(\phi) \right]\le  F(\gb,u)+u^{3+\gep}.
\end{equation}
Note that the knowledge of  $\hat \Upsilon^{z}(\phi)$ is sufficient to reconstruct the field outside of   $\gL_N \setminus \cB_z$ (recall \eqref{cylinder}), and thus the inequality 
\eqref{lineq} which involves conditioning with respect to less information  is a consequence of \eqref{looze}.
To show that \eqref{looze} holds (we do so at the end of the section) we need first to provide a description and some key properties for this conditioned measure. 

\medskip

\noindent If $\cB_z$ is good, then the contours of $\Upsilon^{z}(\phi)$ do not touch $\bar \cB_z$. For large contours, this follows from the definition of goodness. The other contours  in 
$\Upsilon^{z}(\phi)$ have diameter smaller than $\ell_u$ we have  
$$d(\bar \cB_z,\cB^{\cc}_z)=\sqrt{M} \ge  \ell_u.$$
Hence we can consider $\gL_z$ the largest connected set remaining in $\cB_z$ after deleting the interior of contours in $\Upsilon^{z}(\phi)$ (see Figure \ref{gamma})
$$\gL_z:=  \bigcup \left\{ \gG \text{ connected sets}  : \  \bar \cB_z\subset \gG \subset \left[\cB_z\setminus \left( \cup_{ \gamma\in \Upsilon^{z}(\phi)}\bar \gamma \right)\right] \right\}.$$
If $u$ is sufficiently small we have 
\begin{equation}
 d(\bar \cB_z,\partial \gL_z)\ge \sqrt{M}- \ell \ge \sqrt{M}/2.
\end{equation}
Using Lemma \ref{geom} and \ref{restrict}, we can describe the law of $\phi\restrict_{\gL_z}$, conditioned to $\hat \Upsilon^{z}(\phi)$:
\begin{itemize}
\item [(A)] The boundary condition is given by
\begin{equation}\label{deffm}
 m=m_z:=\sum_{(\gamma,k)\in \hat \Upsilon^{\larg}(\phi)}\varphi_{(\gamma,k)}(x_z),
\end{equation}
where $x_z$ is an arbitrary point in $\cB_z$ (the function $\sum_{(\gamma,k)\in \hat \Upsilon^{\larg}(\phi)}\varphi_{(\gamma,k)}$ is constant on  $\cB_z$ by definition of good boxes).
\item [(B)]
The distribution of contours in the set $\gL_z$ is the same as that described in Lemma \ref{restrict}, with the additional restrictions that there are no 
large contours and that contours must be compatible with that in $\Upsilon^{z}(\phi)$.
The latter condition can be expressed in a more direct manner. We define
\begin{equation}\begin{split}
 U^+=U^+_z&:= \left( \bigcup_{\gamma \in \Upsilon^{z}_-(\phi)} \Delta^+_{\gamma}  \right)\cap \gL_z,\\
 U^-=U^-_z&:= \left( \bigcup_{\gamma \in \Upsilon^{z}_+(\phi)} \Delta^+_{\gamma}  \right)\cap \gL_z,
\end{split}\end{equation}
where $\Upsilon^{z}_{\pm}(\phi)$ denote the sets of positive/negative contours in $\Upsilon^{z}(\phi)$. 
Note that $U^+$ and $U^-$ are not necessarily disjoint.
We must have 
\begin{equation}
\forall x\in U^-, \  \phi(x)\ge m  \quad \text{and} \quad \forall x\in U^-,\  \phi(x)\le m.
\end{equation}
\end{itemize}

\begin{figure}[ht]
\begin{center}
\leavevmode
\epsfxsize =6 cm
\psfragscanon
\psfrag{ggg}[c][l]{\small $\gL_z$}
\psfrag{barBz}[c][l]{ \small $\bar \cB_z$}
\epsfbox{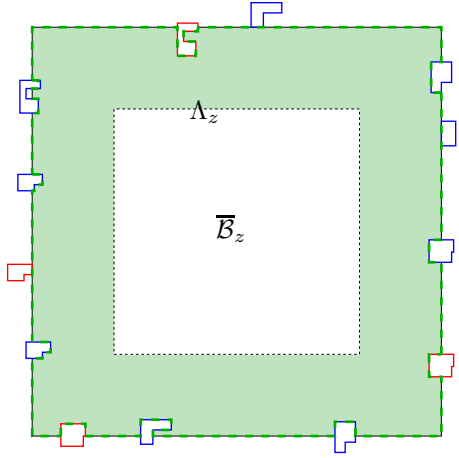}
\end{center}
\caption{\label{gamma} 
In a good cell $\cB_z$ we obtain the domain $\gL_z$ (which is the union of the central white area $\cB_z$ and the shadowed area surrounding it) considering the the connected component containing $\bar\cB_z$ 
after removing the interior of all contours which cross
$\partial \cB_z$. By the assumption that the box is good, these contours have diameter smaller than 
$|\log u|^2$ and thus cannot intersect the inner box $\bar\cB_z$.}
\end{figure}
\noindent We we let $\mu=\mu_z$ denote the  (conditional) distribution of $\phi\restrict_{\gL_z}$ we have thus 
\begin{equation}
 \mu_z(\cdot)= 
 \bP^m_{\gb,\gL_z}\left[\cdot \ | \ \phi\restrict_{U^+}\ge m,\  \phi\restrict_{U^-}\le m, \forall \gamma \in \Upsilon(\phi),\ \Diam(\gamma)\le \ell_u  \right].
\end{equation}
To simplify the notation we set 
\begin{equation}
 \bP_{U^+,U^-}:=\bP^m_{\gb,\gL_z}\left[\cdot \ | \ \phi\restrict_U^+\ge m,  \phi\restrict_U^-\le m \right].
\end{equation}
From now on, we set $m=0$ (this is no loss of generality since changing the value of $m$ corresponds to changing the value of $r$ for which we must prove \eqref{looze}) 
and replace $\gL_z$ by $\gL$, an arbitrary simply connected subset of $\cB_0$ such that $d(\gL^{\cc}, \bar \cB_0)\ge \sqrt{M}/2$.
We also consider $U^+$ and $U^-$ to be arbitrary subsets of $\partial \gL$.

\medskip

We want to show that to estimate the expectation of $\mathcal X(x)$, we can
replace $\mu_z$ by the infinite volume limit $\bP_{\gb}$ at the cost of a very small error term.
We prove this in two steps. Firstly, we show that conditioning on having no large contours only yields a small error term.

\begin{lemma}\label{contourcondi}
When $\gb>\gb_1$
There exists a constant $c$ such that for all $u$ sufficiently small 
 $$\bP_{U^+,U^-}\left(\exists  \gamma \in \Upsilon(\phi),\ \Diam(\gamma)> \ell_u\right)\le e^{-c|\log u|^2}.$$
 As an immediate consequence for any event $A$
 \begin{equation}\label{condition}
 \left|\bP_{U^+,U^-}(A)-  \bP_{U^+,U^-}\left( A \ | \ \forall \gamma \in \Upsilon(\phi),\ \Diam(\gamma)\le \ell_u \right) \right| \le e^{-c|\log u|^2}.
\end{equation}
 
\end{lemma}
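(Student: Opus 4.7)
The key is to bound the probability of existence of a large contour under $\bP_{U^+,U^-}$; the second statement \eqref{condition} follows as a direct consequence via the elementary inequality $|\bP(A)-\bP(A\mid B)| \le 2\bP(B^c)$ (valid whenever $\bP(B^c) \le 1/2$). After setting $m = 0$ by translation invariance, my plan is to carry out a Peierls-type argument in the contour representation of Section \ref{contour}, adapted to the constrained measure $\bP_{U^+,U^-}$. The starting point is the union bound
\begin{equation*}
\bP_{U^+,U^-}\left(\exists \gamma \in \Upsilon(\phi): \Diam(\gamma) > \ell_u \right) \le \sum_{\gamma: \bar\gamma \subset \gL_z,\, \Diam(\gamma) > \ell_u} \bP_{U^+,U^-}(\gamma \in \Upsilon(\phi)),
\end{equation*}
which reduces the task to proving a Peierls-type estimate $\bP_{U^+,U^-}(\gamma \in \Upsilon(\phi)) \le C e^{-\gb'|\tilde\gamma|}$ for each fixed large $\gamma$, with some $\gb' > \gb_1$.

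The core step is a contour-removal argument. Given $\phi$ with $(\gamma,k) \in \hat\Upsilon(\phi)$ satisfying the constraints, the natural map is $T(\phi) := \phi - \gep(\gamma) k \ind_{\bar\gamma}$, which deletes $\gamma$ from the cylinder decomposition and decreases $\cH$ by exactly $k|\tilde\gamma|$. When $\bar\gamma$ does not meet the wrong-sign set $U^{\gep(\gamma)}$, the constraints are automatically preserved, $T$ is injective onto constrained configurations not containing $\gamma$, and summing over $k$ gives the clean bound $\bP_{U^+,U^-}(\gamma \in \Upsilon(\phi)) \le C e^{-\gb|\tilde\gamma|}$. The hard part will be the case $\bar\gamma \cap U^{\gep(\gamma)} \ne \emptyset$, where the subtraction can push $T(\phi)$ below $0$ on $U^+$ (or above $0$ on $U^-$). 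My remedy is to further modify $T$ by resetting $T(\phi)(x) := 0$ at each violating site $x$, while recording separately the original value $\phi(x)$ as auxiliary data to preserve injectivity. Each reset creates only small local contours around $x$ and adds a bounded amount to the energy; summing over the auxiliary data contributes a bounded factor $(1+J)/(1-J)$ per violating site. Since violating sites lie in $\bar\gamma \cap \partial \gL_z$, their number is at most a geometric constant times $|\tilde\gamma|$, and the net effect is a Peierls bound $C e^{-(\gb-\eta)|\tilde\gamma|}$ for some $\eta > 0$ that can be made small by choosing $\gb$ sufficiently large.

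To conclude, I would invoke the classical self-avoiding polygon count: at most $C^n$ geometric contours of length $n$ contain any given dual vertex. Combined with $|\tilde\gamma| \ge 2\Diam(\gamma) > 2\ell_u$, and choosing $\eta$ small enough that $\gb - \eta > \gb_1$ (so the Peierls series remains summable by \eqref{finite}), one gets
\begin{equation*}
\sum_{\gamma:\bar\gamma \subset \gL_z,\, \Diam(\gamma) > \ell_u} C e^{-(\gb-\eta)|\tilde\gamma|} \le C' |\gL_z| \sum_{n \ge 2\ell_u} C^n e^{-(\gb-\eta) n} \le C'' M^2 e^{-c\ell_u}.
\end{equation*}
Using $M^2 \le u^{-1/2}$ and $\ell_u = \lceil (\log u)^2 \rceil$ from \eqref{ramets}, this is at most $C'' u^{-1/2} e^{-c(\log u)^2} \le e^{-c'(\log u)^2}$ for $u$ small, completing the proof. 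The main obstacle will be the contour surgery in the constrained case: one must carefully control the combinatorial overhead of repairing violating sites while keeping $T$ genuinely injective; the rest consists of routine Peierls bookkeeping.
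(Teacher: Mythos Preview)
Your approach via direct contour removal and repair is workable in spirit but considerably more involved than necessary, and it carries a real risk of failing for $\gb$ close to $\gb_1$.

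The paper bypasses the entire ``hard part'' you identify. Instead of a bespoke Peierls surgery under $\bP_{U^+,U^-}$, it observes (exactly as in Lemma~\ref{restrict}) that the distribution of the contour set under $\bP_{U^+,U^-}$ is obtained from the independent Bernoulli measure $\bQ_{\gL,\gb}$ of~\eqref{defgroq} by conditioning on a decreasing event, and hence is stochastically dominated by $\bQ_{\gL,\gb}$ via FKG--Harris. This gives the clean bound $\bP_{U^+,U^-}(\gamma\in\Upsilon(\phi))\le e^{-\gb|\tilde\gamma|}$ for every signed contour $\gamma$ with no loss in the exponent and no surgery at all. The union bound over contours of diameter $>\ell_u$ then proceeds exactly as you describe, yielding $M^2 e^{-c\ell_u}\le e^{-c'(\log u)^2}$.

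By contrast, your repair step costs a factor of order $((1+J)/(1-J))^{C|\tilde\gamma|}$, i.e.\ an exponent loss $\eta\approx C\log\frac{1+J}{1-J}$. You then need $\gb-\eta>\gb_1$ for the Peierls series to converge, and since $\gb_1\in(\log 2,\log 3)$ this fails when $\gb$ is only slightly above $\gb_1$: your own remark that $\eta$ ``can be made small by choosing $\gb$ sufficiently large'' concedes the point. Beyond this, the surgery details you flag as the main obstacle (injectivity when the set of violating sites itself depends on $(\phi,k)$, and ensuring the repaired configuration is still a compatible cylinder collection) are genuinely delicate and not resolved in your sketch. The stochastic-domination route eliminates all of these issues in one line.
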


Secondly we show that $\bP_{U^+,U^-}$ can be replaced by 
$\bP_{\gb}$ also at the cost of a small error term.
Consider the following small extension of the box $\bar \cB_z$
$$\bar \cB^{(2)}_z:= \{x\in \bbZ^2 \ : \ \exists y\in \bar \cB_z, |y-x|\le 2\}.$$

\begin{lemma}\label{leschaps}
When $\gb>\gb_1$, for any subset  $U^+$  and $U^-$ of $\partial \gL_z$.
For any monotone  event $A\in \sigma\left( \phi\restrict_{\bar \cB^{(2)}_z} \right)$ we have 
\begin{equation}\label{lea}
  |\bP_{U^+,U^-}(A)-\bP_{\gb}(A)|\le e^{-cM^{1/2}}.
 \end{equation}
 The inequality is also valid for the intersection of two monotone events.
 \end{lemma}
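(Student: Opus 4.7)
My plan is to combine three ingredients: the FKG inequality for the Holley measure under conditioning, the exponential decay from Theorem \ref{infinitevol}, and a contour-based decoupling estimate. Since $d(\partial\gL,\bar\cB^{(2)}_z)\ge \sqrt{M}/2-O(1)$ and $|\bar\cB^{(2)}_z|=O(M^2)$, Theorem \ref{infinitevol} gives
\[
|\bP^0_{\gb,\gL}[A]-\bP_\gb[A]|\le CM^2 e^{-\sqrt{M}/2},
\]
so it suffices to show $|\bP_{U^+,U^-}(A)-\bP^0_{\gb,\gL}[A]|\le e^{-c\sqrt{M}}$. Since $\bP^0_{\gb,\gL}$ satisfies Holley's condition, a property preserved by conditioning on monotone events, setting $\nu=\bP^0_{\gb,\gL}$, $B=\{\phi|_{U^+}\ge 0\}$, and $D=\{\phi|_{U^-}\le 0\}$, FKG yields for an increasing $A$ the sandwich $\nu[A|D]\le \bP_{U^+,U^-}(A)\le \nu[A|B]$, together with $\nu[A|D]\le \nu[A]\le \nu[A|B]$. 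Thus the lemma follows once both one-sided differences $\nu[A|B]-\nu[A]$ and $\nu[A]-\nu[A|D]$ are bounded by $e^{-c\sqrt{M}}$.

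For the first one-sided bound I work in the contour representation of Lemma \ref{restrict}: the contours of $\phi$ under $\nu$ are iid Bernoullis $\chi(\gamma)$ with parameter $e^{-\gb|\tilde\gamma|}$ conditioned on compatibility, with independent geometric intensities (Lemma \ref{geom}). Call a contour \emph{inner} if its interior intersects $\bar\cB^{(2)}_z$ and \emph{outer} if its interior intersects $U^+$; then $A$ is measurable with respect to inner data and $B$ with respect to outer data. Let $\mathcal{X}$ be the event that (a) no single contour has interior intersecting both $\bar\cB^{(2)}_z$ and $U^+$, and (b) every inner or outer contour has length strictly less than $\sqrt{M}/8$. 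A failure of (a) forces a contour of length at least $d(\bar\cB^{(2)}_z,U^+)\ge \sqrt{M}/2$, while a failure of (b) forces a contour of length at least $\sqrt{M}/8$ anchored near $\bar\cB^{(2)}_z$ or $U^+$; by the Peierls estimate under assumption \eqref{finite} (which holds since $\gb>\gb_1$), we get $\nu[\mathcal{X}^c]\le CM^2 e^{-c\sqrt{M}}$. On $\mathcal{X}$, the interiors of inner and outer contours are at mutual distance $\ge \sqrt{M}/4$, so the compatibility conditions (i)-(iii) of Section \ref{contour} between an inner and an outer contour are automatically satisfied. Consequently the Bernoulli-conditioned-on-compatibility measure factors across the inner/outer partition, and combined with Lemma \ref{geom} this gives $\nu[A|B\cap\mathcal{X}]=\nu[A|\mathcal{X}]$, whence
\[
|\nu[A|B]-\nu[A]|\le 2\nu[\mathcal{X}^c]+2\nu[\mathcal{X}^c|B].
\]

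The main obstacle is controlling $\nu[\mathcal{X}^c|B]$: the naive estimate $\nu[\mathcal{X}^c]/\nu[B]$ may blow up because the trivial FKG lower bound $\nu[B]\ge(1/2)^{|U^+|}$ is too weak when $|U^+|$ has order $M$. The remedy is to bypass division by $\nu[B]$ by applying the decoupling argument a second time, now to the pair $(\mathcal{X}^c,B)$. The event $\mathcal{X}^c$ is witnessed by a specific long contour anchored near $\bar\cB^{(2)}_z\cup U^+$; the weight of such a contour in the Bernoulli representation is $e^{-\gb|\tilde\gamma|}\le e^{-c\sqrt{M}}$, and the remaining contour configuration contributing to $B$ is essentially supported on the outer family, which interacts with the witnessing contour only through compatibility constraints that add at most a constant factor per contour. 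A Peierls-type resummation in the spirit of Proposition \ref{rourou} then yields the direct bound $\nu[\mathcal{X}^c\cap B]\le C e^{-c\sqrt{M}}\nu[B]$, giving $\nu[\mathcal{X}^c|B]\le C e^{-c\sqrt{M}}$ and closing the one-sided estimate. The bound $\nu[A]-\nu[A|D]\le e^{-c\sqrt{M}}$ follows by the symmetric argument applied to the decreasing conditioning on $D$, while the case of the intersection of two monotone events $A=A_+\cap A_-$ (with $A_+$ increasing and $A_-$ decreasing) is handled by applying the FKG sandwich twice, once for each factor, controlled by the same contour decoupling on the event $\mathcal{X}$.
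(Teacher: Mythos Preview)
Your FKG sandwich is sound, and the reduction to bounding $\nu[A\mid B]-\nu[A]$ (and its symmetric counterpart) is correct. The argument breaks down at the contour-factorization step.

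The claim that on $\mathcal{X}$ the conditioned Bernoulli measure factors across the inner/outer partition is not justified. Your inner/outer families do not exhaust $\cC_\gL$: there are ``middle'' contours whose interior meets neither $\bar\cB^{(2)}_z$ nor $U^+$. Even if on $\mathcal{X}$ every inner contour is automatically compatible with every outer one, the global compatibility constraint still couples inner contours to middle contours and, separately, middle contours to outer contours. When you marginalize over the middle family, the resulting joint law of the pair (inner, outer) is \emph{not} a product, and hence $\nu[A\mid B\cap\mathcal{X}]=\nu[A\mid\mathcal{X}]$ does not follow. This is a genuine obstruction rather than a technicality: the conditioned-on-compatibility measure carries long-range correlations precisely through chains of overlapping contours of the kind you are discarding.

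The second gap is the bound $\nu[\mathcal{X}^c\mid B]\le Ce^{-c\sqrt{M}}$. You correctly note that dividing by $\nu[B]$ is fatal, but the proposed fix (``a Peierls-type resummation in the spirit of Proposition~\ref{rourou}'') is only asserted. It is not clear how to carry it out: $B=\{\phi\restrict_{U^+}\ge 0\}$ is monotone in $\phi$ but \emph{not} in the contour variables $\chi(\gamma)$ (positive and negative contours pull $\phi$ in opposite directions), so the FKG--Harris domination of Lemma~\ref{restrict} does not transfer to $\nu[\,\cdot\mid B]$, and a direct Peierls sum for $\nu[\mathcal{X}^c\cap B]$ would require controlling the partition function restricted to $B$, which is exactly the quantity you are trying to avoid.

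The paper bypasses both issues by a different mechanism. After the same FKG sandwich (reducing to the extremal choices $U^\pm=\partial\gL$), it does not attempt any contour factorization. Instead, for $\bP_-$ say, it peels off the layer of \emph{external} contours touching $\partial\gL$ together with the internal boundary $\partial^-\gL$ --- under the extremal conditioning these contours all have the same sign --- producing a random simply connected subdomain $\gL'$. By the domain-Markov property the conditional law on $\gL'$ is a plain SOS measure conditioned on a single monotone event, hence it stochastically dominates $\bP_{\gL',\gb}$. A standard Peierls bound shows $\gL'$ still contains a box of side $M-O(\sqrt{M})$ except with probability $e^{-c\sqrt{M}}$, and then Theorem~\ref{infinitevol} gives $|\bP_{\gL',\gb}(A)-\bP_\gb(A)|\le e^{-c\sqrt{M}}$ directly. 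The intersection-of-two-monotone-events case is then reduced algebraically to the monotone case via $A\cap B=A\setminus(A\cap B^{\cc})$, where $A\cap B^{\cc}$ is again monotone.
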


\begin{proof}[Proof of Lemma \ref{contourcondi}]

Using the FKG-Harris inequality as in Lemma \ref{restrict}, we find that the distribution of contours for 
 $\bP_{U^+,U^-}$ is dominated by $\bQ_{\gL,\gb}$ (recall \eqref{defgroq}).
 Using this a  union bound we obtain 
 
 \begin{multline}\label{zune}
  \bP_{U^+,U^-}[\exists  \gamma \in \Upsilon(\phi),\ \Diam(\gamma)> \ell_u]\le  \bQ_{\gL,\gb}[\exists  \gamma \in \Upsilon(\phi),\ \Diam(\gamma)> \ell_u]
  \\ \le
  \sum_{\{\gamma \in \cC_{\gL_z} \ : \ \Diam(\gamma)>\ell \}} e^{-\gb|\tilde \gamma|}
  \le M^2 e^{-2c\ell_u}\le e^{-c|\log u|^2/2},
 \end{multline}
where the penultimate inequality is a consequence of $\gb>\gb_1$ (recall \eqref{finiteprim}) provided that $u$ is sufficiently small.

\end{proof}

 \begin{proof}[Proof of Lemma \ref{leschaps}]
The proof for monotone events is the same as that of \cite[Equation (7.20)]{cf:PLMST}. We include it for completeness.
Without loss of generality we can assume that $A$ is increasing.
Using the inequality for monotone events we have 
\begin{equation}
\bP_-(A) \le \bP_{U^+,U^-}(A)\le \bP_+(A)
\end{equation}
where $\bP_\pm$ are obtained by considering the respective cases $(U^+,U^-)=(\partial \gL,\emptyset)$ and  $(U^+,U^-)=(\emptyset,\partial \gL)$.
Hence we only need to prove the result for those measures. 
We consider only $\bP_-$, the proof for $\bP_+$ being obtained by symmetry.

\medskip

Let us consider the set of contours of $\phi$ adjoining the boundary of $\gL$ (note that under $\bP_-$ all these contours have to be negative ones)
$$\Upsilon^{\ext}(\phi):=\left\{\gamma\in \Upsilon(\phi) \ : \ d(\bar \gamma,\partial \gL)=1 \right\}.$$
and set 
$$\gL'(\phi):=\gL \setminus\left( \left(\cup_{\gamma\in \Upsilon^{\ext}(\phi)}\bar \gamma \right)\cup \partial^- \gL \right).$$
where  $\partial^- \gL:=\{ x\in \gL \ : \ \exists y\in \gL^{\cc}, y\sim x \}$ denotes the internal boundary.
Now, we observe that restricted to $\gL'(\phi)$, the distribution of $\phi$ in $\gL'$ dominates $\bP_{\gL',\gb}$.
Indeed the field is conditioned to be non-negative in the neighborhood of $(\bar \gamma)_{\gamma\in \Upsilon^{\ext}}$.

\medskip

\noindent Let us consider the event 
$$E:=\left\{ \left\lint \frac{3\sqrt{M}}{4}, M-\frac{3\sqrt{M}}{4} \right\rint  \subset \gL'\right\}.$$
Because the occurrence of $E^{\cc}$ implies the existence of a contour of size larger than $\sqrt{M}/4$,
we have similarly to \eqref{zune}
\begin{equation}\label{bloor}
\bP^{m}_-(E)\ge 1-e^{-cM^{1/2}}. 
\end{equation}
Now conditioning on $E$ and
 using the stochastic domination mentioned above we have
\begin{equation}
 \bP_-( A \  | \ E)\le \bE^-_{m}\left[ \bP_{\gL',\gb}(A) \ | \ E \right]\le  e^{-c M^{1/2}}+\bP_{\gb}(A),
\end{equation}
where the last inequality uses the exponential decay \eqref{decayz}.
To obtain the corresponding lower bound we also stochastic domination and  \eqref{decayz}. We have
\begin{equation}
  \bP_-( A )\ge \bP_{\gL,\gb}(A)\le  e^{-c M^{1/2}}+\bP_{\gb}(A).
\end{equation}

\medskip

We finally deal with the case of intersection of monotone events.
We consider $C=A\cap B$ where $A$ is increasing and $B$ is decreasing (the other cases are trivial).
We have 
 \begin{multline}
   |\bP_{U^+,U^-}(C)-\bP_{\gb}(C) |\\
  =|\bP_{U^+,U^-}(A)-\bP_{\gb}(A)- \bP_{U^+,U^-}(B^{\cc}\cap A)+\bP_{\gb}(B^{\cc}\cap A)|
  \\
   \le  |\bP_{U^+,U^-}(A)-\bP_{\gb}(A)|+ |\bP_{U^+,U^-}(B^{\cc}\cap A)-\bP_{\gb}(B^{\cc}\cap A)|
   \le 2 e^{-cM^{1/2}},
   \end{multline}
   and we can conclude by changing the value of $c$.
\end{proof}

\begin{proof}[Proof of Lemma \ref{dabuty}]
Recall that we need to prove \eqref{looze}. 
First we remark that using Lemma \ref{contourcondi} and \ref{leschaps}, we can, at the cost of a small error term
replace $\bE_{N,\gb}[ \cdot \ | \ \hat \Upsilon^{z}(\phi)]$ by the infinite volume measure $\bE_{\gb}$
(recalling \eqref{deffm} we need to change $r$ by $r-m$).

\medskip

Indeed as $q(\phi,x,r)$ is an increasing function of $\phi$,
 $\{q(\phi,x,r)\ge 3\}$ is an increasing event. The two other events we have to deal with 
can be written as intersection of two monotone events as $\{q(\phi,x,r)= i\}= \{q(\phi,x,r)\ge i\} \cap \{q(\phi,x,r)\le i\}$.
 To apply Lemma \ref{leschaps} we also use the fact that all these events can be determined by looking at the field in the box $\bar \cB^{(2)}_z$.
Hence to conclude we are left with proving
\begin{equation}\label{dzl}
 \bE_{\gb}\left[ \cX^r(x)  \right] \le F(\gb,u)+u^{3+\gep}.
\end{equation}
We start with the case $r\le \sqrt{k}$.
In that case we have 
\begin{equation}
 \bP_{\gb}\left[  q(x,\phi,r)\ge 2\right]
  \ge  \bP_{\gb}\left[  q(x,\phi,\sqrt{k})\ge 2\right]
 \ge cJ^{3\sqrt{k}}\ge u^{\gep},
 \end{equation}
where the last inequality is valid for any $\gep>0$ provided that $u$ is sufficiently small. 
Replacing $k$ by its value, we obtain that when $r\le \sqrt{k}$
\begin{equation}
 \bE_{\gb}\left[ \cX^r(x)  \right]\le g^1_k(u)- \tilde c_1 J^{3k}\bP_{\gb}\left[  q(x,\phi,\sqrt{k}\ge 2\right]
 \le C u^{14/5}- \tilde c_1 u^{27/10+\gep}\le 0.
 \end{equation}
This case being dealt with, we can assume $r$ to be larger than $\sqrt{k}$. Using Proposition \ref{sendingout} and  Taylor's expansion for  $g^k_1(u)$, $g^k_2(u)$  
to estimate the various terms 
(we use our choice of $k$ to determine which are the dominant terms),
we obtain that 

\begin{multline}
  \bE_{\gb}\left[ \cX^r(x)  \right]\le
  g^k_1(u) \bP_{\gb}\left[     q(x,\phi,r)=1 \right]+
 g^k_2(u) \bP_{\gb}\left[  q(x,\phi,r)=2\right]
 \\
 \le \left(J^{2k} u+C u^2 J^{2k} \right)\left(\alpha_1 J^{2r}+C J^{3r} \right)
    - \left(\frac{J^3-J^4}{1-J^4}J^{3k}- C J^{3k}u  \right)\left( 2 \alpha_2 J^{3r}-C r J^{4r} \right)\\
    \le \alpha_1 J^{2(k+r)} u-  \frac{2\alpha_2(J^3-J^4)}{1-J^4}J^{3(k+r)}
    + C\left( u^2 J^{2(k+r)}+ u J^{2k+3r} + rJ^{3k+4r} \right).
\end{multline}
We notice that the right-hand side is negative if $k+r\le  \frac{|\log u|}{2\gb}-K$ for a fixed $K$ sufficiently large
(it is just sufficient to compare the two first terms as the others are negligible).
For larger values of $r$, the last term is smaller than $u^{3+\gep}$.
We conclude by observing that by definition (recall \eqref{uds})
$$\alpha_1 J^{2(k+r)} u-  \frac{2\alpha_2(J^3-J^4)}{1-J^4}J^{3(k+r)}\le F(\gb,h).$$

\end{proof}

\subsection{Proof of Lemma \ref{larg}}\label{lastl}

For $\gl>0$, $e^{\gl \cL_{u}}$ is an increasing function of the set of contours. Thus we have by Lemma \ref{restrict}
\begin{equation}
\bE_{N,\gb}\left[ e^{\gl \cL_{u}} \right]\le \bQ_{\gL_N,\gb}\left[ e^{\gl \cL_{u}} \right]=
\prodtwo{\gamma\in \cC_{\gL_N}}{|\tilde \gamma|\ge (\log u)^2}
\left(1+e^{-\gb|\tilde \gamma|} \left(e^{\gl|\tilde \gamma|}-1\right) \right).
\end{equation}
Now we remark that as all factors of the products are larger than one. Hence we have
\begin{equation}
 \prodtwo{(\gamma,\gep)\in \cC_{\gL_N}}{|\tilde \gamma|\ge (\log u)^2}\left(1+e^{-\gb|\tilde \gamma|} \left(e^{\gl|\tilde \gamma|}-1\right) \right)
 \le \prod_{x\in \gL_N}\prod_{\{\gamma \ : \ |\tilde \gamma|\ge (\log u)^2 \text{ and } x\in \bar \gamma \}}\!\!\!\!\!\!\!\!\!\!\!\!\!\!\!\!\!\!\!\!\!\left(1+e^{-\gb|\tilde \gamma|} \left(e^{\gl|\tilde \gamma|}-1\right) \right),
\end{equation}
As a consequence we obtain
\begin{equation}
 \bE_{N,\gb}\left[ e^{\gl \cL_{u}} \right]\le 
 \left( \prod_{\{\gamma \ : \ |\tilde \gamma|\ge (\log u)^2 \text{ and } {\bf 0}\in \bar \gamma \}}1+e^{(\gl-\gb)|\tilde \gamma|}\right)^{N^2},
\end{equation}
which yields in turn
\begin{equation}
\frac{1}{N^2} \log  \bE_{N,\gb}\left[ e^{\gl \cL_{u}} \right]\le 
 \sum_{\{\gamma \ : \  |\tilde \gamma|\ge (\log u)^2 \text{ and } {\bf 0}\in \bar \gamma \}}e^{(\gl-\gb)|\tilde \gamma|}.
\end{equation}
If $\gb>\gb_1$ (recall \eqref{finiteprim}) then choosing $\gl_{\gb}= \frac{\gb-\gb_1}{2}$ the right-hand side is smaller than
$e^{-\frac{(\gb-\gb_1) (\log u)^2}{4}}$ for small values of $u$.
Using the usual large deviation computation, we can conclude that
\begin{equation}
 \frac{1}{N^2} \log  \bE_{N,\gb}\left[ \cL_{u} \ge  u^2  \right]\le -\gl_{\gb} u^2 +e^{-\frac{(\gb-\gb_1) (\log u)^2}{4}}
 \le -\gl_{\gb} u^2/2.
\end{equation}

\bigskip

{\bf Acknowledgements:} The author is grateful to Fabio Toninelli for enlightening suggestions. He also acknowledges the support of a productivity grant from CNPq and
of a JCNE  grant from FAPERJ.

\appendix

\section{Estimating peak probabilities}\label{ppeaks}

\subsection{Proof of Proposition \ref{rourou}}

Let us start with the lower bound part.
Given $A\subset \gL$ we define  $\Theta_A$ the bijection from $\gO_{\gL}$ to itself which adds one to every coordinate in $A$, that is
$$\Theta_A(\phi)=\phi+\ind_A$$
It is easy to check that given $x,y$ and $z$ satisfying the assumption $x\sim y \sim z$ we have 
\begin{multline}\label{lezgop}
 \cH_{\gL}(\Theta_{\{x\}}(\phi))\le  \cH_{\gL}(\phi)+4, \quad  \cH_{\gL}(\Theta_{\{x,y\}}(\phi))\le  \cH_{\gL}(\phi)+6  \\  
 \text{ and }  \ \  \cH_{\gL}(\Theta_{\{x,y,z\}}(\phi))\le  \cH_{\gL}(\phi)+8.
 \end{multline}
As $\Theta^n_{\{{x}\}}$ maps $\{\phi(x)\ge 0 \}$ to   $\{\phi(x)\ge n \}$, using the above we have 
\begin{multline}\label{lezgoop}
 \bP_{\gL,\gb}[\phi(x)\ge n]= \frac{1}{\cZ_{\gL,\gb}}\sum_{\{ \phi\in \gO_{\gL} \ : \ \phi(x)\ge 0\}} e^{-\gb \cH_{\gL}(\Theta^n_{\{x\}}(\phi))}
 \\ \ge e^{-4n\gb}  \frac{1}{\cZ_{\gL,\gb}}\sum_{\{ \phi\in \gO_{\gL} \ : \ \phi(x)\ge 0\}} e^{-\gb \cH_{\gL}(\phi)}
 = e^{-4n\gb} \bP_{\gL,\gb}[\phi(x)\ge 0].
\end{multline}
Then as $\bP_{\gL,\gb}[\phi(x)\ge 0]+\bP_{\gL,\gb}[\phi(x)\le 0]\ge 1$ we conclude by symmetry that 
\begin{equation}\label{dehalf}
\bP_{\gL,\gb}[\phi(x)\ge 0]\ge 1/2. 
\end{equation}
The same computation gives 
\begin{equation}
\begin{split}
  \bP_{\gL,\gb}[\min(\phi(x),\phi(y))\ge n]&\ge e^{-6n\gb}\bP_{\gL,\gb}[\min(\phi(x),\phi(y))\ge 0],\\
    \bP_{\gL,\gb}[\min(\phi(x),\phi(y),\phi(z))\ge n]&\ge e^{-8n\gb}\bP_{\gL,\gb}[\min(\phi(x),\phi(y),\phi(z))\ge 0].
    \end{split}
\end{equation}
We conclude observing that by \eqref{dehalf} and the FKG inequality \eqref{FKG} we have  
\begin{equation}
\bP_{\gL,\gb}[\min(\phi(x),\phi(y))\ge 0 ] \ge \frac{1}{4} \quad  \text{ and } \quad  \bP_{\gL,\gb}[\min(\phi(x),\phi(y),\phi(z))\ge 0]\ge \frac{1}{8}.
\end{equation}
Concerning the upper bound, we choose to treat in detail only the case of $\{x,y,z\}$ which is the most delicate of the three.
If all $\phi(x)$, $\phi(y)$ and $\phi(z)$ are larger than $n$, then there exists a sequence of positive contours $\gamma_1,\dots,\gamma_k$ satisfying (with strict inclusion)
$$\{x,y,z\} \subset \bar\gamma_1 \subset \dots \subset \bar\gamma_k $$
and $m_1,\dots,m_k$ satisfying  $\sum_{i=1}^k m_k= n$ such that for all $i\in \lint 1, k \rint$, 
$\gamma_i\in \Upsilon(\phi)$ and has intensity (recall \eqref{lintens} larger than $m_i$.
From Lemma \ref{geom} and \ref{restrict} we have
\begin{equation}
 \bP_{\gL,\gb}[\forall i \in \lint 1, k\rint, \  \gamma_i\in \Upsilon(\phi) \text{ and has intensity larger than $m_i$ }]\le \prod_{i=1}^k e^{-m_i\gb|\tilde\gamma_i|}.
 \end{equation}
Using union bound, we have thus 
\begin{multline}\label{lebron}
 \bP_{\gL,\gb}[\min(\phi(x),\phi(y),\phi(z))\ge n]\\
 \le \sum_{k=1}^n \sum_{\{ (m_i)_{i=1}^k\ : \ \sum_{i=1}^k m_i= n \}} 
 \sum_{\{x,y,z\} \subset \bar\gamma_1\subset\dots\subset\bar \gamma_k} e^{- \sum_{i=1}^k m_i\gb|\tilde \gamma_i|},
 \end{multline}
 where the last sum ranges over $k$-tuples of geometric contours $\tilde \gamma_1,\dots,\tilde\gamma_k$ satisfying the strict inclusion condition.
By elementary geometric consideration, we have $|\tilde \gamma_1|,|\tilde \gamma_2|\ge 8$ and as the other contours enclose at least $5$ vertices 
$|\tilde \gamma_i|\ge 10$ for $i\ge 3$.
 Hence
 \begin{multline}\label{james}
 \sum_{i=1}^k m_i|\tilde \gamma_i|= \sum_{i=1}^k|\tilde \gamma_i|+\sum_{i=1}^2 8(m_i-1)+\sum_{i=3}^k 10(m_i-1)\\
 \ge \sum_{i=1}^k |\tilde \gamma_i|+ 10(n-k)-2(m_1+m_2).
 \end{multline}
 Using \eqref{finite} there exists $K$ such that 
\begin{equation}
 \sum_{\{\tilde \gamma \ : \ |\bar \gamma|\ge K \}} e^{-\gb|\tilde \gamma|}\le e^{-10\gb |\tilde \gamma|}.
\end{equation}
Replacing the condition $\{x,y,z\} \subset \bar \gamma_1\subset\dots\subset\bar \gamma_k$ by the less restrictive one $|\bar\gamma_i|\ge K$ if $i\ge K-2$, we obtain combining \eqref{lebron}
and \eqref{james}
(for a constant $C$ depending on $K$ and $\gb$)
\begin{multline}
  \bP_{\gL,\gb}[\min(\phi(x),\phi(y),\phi(z))\ge n]\\
  \le  \sum_{k=1}^n 
  \left( \sum_{\{\tilde\gamma \ : \ \{x,y,z\} \subset  \bar \gamma \}}\ e^{-\gb \tilde\gamma|}   \right)^{\max(k,K-2)}
  \left( \sum_{\{\tilde\gamma \ : \ |\bar \gamma|\ge K \}} e^{-\gb|\tilde \gamma|} \right)^{(k-(K-2))_+} \\
  \sum_{\{ (m_i)_{i=1}^k\ : \sum_{i=1}^k m_i= n \}} e^{-\gb\left( 10(n-k)-2(m_1+m_2) \right)}
 \\ \le C \sum_{k= 1}^n  \sum_{\{ (m_i)_{i=1}^k\ : \sum_{i=1}^k m_i= n \}}  e^{-\gb\left( 10n-2 (m_1+m_2) \right)},
\end{multline}
with the convention that $m_2=0$ when $k=1$.
Now to estimate the last sum, we notice that  for $k\ge 2$, given $m_1+m_2=q$ there are $2^{n-m-q}$ ways to choose $k$ and $(m_3,\dots,m_k)$, and 
$q-1$ ways to choose $(m_1,m_2)$.
Considering also the case $k=1$, as \eqref{finite} implies that $e^\gb>2$, we obtain that 
\begin{multline}
   \bP_{\gL,\gb}[\min(\phi(x),\phi(y),\phi(z))\ge n]
   \\\le Ce^{-8\gb n}\left( 1 + \sum_{q=2}^n (m-1) 2^{n-q} e^{-2\gb (n-q)} \right)
   \le C e^{-8\gb m} m.
 \end{multline}
The proof for $\{x,y\}$ and $\{x\}$ follow the same scheme, except that in those case we have only one contour of minimal length:
$|\tilde \gamma_1|\ge 6$ and $|\tilde \gamma_2|\ge 8$ in the first case and $|\tilde \gamma_1|\ge 4$ and $|\tilde \gamma_2|\ge 6$ in the second one.

\qed

\subsection{Proof of Proposition \ref{sendingout}}

We can restrict to proving only the second line in \eqref{qtronz} since the proof of the first line follows the same ideas and is simpler.
Set 
$$\cA_n:= \{ \min\left(\phi(x),\phi(y)\right)\ge n \}$$
Note that we have 

\begin{equation}\label{jove}
 \left|\bP_{\gL,\gb}\left(\cA_n\right)-\alpha_2J^{3n} \right|
 \le |\bP_{\gL,\gb}(\cA_n)- \bP_{\gb}(\cA_n) |+ 
 \left|\bP_{\gb}(\cA_n)-\alpha_2J^{3n} \right|.
 \end{equation}
From \eqref{decayz}, the first term  decays exponentially with $d(x,\partial \gL)$.
Hence we only need to control the second term.
Using \eqref{lezgop}, we obtain similarly to \eqref{lezgoop}
\begin{equation}
 \bP_{\gL,\gb}\left[\cA_{n+1}\right]= \bP_{\gL,\gb}\left[\Theta_{\{x,y\}}(\cA_{n})\right]\ge e^{-6\gb} \bP_{\gL,\gb}\left[\cA_n\right].
\end{equation}
Passing to the limit when $\gL\to \infty$ we obtain that
$e^{6\gb n}\bP_{\gb}\left[ \cA_n \right]$ is a non-decreasing  sequence. According to Proposition \eqref{rourou} (recall that the bounds are uniform in $\gL$)
it is also bounded and thus the limit $\alpha_2$ is well defined and for every $n$ we have
\begin{equation}\label{thescore}
e^{6\gb n}\bP_{\gb}\left[ \cA_n \right]\le \alpha_2.
\end{equation}
Let us now prove the lower bound.
Note that \eqref{lezgop} is an equality on the event $\cA_n\cap \cB_n$ where
$$\cB_{n}:= \left\{\forall z \in \partial \{x,y\}, \ \phi(z)\le n \right\}.$$
Using this information we have 
\begin{equation}\label{good}
   \bP_{\gb}\left[ \cA_{n}\right]  \ge \bP_{\gb}\left[ \cA_{n}\cap \cB_{n}\right]=
   e^{6\gb}\bP_{\gb}\left[ \cA_{n+1}\cap \cB_{n}\right]\\
  \ge   e^{6\gb}\bP_{\gb}\left[ \cA_{n+1} \right]-Cn J^{4n}.
\end{equation}
where to obtain the last inequality we used Proposition \ref{rourou} and a union bound in the following manner
\begin{equation}
\bP_{\gb}\left[ \cA_{n+1}\cap \cB_{n}^{\cc}\right]
\le \bP_{\gb}\left[ \bigcup_{z\in \partial \{x,y\}} \min(\phi(x),\phi(y),\phi(z))\ge n \right]\le Cn J^{4n}.
\end{equation}
Iterating \eqref{good} we obtain 
\begin{equation}
  \bP_{\gb}\left[ \cA_{n}\right]  \ge  \lim_{k\to \infty} e^{6\gb k}\left[ \cA_{n+k} \right]
  -C \sum_{m=n}^{\infty} m J^{4m}
  \ge  e^{-6\gb n} \alpha_2-C' n  J^{4n}.
\end{equation}
We now move to the proof of  \eqref{qtronzac}.
Similarly to \eqref{jove} we can restrict to the infinite volume case.
We perform the full proof only in the more delicate case $\{q(\phi,x,n)=2\}$ the other being only easier.
Setting $$\cA_n(A):=\{ \forall x\in A, \phi(x)\ge n\},$$
we have
\begin{equation}\label{subs}
   \{q(\phi,x,n)=2\} \supset 
   \left( \bigcup_{y\sim x } \cA_n \{x,y\} \right) \setminus \left( \bigcup_{\{(y,z) \ : \ \{x,y,z\} \text{is connected} \}  } \cA_n \{x,y,z\} \right).
   \end{equation}
A simple union bound and \eqref{thescore} gives 
\begin{equation}
  \bP_{\gb}(q(\phi,x,n)=2) \le \bP_{\gb}\left( \bigcup_{y\sim x } \cA_n \{x,y\}\right)\le 4\alpha_2 e^{-6\gb n}.
\end{equation}
On the other hand, using inclusion exclusion  and \eqref{qtronz}, we obtain 
\begin{multline}
\bP_{\gb}\left( \bigcup_{y\sim x } \cA_n \{x,y\}\right)\ge \sum_{y\sim x} \bP_{\gb}\left(\cA_n \{x,y\}\right)-  
\sumtwo{y_1,y_2\sim x}{y_1\ne y_2} \bP_{\gb}\left(\cA_n \{x,y_1,y_2\}\right)\\
\ge 
4\alpha_2 e^{-6\gb n}- C n e^{-8\gb n}.
\end{multline}
Finally also by \eqref{qtronz} the probability of the event subtracted event on the l.h.s of \eqref{subs} is also small
\begin{equation}
 \bP_{\gb}\left(\bigcup_{\{(y,z) \ : \ \{x,y,z\} \text{is connected} \}  } \cA_n \{x,y,z\}\right)\le C n e^{-8\gb n},
\end{equation}
which allows to conclude.

\qed

\end{document}